\documentclass[11pt]{article}

\usepackage{graphicx}
\usepackage{dcolumn}
\usepackage{bm}

\usepackage{mathrsfs}  

\usepackage{appendix}

\usepackage[utf8]{inputenc}
\usepackage[T1]{fontenc}
\usepackage{color}
\usepackage{physics}
\usepackage{fullpage}
\usepackage{amsthm}
\usepackage{amssymb}
\usepackage{mathtools}
\usepackage{authblk}
\usepackage{bbm}
\usepackage[colorlinks,linkcolor=blue,citecolor=blue,urlcolor=magenta,filecolor=cyan]{hyperref}
\usepackage[capitalize]{cleveref}

\newtheorem{theorem}{Theorem}
\newtheorem{lemma}[theorem]{Lemma}

\newtheorem{proposition}[theorem]{Proposition}
\newtheorem{corollary}[theorem]{Corollary}

\newtheorem{remark}[theorem]{Remark}

\def\eps{\varepsilon}

\def\polylog{\mathrm{polylog}}
\def\poly{\mathrm{poly}}

\def\bbc{\mathbb{C}}
\def\bbr{\mathbb{R}}

\newcommand{\comment}[1]{}

\allowdisplaybreaks[2]
\newcommand{\iu}{\mathrm{i}}
\newcommand{\id}{I}
\newcommand{\HH}{\mathcal H}
\newcommand{\EE}{\mathcal E}
\newcommand{\QQ}{\mathcal Q}
\newcommand{\PP}{\mathcal P}
\newcommand{\diag}{\operatorname{diag}}
\newcommand{\Span}{\operatorname{span}}

\title{Sparse Hamiltonian simulation with optimal dependence on the maximum column Euclidean norm}

\author{Zecheng Li}
\author{Chunhao Wang}

\affil{Department of Computer Science and Engineering, Pennsylvania State University}
\affil[]{Email: \{zxl5523,cwang\}@psu.edu}

\date{}

\begin{document}
\maketitle

\begin{abstract}
We give a quantum algorithm for simulating a $d$-sparse Hermitian Hamiltonian $H$, assuming a known upper bound $\Lambda$ on its maximum column Euclidean norm $\norm{H}_{1\to2}$.  For $t\Lambda\ge1/2$, simulation with operator-norm error $\eps$ uses
\[
 O\!\left(t\Lambda\sqrt d+\sqrt d\log(2/\eps)\right)
\]
sparse-oracle queries.  This removes the subpolynomial overhead in Low's algorithm [STOC 2019], replacing it with an additive logarithmic precision term.  For $d>1$ and $t\Lambda\ge\log(2/\eps)$, the bound matches the worst-case lower bound.  A known spectral-norm upper bound may also be used in place of $\Lambda$.  The number of 1- and 2-qubit gates is linear in the query scale, up to oracle costs and polynomial overhead in the input bit lengths and logarithmic precision parameters.

As applications, we obtain $O(\kappa\sqrt d\,\polylog(\kappa/\eps))$ queries for solving $d$-sparse quantum linear systems with $\norm{A}\le1$ and $\norm{A^{-1}}\le\kappa$, under standard sparse and state-preparation access.  We also give a gate-efficient implementation of black-box unitaries with at most $d$ nonzero entries per row and column using $O(\sqrt d\log(2/\eps))$ queries, given sparse access to the unitary and its adjoint.  At constant error, the query bound is optimal and yields $\Theta(\sqrt N)$ queries for arbitrary $N\times N$ unitaries, resolving the open question on black-box unitary implementation posed by Berry and Childs [QIC 2012].
\end{abstract}

\section{Introduction}
\subsection{Background and motivation}

Hamiltonian simulation was one of the original motivations for quantum
computing \cite{Feynman1982,Lloyd1996}.  A time-independent quantum
system is described by a Hermitian matrix $H$.  At time
$t$, its state obeys the Schr\"odinger equation
\begin{equation}
 \iu\frac{d}{dt}\ket{\psi(t)}=H\ket{\psi(t)}.
\end{equation}
The solution after evolution time $t$ is
$\ket{\psi(t)}=e^{-\iu tH}\ket{\psi(0)}$.  The Hamiltonian simulation
problem is to implement unitary $e^{-\iu tH}$, to a prescribed error, using a
quantum circuit and an efficient description of $H$.  In addition to
modeling dynamics directly in chemistry, materials, and many-body physics
\cite{GeorgescuAshhabNori2014,BauerEtAl2020}, Hamiltonian simulation is
a central candidate application for fault-tolerant quantum computers.

The algorithmic study of Hamiltonian simulation has progressed through
product formulas
\cite{Lloyd1996,AharonovTaShma2003,BerryAhokasCleveSanders2007},
sparse quantum walks \cite{Childs2010,BerryChilds2012}, linear
combinations of unitaries \cite{ChildsWiebe2012,BerryChildsKothari2015},
and truncated Taylor series \cite{BerryEtAl2015}.  Quantum signal
processing and qubitization subsequently gave essentially optimal
dependence on the simulation time and precision, while quantum singular
value transformation placed these methods in a broader matrix-arithmetic
framework \cite{LowChuang2017,LowChuang2019,Gilyen2019}.  Hamiltonian
simulation is also a computational primitive rather than only a tool for
physical simulation.  It underlies algorithms for eigenvalue
estimation, quantum linear systems, and the implementation of a unitary
specified by its matrix entries
\cite{Kitaev1995,HarrowHassidimLloyd2009,ChildsKothariSomma2017,
BerryChilds2012}.

In this paper, we assume access to the standard sparse-access model.  Let
$\norm{H}_{\max}:=\max_{i,j}|H_{ij}|$.  A direct sparse block encoding
has normalization on the scale $d\norm{H}_{\max}$
\cite[Lemma~48]{Gilyen2019}, and qubitization
therefore gives a simulation cost essentially linear in
$td\norm{H}_{\max}$, apart from its precision dependence
\cite{LowChuang2019}.  This is the
right description when the maximum entry is the available physical
scale.  It can be unfavorable, however, in applications in which $H$
is normalized by its spectral norm.  Merely substituting
$\norm{H}_{\max}\le\norm{H}$ gives a dependence on $td\norm{H}$, even
though the black-box lower bound permits a square-root improvement in
$d$ \cite{BerryChilds2012,Low2019}.

The norm that resolves this mismatch is the \emph{maximum column Euclidean norm}:
\begin{equation}
 \norm{H}_{1\to2}:=\max_j\norm{H\ket{j}}_2
 =\max_j\left(\sum_i|H_{ij}|^2\right)^{1/2}.
\label{eq:intro-one-two}
\end{equation}
For simplicity, we also call this the $1$-to-$2$ norm.
For Hermitian $H$, it is also the
maximum row Euclidean norm.  If $H$ is $d$-sparse, then
\begin{equation}
 \norm{H}_{\max}
 \le \norm{H}_{1\to2}
 \le \norm{H}
 \le \sqrt d\,\norm{H}_{1\to2}
 \le d\norm{H}_{\max}.
 \label{eq:intro-norm-chain}
\end{equation}
Thus the scale $\sqrt d\,\norm{H}_{1\to2}$ interpolates between the
max-entry and spectral-norm descriptions and is never worse than the
usual sparse normalization $d\norm{H}_{\max}$.

Low~\cite[Theorem~2]{Low2019} first obtained a sparse-simulation algorithm with this stronger
norm dependence.  Given
$\Lambda\ge\norm{H}_{1\to2}$ and
$\tau=t\Lambda\sqrt d$, its query complexity is
\begin{equation}
 O\!\left(
  \tau\bigl(\log(\tau/\eps)\bigr)^{O(\sqrt{\log d})}
 \right)
 =O\!\left(\tau(\tau/\eps)^{o(1)}\right).
 \label{eq:intro-low-bound}
\end{equation}
The same work exhibited instances requiring
$\Omega(t\sqrt d\,\norm{H}_{1\to2})$ queries at bounded error
\cite[Theorem~3]{Low2019}.  Hence,
when the promised bound $\Lambda$ is tight up to constants, the time,
sparsity, and norm dependence was known up to subpolynomial factors,
but it remained open whether those factors could be removed while
retaining the $1$-to-$2$ norm dependence.  We answer this question in
the large-$t\Lambda$ regime and also give an efficient circuit realization
with explicit gate and qubit costs.

\subsection{Access model and main results}
\label{sec:model}

Write $[N]:=\{1,\ldots,N\}$ and set $\HH:=\bbc^N$.  Let
$H\in\bbc^{N\times N}$ be Hermitian, with at most $d$ nonzero entries
in each row.  We assume $1\le d\le N$, an evolution time $t\ge0$, and
a known number $\Lambda>0$ satisfying
\begin{equation}\label{eq:one-two-promise}
 \norm{H}_{1\to2}\le\Lambda.
\end{equation}
We use the normalized time $t\Lambda$ and the sparsity-amplified
simulation time
\begin{equation}\label{eq:parameters}
 \tau=t\sqrt d\,\Lambda.
\end{equation}

For each row $i$, let $f_i$ be a permutation of $[N]$ whose first $d$
values include every nonzero position in row $i$.  If that row has
fewer than $d$ nonzero entries, the remaining positions are distinct
zero entries.  Following the standard sparse-access model
\cite{BerryAhokasCleveSanders2007,Low2019}, the input is supplied by
the sparse-access oracles
\begin{align}
 O_F\ket{i,\ell}&=\ket{i,f_i(\ell)},\label{eq:OF}\\
 O_H\ket{i,j,y}&=\ket{i,j,y\oplus H_{ij}}.
 \label{eq:OH}
\end{align}
Each matrix entry is supplied to $b_{\rm val}$ bits of precision in
the value register.  Calls to these oracles, their inverses, and
constant-cost controlled versions are counted.  

We require simulation with operator-norm error $\eps$.  If $\mathcal A_H$
is the implemented unitary and $\ket{0}$ denotes all clean ancillary
registers, then
\begin{equation}\label{eq:simulation-error}
 \norm{\mathcal A_H(\ket{0}\otimes\id)
      -\ket{0}\otimes e^{-\iu tH}}
 \le \eps.
\end{equation}
This operator norm between input--output isometries also controls an
input entangled with an arbitrary reference system.

Our query bound is as follows.

\begin{theorem}[Large-$t\Lambda$ simulation from a $1$-to-$2$ norm bound]
\label{thm:main}
For every input satisfying the preceding assumptions and every
$0<\eps\le1/2$ such that $t\Lambda\ge1/2$, there is a circuit
satisfying \cref{eq:simulation-error} with query complexity
\begin{equation}
 O\!\left(
 \tau+\sqrt d\log\frac{2}{\eps}
 \right).
 \label{eq:query-bound}
\end{equation}
If $t\Lambda\ge\log(2/\eps)$, the stronger estimate $O(\tau)$ holds.
\end{theorem}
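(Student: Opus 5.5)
We plan to build a block encoding of $H/(\sqrt d\,\Lambda)$ that does not lose a $\sqrt d$ factor. Then we run qubitization/QSP on it, with one additional idea to remove the $(\log)^{O(\sqrt{\log d})}$ overhead of Low's construction.

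\textbf{Step 1: block encoding from column states.} Using $O_F$ and $O_H$, we prepare for each column $j$ the state
\[
\ket{\phi_j}\propto\sum_{\ell\le d}\sqrt{H_{f_j(\ell)j}}\ket{f_j(\ell)}\ket{j}+(\text{garbage}),
\]
normalized so that $\ket{\phi_j}$ has amplitudes $\sqrt{H_{ij}}/\sqrt{\Lambda'}$ for an appropriate scale. The standard Childs-type product $\braket{\phi_i}{\phi_j}$ only gives normalization $d\norm{H}_{\max}$. Following Low, we instead use amplitudes proportional to $H_{ij}/\Lambda$ in one register, which has total weight at most $1$ by the $1\to2$ promise, and a uniform superposition over $d$ positions in the other. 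This gives a block encoding of $H/(\sqrt d\Lambda)$ with normalization $\sqrt d\,\Lambda$.

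The difficulty is that preparing the state $\sum_\ell (H_{f_j(\ell)j}/\Lambda)\ket{\ell}$ from value oracles costs $\Theta(\sqrt d)$ queries, for example via amplitude amplification of $\frac{1}{\sqrt d}\sum_\ell\ket{\ell}(H_{\cdot j}/\Lambda\ket 0+\dots)$. So one block encoding costs $O(\sqrt d)$ queries but has normalization only about $\Lambda\sqrt d$ relative to the naive $\Lambda d$. Put differently, it is a block encoding of $H/\Lambda$ with a subnormalization of $1$, obtained at cost $O(\sqrt d)$ per use.

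\textbf{Step 2: avoid the cascading error.} Low's overhead comes from recursively amplifying imperfect state preparations. Our plan is to use fixed-point/oblivious amplitude amplification realized exactly by QSP, with success amplitude lower bounded by the $1\to2$ promise. With $O(\sqrt d)$ queries this yields a block encoding $U_H$ of $H/(c\Lambda)$ for a constant $c$ and precision $\delta$. The cost becomes $O(\sqrt d\log(1/\delta))$ only if we insist on exactness, so we plan to split the work. A \emph{single} amplification, costing $O(\sqrt d)$ at constant precision, is used inside the walk. The precision burden is then shifted to the outer QSP polynomial: the inner map is an exactly known polynomial transformation, so its distortion of the singular values is a fixed odd polynomial $p$ of degree $O(\sqrt d)$ applied to $h/\Lambda$ that is monotone near $0$ and bounded away from degeneracy on $[-1/c,1/c]$.

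\textbf{Step 3: outer simulation.} We apply QSVT to $U_H$ with a polynomial approximating $e^{-\iu t\Lambda c\,p^{-1}(\cdot)}$ on the range of $p$. Since $p^{-1}$ is analytic on a Bernstein ellipse of constant size, the Jacobi--Anger/Chebyshev truncation needs degree $O(t\Lambda+\log(1/\eps))$. This gives total queries $O(\sqrt d(t\Lambda+\log(2/\eps)))=O(\tau+\sqrt d\log(2/\eps))$. When $t\Lambda\ge\log(2/\eps)$, the second term is absorbed, giving $O(\tau)$. The condition $t\Lambda\ge1/2$ ensures that the constant overheads are dominated.

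\textbf{Main obstacle.} Step 2 is the crux: we must show that composing the $O(\sqrt d)$-query amplification with the outer polynomial keeps the precision dependence additive rather than multiplicative. Concretely, we need the inner polynomial to be analytic with constant-size analyticity region uniformly in $d$, so that the composed function stays efficiently approximable. If that fails, the first fallback is to use the inner polynomial only up to error $\eps/(t\Lambda)$ on a spectral window, and then argue that the error accumulates linearly. We expect, however, that the analytic-inverse approach is the one that gives the clean additive bound.
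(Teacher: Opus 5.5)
There is a genuine gap, and it appears before any precision question arises. Your Step~1 construction uses uniform amplitudes $d^{-1/2}$ on the row side and amplified amplitudes $H_{kj}/\Lambda$ on the column side. That gives a block encoding of $H/(\sqrt d\,\Lambda)$. Your rephrasing as ``a block encoding of $H/\Lambda$ with subnormalization $1$'' is not equivalent. Step~2's target, a block encoding of $H/(c\Lambda)$ with constant $c$, cannot exist in general, because the promise only gives $\norm{H}\le\sqrt d\,\Lambda$. For example, take $H$ to be $\Lambda/\sqrt d$ times the $d\times d$ all-ones matrix: every column has norm $\Lambda$, yet $\norm{H}=\sqrt d\,\Lambda$. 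With the correct normalization $\sqrt d\,\Lambda$, the outer polynomial must approximate $e^{-\iu\tau x}$ on a normalized spectrum that can fill $[-1,1]$ (take direct sums of such blocks with different scalings). Its degree is therefore $\Omega(\tau)$, and an inner cost of $\Theta(\sqrt d)$ per call gives $\Theta(\tau\sqrt d)=\Theta(td\Lambda)$, which is no better than the standard $d\norm{H}_{\max}$ scaling. The product $\sqrt d\,(t\Lambda+\log(2/\eps))$ cannot arise as (cost of an inner block encoding)$\times$(degree of an outer polynomial in $H$).

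Step~2's premise that the amplified block equals $p(H/\Lambda)$ for a known odd polynomial $p$ is also false. Amplifying the column-state preparation transforms the singular values of that preparation map. These are the column norms $s_j=\norm{H\ket j}/(\sqrt d\,\Lambda)$, not the eigenvalues of $H$. A nonlinear $p$ therefore yields, up to scaling, $H\,\diag(p(s_j)/s_j)$. This is a column rescaling that is neither Hermitian nor a spectral function of $H$, so there is no $p^{-1}$ for an outer QSVT to undo. Also, the promise bounds success amplitudes from above, not below, so fixed-point amplification would simply normalize every nonzero column. Only exactly linear amplification keeps the block proportional to $H$. It costs $O(\sqrt d\log(1/\delta))$ per call, with $\delta$ of order $\eps$ divided by the outer degree, which reintroduces exactly the multiplicative logarithm the theorem removes. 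Your ``Main obstacle'' paragraph locates this difficulty but does not resolve it.

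The paper never forms a block encoding of $H$ at normalization $O(\Lambda)$. Its steps are:
\begin{itemize}
\item It writes $H=2\sqrt d\,\Lambda V^\dagger MV$ with $V^\dagger M^2V\preceq\id/4$, even though $\norm{M}$ may be $\sqrt d/2$.
\item It accesses $M$ only through the $O(1)$-query Cayley unitary.
\item It realizes $e^{-\iu tH}$ as the $\theta=0$ boundary value of a transfer function whose boundary family is $e^{-\iu tH_\theta}$.
\item It extracts that value in one routed circuit whose clock cross-spectrum is a signed kernel. The kernel has localization width $1/m$ with $m=\Theta(\sqrt d)$, forced by the resolvent poles coming from $\norm{M}$. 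It exactly reproduces Fourier modes up to order $p=\Theta(t\Lambda+\log(2/\eps))$, with the remaining error controlled by the compressed second moment.
\end{itemize}
The $O(mp)$ query count is a localization width times a reproduction order inside a single construction, followed by one OAA round. That is the ingredient your plan lacks.
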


Compared with \cref{eq:intro-low-bound}, the theorem replaces the
subpolynomial overhead by an additive logarithmic precision term.  In
the regime $t\Lambda\ge\log(2/\eps)$, the entire query complexity is
$O(t\Lambda\sqrt d)$.  For $d>1$, Low's lower bound applies already at
constant error and includes instances with
$\Lambda=\Theta(\norm{H}_{1\to2})$ \cite[Theorem~3]{Low2019}.
Thus the leading $t\Lambda\sqrt d$ dependence matches the
dimension-independent worst-case lower bound, allowing the dimension
$N$ to grow with the other parameters, and the upper bound is
worst-case optimal in the stated regime.  

The result in \cref{thm:main} is stronger than a spectral-norm
formulation: the $1$-to-$2$ norm promise is weaker, since every column
Euclidean norm is at most the spectral norm.  This gives the following
immediate consequence.

\begin{corollary}[Spectral-norm formulation]
\label{cor:spectral-norm}
Suppose instead that a number $\Gamma\ge\norm{H}$ is known.  If
$t\Gamma\ge1/2$, then simulation with operator-norm error
$0<\eps\le1/2$ uses
\begin{equation}\label{eq:spectral-corollary}
 O\!\left(
 t\sqrt d\,\Gamma+\sqrt d\log\frac{2}{\eps}
 \right)
\end{equation}
queries.  If $t\Gamma\ge\log(2/\eps)$, the query complexity is
$O(t\sqrt d\,\Gamma)$.
\end{corollary}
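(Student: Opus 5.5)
The plan is to reduce the corollary directly to \cref{thm:main} by taking $\Lambda:=\Gamma$. The only fact needed is that the spectral-norm promise implies the $1$-to-$2$ norm promise \cref{eq:one-two-promise}. For every column index $j$,
\[
 \norm{H\ket{j}}_2\le\norm{H}\,\norm{\ket{j}}_2=\norm{H}\le\Gamma,
\]
and taking the maximum over $j$ gives $\norm{H}_{1\to2}\le\Gamma$. This is the second inequality of \cref{eq:intro-norm-chain}. So the number $\Gamma$ is a valid choice of the known parameter $\Lambda$ in the access model of \cref{sec:model}.

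With this choice, the hypotheses of \cref{thm:main} are met. The condition $t\Lambda\ge1/2$ is exactly $t\Gamma\ge1/2$, and the range $0<\eps\le1/2$ is the same. The sparse-access oracles $O_F,O_H$ are unchanged, and the algorithm uses $\Lambda$ only as a classical normalization constant. \cref{thm:main} then gives a circuit satisfying \cref{eq:simulation-error} with
\[
 O\!\left(t\sqrt d\,\Gamma+\sqrt d\log\frac{2}{\eps}\right)
\]
queries, since $\tau=t\sqrt d\,\Lambda=t\sqrt d\,\Gamma$. This is \cref{eq:spectral-corollary}.

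For the second claim, the hypothesis $t\Gamma\ge\log(2/\eps)$ is the condition $t\Lambda\ge\log(2/\eps)$ of the theorem's stronger estimate, which yields $O(\tau)=O(t\sqrt d\,\Gamma)$. One can also see this directly: $\sqrt d\log(2/\eps)\le t\sqrt d\,\Gamma$, so the additive term is absorbed into the leading term.

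There is no real obstacle. The only point to check is that \cref{thm:main} requires only an upper bound $\Lambda\ge\norm{H}_{1\to2}$ and not a tight one, so replacing it by the possibly larger $\Gamma$ is legitimate. The query bound is simply stated in terms of $\Gamma$.
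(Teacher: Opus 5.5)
Your proposal is correct and matches the paper's proof: both observe that every column Euclidean norm satisfies $\norm{H\ket{j}}_2\le\norm{H}\le\Gamma$, so $\Gamma$ is a valid $1$-to-$2$ norm bound, and then apply \cref{thm:main} with $\Lambda=\Gamma$. Your additional checks (that only an upper bound is required, that $\tau=t\sqrt d\,\Gamma$, and that $t\Gamma\ge\log(2/\eps)$ triggers the stronger estimate) are fine elaborations of the same argument.
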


\begin{proof}
For Hermitian $H$, every column Euclidean norm is at most
$\norm{H}$.  Thus $\Gamma$ is a valid $1$-to-$2$ norm bound in
\cref{thm:main}; apply that theorem with $\Lambda=\Gamma$.
\end{proof}

We next state the circuit cost.  Let $n_H=\lceil\log_2N\rceil$, and
enlarge $b_{\rm val}$ if necessary so that it also bounds the bit
lengths of the public numerical parameters.  Assume that controlled
and inverse versions of $O_F,O_H$ are supplied as circuits of gate
costs $g_F,g_H$, using reusable workspaces of $a_F,a_H$ qubits.  We
also assume uniform reversible circuits for fixed-precision
arithmetic, comparisons, square roots, exponentials, trigonometric
functions, phases of complex numbers, and the implementation of
one-qubit rotations.
These circuits have size polynomial in their input word lengths and
the number of precision bits; let $a_{\rm ar}$ bound their reusable
workspace.
These assumptions do not provide constant-time access to a table of
clock amplitudes and do not assume qRAM.

\begin{theorem}[Near-linear gate realization]
\label{thm:gate}
Under the preceding gate-model assumptions and the hypotheses of
\cref{thm:main}, define the query-complexity scale
\begin{equation}
 Q_{\rm sim}:=\tau+\sqrt d\log\frac{2}{\eps}.
\end{equation}
The simulator of \cref{thm:main} can be implemented using $O(Q_{\rm sim})$
sparse-oracle queries and
\begin{equation}\label{eq:gate-count}
 G_{\rm gates}=O\!\left(
 Q_{\rm sim}\left[g_F+g_H+
 \poly\!\left(
 n_H+b_{\rm val}+\log d+\log\frac{2Q_{\rm sim}}{\eps}
 \right)\right]
 \right)
\end{equation}
1- and 2-qubit gates.  The qubit count is
\begin{equation}\label{eq:qubit-count}
 O\!\left(
 n_H+b_{\rm val}+\log d+\log\frac{2Q_{\rm sim}}{\eps}
 +a_F+a_H+a_{\rm ar}
 \right).
\end{equation}
\end{theorem}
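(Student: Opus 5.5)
The gate realization is proved by going through the simulator behind \cref{thm:main} one primitive at a time. For each primitive we show that it can be compiled with $O(1)$ oracle calls and $\poly(n_H+b_{\rm val}+\log d+\log(2Q_{\rm sim}/\eps))$ additional gates. We also require that all workspace is returned clean, so that it can be reused across the $O(Q_{\rm sim})$ rounds. The intended structure of the simulator is as follows. We build a block encoding of $H/(\sqrt d\Lambda)$ (or of a closely related walk operator) from a state-preparation step that loads column amplitudes. We then apply a qubitization/QSP-type polynomial approximation of $e^{-\iu tH}$ of degree $O(\tau+\sqrt d\log(2/\eps))$, possibly split into $O(t\Lambda)$ segments with a clock register.

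The gate count $G_{\rm gates}$ follows once three things are checked:
\begin{enumerate}
\item[(i)] each use of the block encoding costs $O(1)$ calls to $O_F,O_H$ and their inverses, contributing $g_F+g_H$ per query;
\item[(ii)] the remaining single-step operations cost only polynomially many gates in the word lengths: computing rotation angles from the entry $H_{ij}$ (square roots, phases, arcsines to $b=O(\log(Q_{\rm sim}/\eps)+b_{\rm val})$ bits), comparisons against $\Lambda$, uniform superpositions over $[d]$, and reflections about the clean ancilla state;
\item[(iii)] the QSP phase rotations, or clock-amplitude rotations, can be generated online by reversible arithmetic rather than looked up.
\end{enumerate}
Point (iii) matters because the model excludes table lookups and qRAM. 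If the clock amplitudes have closed forms, such as Bessel or Chebyshev-type weights, we prepare them by a binary-tree (Grover–Rudolph) construction. At each of the $O(\log Q_{\rm sim})$ levels, the conditional rotation angle is obtained by evaluating a partial-sum formula with $\poly(\log(Q_{\rm sim}/\eps))$ arithmetic.

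Next comes error accounting. We allocate $\eps/2$ to the approximation error from \cref{thm:main} and $\eps/2$ to finite-precision effects. Each rounded rotation introduces operator-norm error $O(2^{-b})$. Since errors add at most linearly over the $O(Q_{\rm sim})$ gate-level steps, choosing $b=\Theta(\log(2Q_{\rm sim}/\eps)+b_{\rm val})$ keeps the accumulated error below $\eps/2$. Uncomputing garbage (for example the computed angles) by running the arithmetic in reverse keeps the ancillas clean, as required by \cref{eq:simulation-error}, and at most doubles the cost.

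For the qubit count, the space needed is the system register ($n_H$ qubits), the index and value registers ($\log d+b_{\rm val}$), the precision registers ($O(\log(2Q_{\rm sim}/\eps))$), a clock or counter register of $O(\log Q_{\rm sim})$ qubits, and the reusable workspaces $a_F,a_H,a_{\rm ar}$. This matches \cref{eq:qubit-count}.

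The main obstacle I expect is showing that whatever non-standard ingredient \cref{thm:main} uses to remove Low's subpolynomial overhead admits an efficient uniform circuit. That ingredient is likely an amplitude-amplified, nonuniform column loading with weights $|H_{ij}|/\norm{H\ket{j}}$ and a norm-dependent correction, or a structured clock state. The difficulty is that the column norms $\norm{H\ket{j}}$ are not directly available from $O_H$. My first attempt would be to avoid computing them altogether. We would load amplitudes $\sqrt{|H_{ij}|^2/\Lambda^2}$ against a uniform superposition on $d$ positions, plus a rest-state flag. We would then use fixed-point or oblivious amplitude amplification with $O(\sqrt d)$ rounds, each round requiring only one entry query and a single controlled rotation. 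This keeps the per-query overhead polylogarithmic.
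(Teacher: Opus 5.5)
Your proposal never analyzes the simulator that \cref{thm:gate} is about. You posit qubitization/QSP on a block encoding of $H/(\sqrt d\Lambda)$ costing $O(1)$ oracle calls. No such block encoding is available from $O_F,O_H$: the direct sparse one has normalization $d\norm{H}_{\max}$, and reaching the $\sqrt d\Lambda$ scale is precisely the difficulty. So your item (i) is assumed rather than proved. Such an encoding would also give $O(\tau+\log(2/\eps))$ calls, which is not the shape of $Q_{\rm sim}$. In the paper, the $\sqrt d\log(2/\eps)$ term comes from $R=O(mp)$ with $m=\Theta(\sqrt d)$ and $p=\Theta(t\Lambda+\log(2/\eps))$.

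Your fallback also fails. After loading $|H_{ij}|/\Lambda$ against a uniform list superposition, the flagged amplitude for column $j$ is $\norm{H\ket j}/(\sqrt d\Lambda)$. This is unknown, $j$-dependent, and possibly arbitrarily small, so oblivious amplitude amplification does not apply. Fixed-point amplification renormalizes by $\norm{H\ket j}$, giving a block of $H\diag(\norm{H\ket j})^{-1}$ instead of a fixed multiple of $H$. The actual simulator prepares no column states. It uses:
\begin{itemize}
\item the lifting $H=2\sqrt d\Lambda V^\dagger MV$ with $V^\dagger M^2V\preceq\frac14\id$;
\item Cayley queries $U_M$, each $O(1)$ oracle calls plus a $2\times2$ rotation on a canonically ordered transpose pair;
\item the matrix-independent coupling $W_t$ on $\QQ=L^2([0,t];\EE)$;
\item $R=3mp-p$ routed reuses of one private register with a phase-twisted input/output clock pair, followed by one OAA round.
\end{itemize}

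As a result you miss the real obstacle. The only component without an evidently efficient circuit is $W_t$, which acts on an infinite-dimensional history space. The completion in \cref{lem:finite-completion} gives a query bound but no gate bound. The paper replaces it by the explicit finite unitary $U_{\rm out}^\dagger\mathsf F_{M_{\rm cyc}}^\dagger\mathsf D_{\Delta,M_{\rm cyc}}\mathsf F_{M_{\rm cyc}}U_{\rm in}$ on a cyclic grid. This consists of Householder reflections mapping the public address to geometric encoding states (reflection vectors prepared by Grover--Rudolph from closed-form interval masses), a QFT, and the phase $s_\Delta$ computed from the frequency address.

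Your per-gate rounding argument does not cover the accuracy of this replacement. The discretization error is structural, not a precision error, and it is not small on arbitrary $L^2$ histories. The paper handles it in five steps:
\begin{enumerate}
\item Restrict to the reachable histories $e^{-x}f(x)$ with $\deg f<R$.
\item Use a polynomial inverse inequality $\norm{q'}\lesssim L_R(\tau)\norm{q}$ on that class.
\item Compare through the polar isometry $\mathsf E_\Delta$ of cell averaging, which intertwines the pointwise Cayley query.
\item Obtain one-call error $O(\Delta L_R(\tau)+\Delta^2+e^{-\Delta L_{\rm io}})$ (\cref{lem:coupling-comparison}).
\item Run a forward hybrid over the $R$ reuses of the same private register to get total error $O(R\eta)$ (\cref{lem:forward-grid-hybrid}).
\end{enumerate}
This argument is what makes a grid polynomial in $\tau,R,1/\eps$ suffice. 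That in turn justifies both the $O(\log(2Q_{\rm sim}/\eps))$-qubit address register in \cref{eq:qubit-count} and the polylogarithmic per-coupling gate cost.

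Finally, the clock amplitudes are convolution coefficients of geometric sums without convenient closed-form interval masses, so your item (iii) route is not what is used. The paper computes them classically and uses a generic $O(R)$-gate preparation. This is affordable only because the three-call OAA circuit contains constantly many clock preparations rather than one per interaction.
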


\subsection{Applications}

\paragraph{Quantum linear systems.}
Under the standard quantum linear systems problem (QLSP) access model,
sparse oracles are available for
both $A$ and $A^\dagger$, so Hermitian dilation handles a general
matrix without changing the asymptotic sparsity.  We may therefore let
$A$ be a Hermitian $d$-sparse matrix with $\norm{A}\le1$ and
$\norm{A^{-1}}\le\kappa$, and suppose that a preparation
circuit $U_b\ket0=\ket b$, together with its inverse and controlled
versions, is available.  Then
\cref{thm:main} implements $e^{-\iu A/2}$ with operator-norm error $\delta$
using $O(\sqrt d\log(2/\delta))$ sparse-oracle queries.  Combining
this operation with the simulation-to-block-encoding conversion of
Low and Chuang and a block-encoded linear-system solver gives, with
constant success probability, an $\eps$-approximation to
\begin{equation}
 \frac{A^{-1}\ket{b}}{\norm{A^{-1}\ket{b}}}
\end{equation}
using
\begin{equation}
 \kappa\sqrt d\,\polylog(\kappa/\eps)
\end{equation}
queries \cite{LowChuangUniform,CostaEtAl2022}.  This replaces the
subpolynomial overhead in the earlier
Hamiltonian-simulation route of Low by polylogarithmic factors
\cite{Low2019}.  It does not, by itself, prove the optimal
single-logarithmic bound. Concurrent work of Bravo-Prieto, Harrow, and
Kothari establishes the worst-case asymptotic query complexity
\begin{equation}
 \Theta\!\left(\kappa\sqrt d\log\frac{1}{\eps}\right)
 \label{eq:intro-qls-optimal}
\end{equation}
by a different enlarged-linear-system construction
\cite{BravoPrietoHarrowKothari2026}.  As in that work, the matching
multiparameter lower bound is understood for sufficiently large
$\kappa$, $d$, and $1/\eps$, with the dimension allowed to grow. It is
not a uniform lower bound at degenerate parameter values.  Thus quantum linear systems
motivate the joint norm and sparsity dependence of our theorem, while
closing the remaining logarithmic gap would require an additional
end-to-end reduction.

\paragraph{Black-box unitary implementation.}
Here the reduction is direct.  Let $U$ be a unitary such that both $U$
and $U^\dagger$ have at most $d$ nonzero entries in each row, and
assume sparse access to both matrices.  Consider the Hermitian dilation
\begin{equation}
 H_U=\begin{pmatrix}0&U\\ U^\dagger&0\end{pmatrix}.
 \label{eq:intro-unitary-dilation}
\end{equation}
Every row and column of $H_U$ has Euclidean norm one, and
$H_U^2=\id$.  Therefore
\begin{equation}
 e^{-\iu(\pi/2)H_U}=-\iu H_U,
\end{equation}
so simulating $H_U$ for time $\pi/2$ implements $U$, up to a known
phase and a swap of the dilation qubit.  We obtain the following
corollary.

\begin{corollary}[Black-box unitary implementation]
\label{cor:black-box-unitary}
Given sparse-access oracles for a unitary $U$ and its adjoint, with
sufficient entry precision, and assuming that both $U$ and $U^\dagger$
are $d$-row-sparse, $U$ can be implemented with operator-norm error
$0<\eps\le1/2$ using
\begin{equation}
 O\!\left(\sqrt d\log\frac{2}{\eps}\right)
\end{equation}
queries.  If the supplied oracle circuits have gate cost $g_U$ and
use $a_U$ workspace qubits, then the implementation uses
\begin{equation}
 O\!\left(
 \sqrt d\log\frac{2}{\eps}
 \left[g_U+
 \poly\!\left(\log N+b_{\rm val}+\log d+\log\frac{2}{\eps}\right)
 \right]
 \right)
\end{equation}
1- and 2-qubit gates and a number of qubits obtained from
\cref{eq:qubit-count} by taking $a_F+a_H=O(a_U)$.
\end{corollary}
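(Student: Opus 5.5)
The plan is to apply \cref{thm:main} and \cref{thm:gate} to the Hermitian dilation $H_U$ from \cref{eq:intro-unitary-dilation}, taking $t=\pi/2$ and $\Lambda=1$.

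\textbf{Oracles for $H_U$.} Index the dilated space $\bbc^2\otimes\HH$ by a flag bit $a\in\{0,1\}$ and a row index $i$. Row $(0,i)$ of $H_U$ has the same nonzero positions as row $i$ of $U$, placed in the $a=1$ block. Row $(1,i)$ has the nonzero positions of row $i$ of $U^\dagger$, placed in the $a=0$ block. Hence $H_U$ is $d$-sparse.

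The oracle $O_F$ for $H_U$ acts as follows:
\begin{itemize}
\item On the first $d$ labels $\ell$, it is a controlled call, conditioned on $a$, to the column-position oracle of $U$ or of $U^\dagger$, followed by flipping the flag bit.
\item On labels $\ell>d$, it must complete $f_{(a,i)}$ to a permutation of $[2N]$. I would take $\ell\in(d,N]$ to the complement positions inside the opposite block using the given permutation $f_i$ of $[N]$. Positions in the same block are handled by an explicit bijection on the labels $N+1,\dots,2N$.
\end{itemize}

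The oracle $O_H$ returns $0$ when both indices lie in the same block. Otherwise it queries $U_{ij}$ or $(U^\dagger)_{ij}$, selected by the flag. Each dilated oracle therefore uses $O(1)$ controlled calls to the given oracles or their inverses, plus $O(\log N)$ gates of comparison logic. This gives $a_F+a_H=O(a_U)$ and $g_F+g_H=O(g_U+\poly\log N)$.

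\textbf{Norm promise.} Every column of $U$ and of $U^\dagger$ is a unit vector, so $\norm{H_U}_{1\to2}=1$ and $\Lambda=1$ is valid. With $t=\pi/2$ we have $t\Lambda=\pi/2\ge1/2$. Then $\tau=(\pi/2)\sqrt d$, and \cref{thm:main} gives $O(\sqrt d+\sqrt d\log(2/\eps))=O(\sqrt d\log(2/\eps))$ queries, using $\log(2/\eps)\ge\log 4$.

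\textbf{Error bound.} The simulation is run with error $\eps$. Since $H_U^2=\id$, we get $e^{-\iu(\pi/2)H_U}=-\iu H_U$. On input $\ket{1}\ket{\psi}$, the map $H_U$ outputs $\ket{0}\,U\ket{\psi}$. So I prepare the flag ancilla in $\ket1$, run the simulator, and apply the phase $\iu$ (the known phase). The flag then ends in $\ket0$, which is the clean ancilla state. Restricting the isometry bound \cref{eq:simulation-error} to the subspace with flag $\ket1$ gives operator-norm error at most $\eps$ for $U$, with all ancillas returned clean. No extra $X$ gate is needed.

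\textbf{Gate and qubit counts.} Substitute $Q_{\rm sim}=\Theta(\sqrt d\log(2/\eps))$ into \cref{eq:gate-count}. The term $\log(2Q_{\rm sim}/\eps)$ is $O(\log d+\log(2/\eps))$, and $n_H=O(\log N)$. This yields the stated gate count and qubit count.

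The main obstacle is the padding step in $O_F$. Turning the row-sparse position oracles of the two blocks into a genuine permutation of $[2N]$ for each dilated row requires a reversible, efficiently computable completion. If the simulator of \cref{thm:main} only ever queries labels $\ell\le d$, this step can be sidestepped. Otherwise I would construct the completion explicitly as described above.
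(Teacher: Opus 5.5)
Your route is the paper's: dilate to $H_U$, simulate for time $\pi/2$ via \cref{thm:main,thm:gate}, and remove the known phase and the flip of the dilation qubit. Your explicit construction of the dilated oracles is more detailed than the paper's. The padding step you flag as the main obstacle is not one. The supplied $f_i$ are already permutations of $[N]$, so your map is a genuine permutation of $[2N]$: the given permutation sends labels in $[N]$ into the opposite block, and an explicit bijection sends labels $N+1,\dots,2N$ into the same block.

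The gap is finite precision, which is what the hypothesis ``with sufficient entry precision'' refers to and which the paper treats explicitly. In the access model the value oracles return $b_{\rm val}$-bit entries. The Hermitian matrix that \cref{thm:main} actually simulates is therefore the dilation $\widetilde H_U$ of a rounded matrix $\widetilde U$, not $H_U$. Two of your key steps then fail as stated. First, the columns of $\widetilde U$ need not be unit vectors, so $\Lambda=1$ can violate the promise \cref{eq:one-two-promise} on which \cref{lem:entry-lifting} rests. Second, $\widetilde H_U^2\ne\id$, so $e^{-\iu(\pi/2)\widetilde H_U}\ne-\iu\widetilde H_U$. In addition, $\widetilde H_U$ is Hermitian only if the two oracles round consistently, i.e.\ the $U^\dagger$ oracle returns exactly $\overline{\widetilde U_{ij}}$ at the transposed position.

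The repair is short. Require consistent, zero-preserving rounding with entry error $\zeta\le\eps/(\pi d)$. Since $\widetilde U-U$ has at most $d$ nonzeros of modulus at most $\zeta$ in each row and column, $\norm{\widetilde H_U-H_U}=\norm{\widetilde U-U}\le d\zeta$. Likewise every row and column of $\widetilde H_U$ has norm at most $1+\sqrt d\,\zeta\le2$. Take $\Lambda=2$, run the simulator at error $\eps/2$, and add the perturbation bound $\norm{e^{-\iu(\pi/2)\widetilde H_U}-e^{-\iu(\pi/2)H_U}}\le(\pi/2)d\zeta\le\eps/2$. This leaves all asymptotic query, gate, and qubit bounds unchanged. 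It needs only $O(\log(d/\eps))$ fractional bits, which are absorbed into the polylogarithmic gate overhead.
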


For an intended unitary whose entries are rounded, use consistent
oracles for the rounded matrix $\widetilde U$ and its adjoint, so that
the rounded dilation $\widetilde H_U$ is Hermitian.  If rounding
preserves the zero entries and every entry error is at most
$\zeta\le\eps/(\pi d)$, then
$\norm*{\widetilde H_U-H_U}=\norm*{\widetilde U-U}\le d\zeta$.
The evolution perturbation at time $\pi/2$ is therefore at most
$\eps/2$.  Moreover, the maximum row and column Euclidean norms of
$\widetilde H_U$ are at most $1+\sqrt d\,\zeta\le2$.
Apply \cref{thm:main,thm:gate} with $\Lambda=2$ and simulation error
$\eps/2$ to obtain the same asymptotic bounds for the intended $U$.
In particular, $O(\log(d/\eps))$ fractional bits suffice, up to the
fixed encoding overhead.

For an arbitrary dense $N\times N$ unitary, the location oracle simply
enumerates all $N$ columns and the query complexity becomes
$O(\sqrt N\log(2/\eps))$.  At constant error this is
$\Theta(\sqrt N)$, matching the quantum-search lower bound.  Berry and
Childs asked whether this lower bound could be attained for every
black-box unitary; Low obtained $N^{1/2+o(1)}$ queries, and
\cref{cor:black-box-unitary} removes the subpolynomial factor
\cite{BerryChilds2012,Low2019}.  The same optimal query bound was
obtained independently in \cite{BravoPrietoHarrowKothari2026}.  In
addition to resolving the query question, \cref{thm:gate} supplies an
explicit gate bound under circuit access to the sparse location and
value oracles; for a dense unitary, the location circuit is trivial.

\subsection{Related work}

\paragraph{Sparse simulation with norm information.}
Low's algorithm decomposes a sparse Hamiltonian into entry-magnitude
bands, applies uniform spectral amplification to each band, and
recombines the resulting evolutions through an interaction-picture
construction \cite{Low2019,LowChuangUniform}.  This yields the
$1$-to-$2$ norm dependence in \cref{eq:intro-low-bound}, together
with an essentially query-linear gate overhead, but incurs
subpolynomial factors in $d$, $t$, and $1/\eps$.  Our improvement is
not a change from the spectral norm to the $1$-to-$2$ norm---Low's
theorem already has the latter, stronger dependence.  The improvement
is the additive precision term in \cref{eq:query-bound} and hence the
optimal linear query bound in the large-$t\Lambda$ regime.

\paragraph{Quantum-data-structure access.}
Chakraborty, Gily\'en, and Jeffery and, independently, Wang and Wossnig
developed Hamiltonian simulation algorithms for matrices stored in
quantum-accessible data structures
\cite{ChakrabortyGilyenJeffery2019,WangWossnig2020}.  Their data
structures prepare normalized rows or columns and lead to
useful Frobenius-norm or data-structure-dependent bounds, particularly
for non-sparse matrices.  This access model presupposes preprocessed
qRAM-like state preparation and is therefore incomparable with the
sparse location-and-value oracles in \cref{eq:OF,eq:OH}.  In
particular, it bypasses precisely the state-preparation cost whose
square-root sparsity dependence is central here.

\paragraph{Concurrent work.}
Bravo-Prieto, Harrow, and Kothari prove the optimal QLSP bound
\cref{eq:intro-qls-optimal} by exactly block encoding a larger linear
system whose condition number is $O(\kappa\sqrt d)$
\cite{BravoPrietoHarrowKothari2026}.  As a separate consequence, they
construct a constant-normalization, error-$\eta$ block encoding of a
$d$-sparse Hermitian matrix of spectral norm at most one using
$O(\sqrt d\log(1/\eta))$ queries, and they derive the same black-box
unitary-implementation bound as \cref{cor:black-box-unitary}.  Their
starting point is the enlarged linear system, whereas ours is a
Hamiltonian-simulation theorem.  A generic composition of their block
encoding with qubitization does not give the additive bound in
\cref{eq:query-bound}; conversely, the generic conversion of our
simulator into a linear-system solver leaves the polylogarithmic
overhead described above.  Our theorem also uses the sharper
$1$-to-$2$ norm promise and includes circuit-level gate and qubit
accounting, while their stated results concern query complexity.

\paragraph{Transducers.}
The transducer framework of Belovs, Jeffery, and Yolcu describes a
unitary that transforms a public input while returning a private
catalyst unchanged, and it supports exact and thrifty composition of
quantum subroutines \cite{BelovsJefferyYolcu2024}.  Recent algorithms
for time-dependent Hamiltonian simulation use this framework to share
oracle calls across many labeled private sectors
\cite{ChenGaoWangZhou2026,ChenEtAlGate2026}.  We adapt the continuous
coupling of \cite[Appendix~A.1]{ChenGaoWangZhou2026} and the shared-private-space
reuse arrangement of \cite[Theorem~3.2]{BelovsJefferyYolcu2024} to
the sparse-entry Cayley query.  The additional ingredients are the
compressed second-moment bound, localized signed phase weights, and
different input and output clock states.  We analyze the routed circuit
directly from zero private input, without preparing an infinite limiting
catalyst or invoking a generic transducer implementation theorem.
Our gate realization uses an explicit Fourier-diagonal coupling rather
than the dyadic factorization in \cite{ChenEtAlGate2026}.

\subsection{Technical overview}
\label{sec:technical-overview}

We now outline the main ideas behind the query bound and its gate
implementation.  The starting point is to express Hamiltonian
evolution as an effective input--output transformation arising from
unitary interactions with an auxiliary space.  A finite routing clock
then approximates this transformation, and a structured circuit
implements the resulting construction efficiently.

Three spaces play complementary roles.  The \emph{public space}
$\HH$ holds the system state at input and output.  The \emph{clock
space} $\bbc_C^R$ labels $R$ public interaction slots, each carrying a
copy of the system space.  The \emph{private space} $\QQ$ temporarily
carries amplitudes between public interactions and is where the
matrix-dependent queries act.  The clock is attached to the public
slots; a single private space is shared by all of them, rather than
being copied for each clock value.

The computation starts with the public input distributed over a
superposition of clock slots and no component in the private space.
Successive interactions couple one public slot at a time to the shared
private space.  Amplitudes can enter from one slot, remain private
during subsequent queries, and return at a later slot.  Choosing the
input and output clock states determines how these contributions are
combined.  The clock thus labels the interaction schedule; it is not
advanced by a separate shift operation.  The steps below explain how
we choose the interaction and clock states to approximate
$e^{-\iu tH}$ with the claimed resource bounds.

\paragraph{Sparse-entry lifting.}
The first step is to retain the $1$-to-$2$ norm information in a
representation that can be queried directly.  We embed the system by
an input-independent isometry $V$ that prepares a uniform list-index
state, and construct a Hermitian operator $M$ from one- and
two-dimensional blocks of matrix entries.  \cref{lem:entry-lifting}
gives
\[
 H=2\sqrt d\,\Lambda V^\dagger MV,
 \qquad V^\dagger M^2V\preceq\frac14\id.
\]
The second bound is the useful consequence of the norm promise:
$\norm{MV}\le1/2$, even though $\norm{M}$ can be of order $\sqrt d$.
We access $M$ through its Cayley transform
$U_M=(M+\iu\id/2)(M-\iu\id/2)^{-1}$.  Its small blocks allow each
application to be implemented with $O(1)$ sparse-oracle queries.

\paragraph{Evolution as a transfer response.}
We next combine the Cayley query with a unitary coupling that depends
only on public parameters and exchanges amplitudes between the public
system and a private history space.
Feeding the private output back through $zU_M$ defines a transfer
function $\Phi_t(z)$.  \cref{prop:transfer} identifies its boundary
values with a family of Hamiltonian evolutions $e^{-\iu tH_\theta}$,
where $H_0=H$; in particular,
\[
 \Phi_{t,\partial}(1)=e^{-\iu tH}.
\]
Simulation therefore reduces to recovering the phase-zero value of
this boundary family.  The feedback is an analytical representation
of the target, not an infinite sequence of operations executed by the
algorithm.

\paragraph{Routed reuse of the private space.}
A routing clock realizes weighted transfer responses in a circuit.
Successive public slots interact with one shared private space.  The
input is supported only on the public slots, so the initial private
component is zero.  An amplitude entering from slot
$j$ and spending $\ell$ query steps in the private space returns to
slot $j+\ell$.  Consequently, correlations between the input and output
clock amplitudes determine the coefficients associated with different
private residence times.  \cref{lem:cross-spectral-reuse} expresses
the resulting input--output block as a weighted integral of
$e^{-\iu tH_\theta}$.  This integral describes the block implemented by
clock preparation, $R$ routed interactions, and clock unpreparation;
the algorithm does not sample the phase $\theta$.

\paragraph{Localization and exact cancellation.}
To recover the value at zero, we construct real, even phase weights
$\kappa_{m,p}$ with two complementary properties.  They are localized
near zero and exactly reproduce Fourier modes of order at most $p$:
$\int_{-\pi}^{\pi}\kappa_{m,p}(\theta)e^{\iu k\theta}\,d\theta=1$
for $|k|\le p$.  Thus their integral against any trigonometric
polynomial of degree at most $p$ equals that polynomial's value at
zero.  After pairing opposite phases, the truncated local expansion
of the boundary response, written in powers of
$\sin^2(\theta/2)$, is such a polynomial and is reproduced exactly.
Localization controls the remaining error, both within the
expansion neighborhood and away from it.  The compressed second-moment
bound controls the growth of the response in that neighborhood, and
\cref{prop:filtered-error} gives an error exponentially small in $p$
with an exponent depending linearly on $t\Lambda$.  We can therefore
choose
\[
 m=\Theta(\sqrt d),
 \qquad
 p=\Theta\!\left(t\Lambda+\log\frac{2}{\eps}\right).
\]
The required number $R$ of public clock slots, equal to the number of
routed interactions, is
\[
 R=O(mp)
  =O\!\left(t\Lambda\sqrt d+\sqrt d\log\frac{2}{\eps}\right).
\]

\paragraph{A phase-twisted clock and amplitude amplification.}
The phase weights are signed, whereas using the same input and output
clock state produces only nonnegative phase weights.  We instead use a
phase-twisted pair of different clock states.  \cref{lem:single-clock}
factors the signed weights into their cross-spectrum and realizes the
phase integral with one routed circuit and a known normalization
$1\le\lambda\le2$.  One round of oblivious amplitude
amplification (OAA) then converts this block into a simulator with
operator-norm error $\eps$, including the output ancillas as in
\cref{eq:simulation-error}.  By \cref{lem:purification}, the round
uses only three calls to the block circuit or its adjoint, preserving
the query bound above.

\paragraph{Gate implementation.}
The query construction specifies a unitary coupling, but an arbitrary
unitary realization need not have a low gate cost.  The final step is
therefore to replace the analytical coupling by an explicit structured
circuit on a discretized history register.  The circuit first encodes
the public amplitude into a state with geometrically decaying
amplitudes, applies a quantum Fourier transform, applies a phase
computed from the frequency address, and then uses the inverse
transform and output decoding.  The encoding amplitudes and frequency
phases have explicit formulas, so these operations do not require a
table with one entry per grid point or access to qRAM.  This structure
gives a gate cost per coupling that is polynomial in the address length
and working precision, rather than in the number of grid points.

The same private history register is reused throughout the $R$ routed
interactions.  To control the discretization error, we only need to
compare the circuit with the analytical coupling on histories that
can actually arise from the initially empty private space.  These
histories have a controlled analytic form, allowing
\cref{lem:coupling-comparison,lem:forward-grid-hybrid} to bound both the
error of one interaction and its accumulation over the computation.
A grid polynomially large in $\tau$, $R$, and $1/\eps$ suffices and
requires only a logarithmic-size address register.  Importantly, the
small address register alone does not guarantee an efficient circuit;
the Fourier structure and explicit encodings above also ensure that
we need not process the grid points one by one.

The routing-clock preparations have a different cost: each uses
$O(R)$ controlled rotations and phases.  They occur only constantly
many times, before or after a full routed computation, and are not
repeated at every interaction.  Their total cost is therefore
near-linear in $R$, rather than quadratic.  Finally, we include the
gate costs of the supplied sparse-oracle circuits and the reversible
arithmetic, and choose the working precision so that errors from
finite-precision arithmetic and approximate gates remain within the
simulation error budget.  Under the stated
gate-model assumptions, this gives a total gate count linear in the
query-complexity scale, multiplied by the oracle costs and a polynomial
overhead in the input bit lengths and logarithmic precision
parameters, as stated in \cref{thm:gate}.

\subsection{Organization}
\cref{sec:preliminaries} reviews projected-unitary encodings and the
one-round OAA step.  \cref{sec:routed-construction} develops the
routed unitary construction: the sparse-entry representation and
Cayley query, the unitary input--output realization of Hamiltonian
evolution, and finite reuse of a shared private space.
\cref{sec:logarithmic} proves the logarithmic-precision approximation,
and \cref{sec:gate-realization} gives the finite gate implementation
and completes the resource accounting.
The OAA proof appears in \cref{app:purification}.

\subsection{AI Disclosure}
The core technical results of this paper were discovered through
discussions with GPT-5.6 Sol and GPT-6 Astra.  The authors verified and
simplified the proofs and take sole responsibility for the correctness
of the final manuscript.  GPT-5.6 Sol and GPT-6.1 Sol assisted with
writing the manuscript.

\section{Preliminaries}
\label{sec:preliminaries}

\paragraph{Notation and conventions.}
All Hilbert spaces are complex.  For finite-dimensional vectors,
$\norm{v}$ denotes the Euclidean norm,
\[
 \norm{v}=\left(\sum_j|v_j|^2\right)^{1/2}.
\]
Function-space norms are specified locally.
For a bounded linear operator $T$, we write
\[
 \norm{T}:=\sup_{\norm{v}=1}\norm{Tv}
\]
for the induced operator norm.  For matrices, this is the spectral
norm.  All operator-error estimates,
including those between input--output isometries, use this norm.
For a matrix $A$ in the specified computational basis, define
\[
 \norm{A}_{1\to2}
 :=\max_j\left(\sum_i|A_{ij}|^2\right)^{1/2},
 \qquad
 \norm{A}_{\max}:=\max_{i,j}|A_{ij}|.
\]
Thus the $1$-to-$2$ norm is the maximum column Euclidean norm; for
Hermitian matrices it is also the maximum row Euclidean norm.

For a complex number $c$, we use $\overline{c}$ to denote its complex
conjugate.  We write $T^\dagger$ for the adjoint and $\id$ for
the identity on the space understood from context.  Projectors are orthogonal, and for
Hermitian operators $A,B$ on the same space, $A\preceq B$ means that
$B-A$ is positive semidefinite.  For Hilbert spaces, vectors, and
operators, $\oplus$ denotes the orthogonal direct sum, while
$\otimes$ denotes the tensor product.  In particular,
\[
 \norm{x\oplus y}^2=\norm{x}^2+\norm{y}^2.
\]
In the value oracle \cref{eq:OH}, $\oplus$ instead denotes bitwise XOR.

We write $[n]=\{1,\ldots,n\}$ and $\iu$ for the imaginary unit.
Logarithms are natural unless a base is displayed.  Implicit constants
in asymptotic estimates are absolute unless indicated otherwise;
$a\lesssim b$ means $a\le Cb$ for an absolute constant $C$.
The notation $\widetilde O$ suppresses polylogarithmic factors in the
relevant parameters, while $\poly(x)$ and $\polylog(x)$ denote
unspecified polynomials in $x$ and $\log x$, respectively.

\paragraph{Projected-unitary encodings.}
If $\Pi$ and $\widetilde\Pi$ are projectors whose ranges are
identified with the input and output system spaces, a
projected-unitary encoding of $Q$ is a unitary $\mathcal U$ such
that $\widetilde\Pi\mathcal U\Pi=Q$.  The special case
$Q=\bra{0}\mathcal U\ket{0}$ is called a clean block encoding.  We
use the standard calculus of projected unitary encodings
\cite{Gilyen2019}.

A clean block encoding of an operator close to a unitary does not
directly implement that unitary.  Oblivious amplitude amplification
was introduced in \cite{BerryEtAl2014}, building on the two-reflection
mechanism of Grover's search algorithm \cite{Grover1996} and the
in-place amplification technique of Marriott and Watrous
\cite{MarriottWatrous2005}.  Its extension to operators close to
unitaries was developed in \cite{BerryEtAl2015}.
The following lemma uses one round to convert
a block $Q/\lambda$, with known $1\le\lambda\le2$, into a unitary
implementation whose operator-norm error is linear in $\norm{Q-U}$.
The round is realized by one application of a pair of reflections.

\begin{lemma}[One-round oblivious amplitude amplification]
\label{lem:purification}
Suppose $U$ is unitary, $\norm{Q-U}\le\delta<1/4$, and
$Q/\lambda$ is a clean block of a circuit with known
$1\le\lambda\le2$.  Three calls to that circuit or its adjoint
suffice to construct a unitary $\mathcal C$ on the joint system and
ancillary workspace such that
\begin{equation}\label{eq:purification}
 \norm{\mathcal C(\ket{0}_{\rm anc}\otimes\id)
       -\ket{0}_{\rm anc}\otimes U}
 \le4\delta.
\end{equation}
Here $\ket{0}_{\rm anc}$ denotes the all-zero state of the
block-encoding ancillas together with the additional qubit introduced
in the proof.
\end{lemma}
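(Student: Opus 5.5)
First rescale the block so that it amplifies cleanly. Let $W$ be the given circuit, with $\bra{0}W\ket{0}=Q/\lambda$. Adjoin one extra qubit and apply a single-qubit rotation $R_\alpha$ to it, chosen so that $\bra{0}R_\alpha\ket{0}=\lambda/2$; this is possible because $1/2\le\lambda/2\le1$. The product $W'=R_\alpha\otimes W$ is then a clean block encoding of $Q/2$ with respect to $\ket{0}_{\rm anc}$, the all-zero state of the original ancillas and the new qubit. One circuit call gives $W'$, and one call to the adjoint gives $W'^\dagger$.

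Next perform one standard round of oblivious amplitude amplification. Let $\Pi=\ket{0}\bra{0}_{\rm anc}\otimes\id$ and $R_\Pi=2\Pi-\id$. Define
\[
 \mathcal C=-W'R_\Pi W'^\dagger R_\Pi W',
\]
which uses exactly three calls. The following block identity holds for any $W'$ with $\Pi W'\Pi=A:=Q/2$ viewed on the system space:
\[
 \Pi\mathcal C\Pi=3A-4AA^\dagger A.
\]
This is the degree-3 Chebyshev map $T_3(x)=4x^3-3x$ up to sign, applied to singular values. With $A=U/2$ the right side is $\tfrac32U-\tfrac12U=U$. In general, write $Q=U+E$ with $\norm{E}\le\delta$. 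Expanding $\tfrac32Q-\tfrac12QQ^\dagger Q$ around $U$, the linear terms are $\tfrac32E-\tfrac12(EU^\dagger U+UE^\dagger U+UU^\dagger E)=E-\tfrac12UE^\dagger U$. The higher-order terms are bounded by $\tfrac12(3\delta^2+\delta^3)$. Therefore $\norm{\Pi\mathcal C\Pi-U}\le\tfrac32\delta+\tfrac32\delta^2+\tfrac12\delta^3\le2\delta$ for $\delta<1/4$.

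Finally, convert the block error into an isometry error; this is the main point requiring care. Set $X=\mathcal C(\ket{0}_{\rm anc}\otimes\id)$. Its projection onto $\ket{0}_{\rm anc}$ is $\ket{0}\otimes B$ with $B=\Pi\mathcal C\Pi$, and the orthogonal part $Y$ satisfies $Y^\dagger Y=\id-B^\dagger B$ because $X$ is an isometry. Then
\[
 \norm{X-\ket{0}\otimes U}\le\norm{B-U}+\norm{Y},
 \qquad
 \norm{Y}^2=\norm{\id-B^\dagger B}.
\]
Since $B=U+F$ with $\norm{F}\le2\delta$, we get $\id-B^\dagger B=-(U^\dagger F+F^\dagger U+F^\dagger F)$, whose norm is at most $4\delta+4\delta^2$. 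This gives only $\norm{Y}=O(\sqrt\delta)$, which is too weak. The first fix is to use the cruder linear combination $\norm{X-\ket0\otimes U}^2\le\norm{(B-U)^\dagger(B-U)}+\norm{Y^\dagger Y}$, but this still yields only $\sqrt\delta$.

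So instead I will exploit the explicit structure of the odd polynomial. Use the singular value decomposition of $A$, with singular values $s_i=\tfrac12\sigma_i$ and $\sigma_i\in[1-\delta,1+\delta]$. By qubitization and Jordan's lemma, in each two-dimensional invariant subspace the column of $\mathcal C$ has ``good'' amplitude $T_3$-type value $p(s)=3s-4s^3$ and ``bad'' amplitude of modulus $\sqrt{1-p(s)^2}$. Near $s=1/2$ we have $p(1/2)=1$ and $p'(1/2)=0$, so $1-p(s)^2=O((s-\tfrac12)^2)$. Hence the bad amplitude is $O(\delta)$, with a constant at most about $\sqrt{12}\cdot\tfrac{\delta}{2}\cdot(1+O(\delta))$. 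This is exactly why one round at the Chebyshev fixed point is used. Combining the $O(\delta)$ bad component with $\norm{B-U}\le2\delta$, and keeping track of the singular vectors that $U$ shares only approximately, gives a bound of $4\delta$. The constant bookkeeping for this part is the expected obstacle; I would do it by bounding the $\ket{0}$-part and the orthogonal part separately in the Jordan blocks.
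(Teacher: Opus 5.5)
Your construction (the attenuation qubit with clean element $\lambda/2$, the circuit $\mathcal C=-W'R_\Pi W'^\dagger R_\Pi W'$, and the identity $\Pi\mathcal C\Pi=3A-4AA^\dagger A$) is the same as the paper's. The difference is in how the leakage $Y=(\id-\Pi)\mathcal C\Pi$ is controlled.

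\textbf{The paper's route.} It uses block multiplication alone: $(\id-\Pi)\mathcal C\Pi=L(\id-4A^\dagger A)=L(\id-Q^\dagger Q)$, with $L=(\id-\Pi)W'\Pi$ and $\norm{L}\le1$. Hence $\norm{Y}\le\norm{\id-Q^\dagger Q}\le2\delta+\delta^2<\frac94\delta$, with no spectral decomposition. The leakage is visibly proportional to the failure of $Q$ to be an isometry.

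\textbf{Your route.} You explain the same fact through the Chebyshev fixed point $p(1/2)=1$, $p'(1/2)=0$. This is sound, but you stopped before finishing it, and the remaining bookkeeping is easier than you suggest. You need neither Jordan's lemma nor any tracking of the singular vectors of $U$. Write $A=\sum_i s_i\ket{u_i}\!\bra{v_i}$. Then $B=\sum_i p(s_i)\ket{u_i}\!\bra{v_i}$, so your own identity $Y^\dagger Y=\id-B^\dagger B$ gives $\norm{Y}^2=\max_i\bigl(1-p(s_i)^2\bigr)$. The exact factorization $1-p(s)^2=(1-4s^2)^2(1-s^2)$, with $s_i=\sigma_i/2$ and $|\sigma_i-1|\le\delta$, then gives $\norm{Y}\le\max_i|1-\sigma_i^2|\le2\delta+\delta^2$. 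This is the paper's bound; in fact it is the same quantity, since $\norm{L\ket{v_i}}=\sqrt{1-s_i^2}$ and the vectors $L\ket{v_i}$ are orthogonal. You already bounded the good block algebraically, so singular vectors never enter.

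\textbf{What each buys.} The paper's computation is coordinate-free and yields constants immediately. Yours makes clear why a single round at the fixed point suffices, and keeps the extra factor $\sqrt{1-s_i^2}$, which gives a slightly sharper leakage constant.

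Two corrections are needed to actually reach $4\delta$.

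First, your linear term is miscomputed: $\frac32E-\frac12(E+UE^\dagger U+E)=\frac12E-\frac12UE^\dagger U$. So in fact $\norm{B-U}\le\delta+\frac32\delta^2+\frac12\delta^3<\frac32\delta$.

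Second, combine the two sectors in quadrature, since $\ket{0}\otimes(B-U)$ and $Y$ have orthogonal ranges. You discarded this combination only because of the naive $O(\sqrt\delta)$ bound on $Y$; with the correct leakage bound it works. It gives $\sqrt{(3/2)^2+(9/4)^2}\,\delta<3\delta$. Even with your looser $2\delta$ it gives $\sqrt{4+81/16}\,\delta<4\delta$. The paper uses exactly this quadrature step, with $\frac52\delta$ and $\frac94\delta$, to get $\frac{\sqrt{181}}{4}\delta$.

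By contrast, the combination you wrote, the triangle inequality with $\norm{B-U}\le2\delta$ and the simple leakage bound $\frac94\delta$, gives $\frac{17}{4}\delta$ and misses the target. It only succeeds if you either correct the linear term or keep the sharper factor $\sqrt{1-s_i^2}$ in the leakage.
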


The proof is given in \cref{app:purification}.

\section{A routed unitary construction for Hamiltonian simulation}
\label{sec:routed-construction}

\subsection{Sparse-entry representation and the Cayley query}
\label{sec:entry}

Let
\[
 \EE:=\Span\{\ket{i,\ell}:i,\ell\in[N]\},
 \qquad
 \EE_{\rm list}:=
 \Span\{\ket{i,\ell}:i\in[N],\ 1\le\ell\le d\}.
\]
We regard $\EE$ as the ambient row--list-index workspace and
$\EE_{\rm list}$ as its sparse-list subspace.
Since each $f_i$ is a permutation of $[N]$, $O_F$ is unitary on
$\EE$.  Its image
\[
 \widehat\EE_{\rm list}:=O_F\EE_{\rm list}
\]
is the listed-address subspace in row--column coordinates.  The
registers in $\EE$ store addresses only; $O_H$ loads the numerical
value $H_{ij}$ separately into the value register.  Define the
input-independent isometry $V:\HH\to\EE_{\rm list}$ by
\begin{equation}\label{eq:V}
 V\ket{i}=\frac{1}{\sqrt d}\sum_{\ell=1}^{d}\ket{i,\ell}.
\end{equation}
For every $i<j$ with $H_{ij}\ne0$, both transpose-related address states
$\ket{i,j}$ and $\ket{j,i}$ belong to $\widehat\EE_{\rm list}$, and we
set
\begin{equation}\label{eq:mij}
 \mu_{ij}=\frac{\sqrt d}{2\Lambda}H_{ij},
 \qquad
 \widehat M\big|_{\Span\{\ket{i,j},\ket{j,i}\}}
 =\begin{pmatrix}0&\mu_{ij}\\ \overline{\mu_{ij}}&0\end{pmatrix}.
\end{equation}
For a nonzero diagonal entry, $\widehat M$ acts on $\ket{i,i}$ by the
real scalar $\sqrt d H_{ii}/(2\Lambda)$.  Set $\widehat M$ to zero on
all remaining basis states of $\widehat\EE_{\rm list}$, including padded
entries, and extend it by zero on $\widehat\EE_{\rm list}^{\perp}$.
Finally, define the operator on the ambient address workspace $\EE$ by
\[
 M:=O_F^\dagger\widehat M O_F.
\]

The Cayley query acts most naturally on transpose pairs of listed
row--column addresses,
whereas the desired evolution is expressed on the original system
space.  The promise in~\cref{eq:one-two-promise} is exactly what is
needed after compression: squaring a transpose-pair block and averaging
over list indices produces a normalized squared row Euclidean norm.  Hermiticity
identifies row and column Euclidean norms, so the resulting compressed
second moment is bounded even though the uncompressed operator $M$ can
have norm of order $\sqrt d$.

\begin{lemma}[Sparse-entry lifting and compressed second moment]
\label{lem:entry-lifting}
The operators $V$ and $M$ satisfy
\begin{equation}\label{eq:compression}
 H=2\sqrt d\,\Lambda V^\dagger M V,
 \qquad
 V^\dagger M^2V\preceq\frac14\id.
\end{equation}
Moreover,
\begin{equation}\label{eq:Mnorm}
 \norm{M}
 =\frac{\sqrt d}{2\Lambda}\max_{i,j\in[N]}|H_{ij}|
 \le\frac{\sqrt d}{2}.
\end{equation}
\end{lemma}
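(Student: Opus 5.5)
The plan is to work entirely in row--column coordinates. There $\widehat M=O_F M O_F^\dagger$ is block diagonal, and the vector $O_FV\ket{i}=d^{-1/2}\sum_{\ell\le d}\ket{i,f_i(\ell)}$ is a uniform superposition over the listed addresses of row $i$. Since conjugation by the unitary $O_F$ preserves all three claims, I will verify everything for $\widehat V:=O_FV$ and $\widehat M$.

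\emph{Step 1: the compression identity.} I will compute $\widehat M\widehat V\ket{j}=d^{-1/2}\sum_{\ell}\widehat M\ket{j,f_j(\ell)}$. On a transpose pair, $\widehat M\ket{j,k}$ is a multiple of $\ket{k,j}$: it equals $\mu_{kj}\ket{k,j}$ when $k<j$ and $\overline{\mu_{jk}}\ket{k,j}=\mu_{kj}\ket{k,j}$ when $k>j$, using Hermiticity. On the diagonal, $\widehat M\ket{j,j}=\mu_{jj}\ket{j,j}$. Padded addresses are sent to zero. This gives
\[
 \widehat M\widehat V\ket{j}=\frac{1}{\sqrt d}\sum_{k:H_{kj}\ne0}\mu_{kj}\ket{k,j}.
\]
The transpose-pair address $\ket{k,j}$ lies in the list of row $k$, because $H_{kj}\neq0$ and all nonzero positions are listed. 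Hence
\[
 \langle k,j|\widehat V\ket{i}=d^{-1/2}\delta_{ik},
\]
and so $\bra{i}\widehat V^\dagger\widehat M\widehat V\ket{j}=\mu_{ij}/d=H_{ij}/(2\sqrt d\Lambda)$. This yields $H=2\sqrt d\Lambda V^\dagger MV$.

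\emph{Step 2: the second moment.} The vectors $\widehat M\widehat V\ket{j}$ for different $j$ are supported on addresses with second coordinate $j$, so they are mutually orthogonal. Therefore $V^\dagger M^2V$ is diagonal, with entries
\[
 \norm{\widehat M\widehat V\ket{j}}^2=\frac1d\sum_k|\mu_{kj}|^2=\frac{1}{4\Lambda^2}\sum_k|H_{kj}|^2\le\frac14,
\]
where the last inequality is the promise \cref{eq:one-two-promise}. This proves $V^\dagger M^2V\preceq\frac14\id$.

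\emph{Step 3: the norm of $M$.} Because $\widehat M$ is an orthogonal direct sum of $1\times1$ blocks $\mu_{ii}$ and $2\times2$ off-diagonal blocks with eigenvalues $\pm|\mu_{ij}|$, its norm is the largest $|\mu_{ij}|$, which gives the equality in \cref{eq:Mnorm}. For the inequality, $|H_{ij}|\le\norm{H}_{1\to2}\le\Lambda$, so $\norm{M}\le\sqrt d/2$.

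\emph{Main obstacle.} The only delicate point is bookkeeping in Step 1. I must check three things: that both members of each transpose pair are indeed listed, so the blocks are well defined inside $\widehat\EE_{\rm list}$; that distinct pairs are disjoint; and that the conjugation convention makes the $\ket{j,k}\mapsto\ket{k,j}$ coefficient equal to $\mu_{kj}$ in both orderings. The first two follow from Hermiticity, since $H_{ij}\neq0$ iff $H_{ji}\neq0$, together with the fact that each $f_i$ is a permutation listing all nonzero positions of row $i$.
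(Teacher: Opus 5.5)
Your proposal is correct and is essentially the paper's argument: an entrywise check of $H=2\sqrt d\,\Lambda V^\dagger MV$ in row--column coordinates, followed by the second-moment and norm bounds read off from the transpose-pair block structure. The only difference is cosmetic. You compute $\widehat M\widehat V\ket{j}$ and use the orthogonality of its supports, a Gram-matrix argument that tacitly uses that $\widehat M$ is Hermitian. The paper instead squares each $2\times2$ block to get $|\mu_{ij}|^2\id$, compresses to obtain a row norm, and converts it to a column norm via Hermiticity.
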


\begin{proof}
The two factors $d^{-1/2}$ in~\cref{eq:V} give, for every
$i\ne j$ with $H_{ij}\ne0$,
\[
 \bra{i}V^\dagger MV\ket{j}
 =\frac{H_{ij}}{2\sqrt d\,\Lambda}.
\]
The diagonal convention gives the same identity on diagonal entries,
proving the first part of~\cref{eq:compression}.  Squaring each
$2\times2$ transpose-pair block gives $|\mu_{ij}|^2\id$.  Hence
\begin{align}
 V^\dagger M^2V
 =\sum_i\ket{i}\!\bra{i}\,
   \frac{1}{4\Lambda^2}\sum_j|H_{ij}|^2 \\
 =\sum_i\ket{i}\!\bra{i}\,
   \frac{1}{4\Lambda^2}\sum_j|H_{ji}|^2
 \preceq\frac14\id,
\end{align}
where the second line uses Hermiticity and the last inequality uses
\cref{eq:one-two-promise}.  Finally,
$\max_{i,j\in[N]}|H_{ij}|\le\norm{H}_{1\to2}\le\Lambda$, so the block
definition gives~\cref{eq:Mnorm}.  Notice that the compressed moment,
rather than a constant bound on $\norm{M}$, is the useful estimate.
\end{proof}

\begin{remark}[Why the $1$-to-$2$ norm is the natural promise]
\label{rem:one-two-information}
The sparse-entry lifting argument uses the input normalization only through the
row Euclidean bounds.  In particular, every entry obeys
$|H_{ij}|\le\Lambda$.  For Hermitian $H$, both facts follow from
$\norm{H}_{1\to2}\le\Lambda$; no spectral-norm promise is needed.
This is precisely the column-Euclidean norm parameter appearing in
Low's sparse-simulation bounds \cite{Low2019}.  Since
$\norm{H}_{1\to2}\le\norm{H}$, a spectral-norm bound is a stronger
promise and yields \cref{cor:spectral-norm} as an immediate
specialization.
\end{remark}

Define the address-space Cayley unitary
\begin{equation}\label{eq:cayley}
 U_M:=\operatorname{Cay}(M)
 =(M+\iu\id/2)(M-\iu\id/2)^{-1}.
\end{equation}
For a scalar spectral value $\lambda_M\in\bbr$, the corresponding
eigenvalue $(\lambda_M+\iu/2)/(\lambda_M-\iu/2)$ has modulus one.  On a
transpose-pair block
$M_\mu=\left(\begin{smallmatrix}0&\mu\\\overline\mu&0\end{smallmatrix}\right)$,
\begin{equation}\label{eq:cayley-block}
 \operatorname{Cay}(M_\mu)
 =\frac{(|\mu|^2-1/4)\id+\iu M_\mu}{|\mu|^2+1/4}.
\end{equation}
Thus one application of $U_M$ first converts list-index coordinates to
row--column addresses.  For $i\ne j$, reversibly encode the address as
the canonically ordered pair $(\min\{i,j\},\max\{i,j\})$ together with
an orientation bit, query $H_{\min\{i,j\},\max\{i,j\}}$, and apply
\cref{eq:cayley-block} to the orientation bit.  The canonical pair is
unchanged by this rotation, so the queried value can be uncomputed
before undoing the ordering.  This is necessary for complex Hermitian
entries: querying $H_{ij}$ and then directly mixing the two addresses
would not allow uncomputation, since $H_{ji}=\overline{H_{ij}}$.
Diagonal addresses require only the corresponding scalar phase.
Finally, return to list-index coordinates.  We call one use of $U_M$ or
$U_M^\dagger$ a \emph{Cayley query}; each such use costs $O(1)$ calls
to the original sparse oracles.

\subsection{A unitary input--output realization of Hamiltonian evolution}
\label{sec:input-output-realization}

We adapt the continuous coupling of
\cite[Appendix~A.1]{ChenGaoWangZhou2026} to the embedding $V$ and
the scale $\sqrt d\,\Lambda$, with a different phase convention on
the private output.  We then couple it to the sparse-entry Cayley
query to obtain the transfer family analyzed below.

Introduce the continuous private space
\begin{equation}\label{eq:history}
 \QQ=L^2([0,t];\EE).
\end{equation}
Every operator $T$ on $\EE$ is identified with its pointwise lift to
$\QQ$:
\[
 (Tq)(r):=T(q(r))
 \qquad\text{for almost every }r\in[0,t].
\]
In particular, this convention applies to $\Pi_V:=VV^\dagger$, $M$,
and $U_M$.

Set $\gamma=\sqrt{2\sqrt d\,\Lambda}$.  For
$(\xi,q)\in\HH\oplus\QQ$, define
\begin{align}
 \psi(r)&=e^{-\sqrt d\,\Lambda r}\xi
   -\iu\gamma\int_0^r e^{-\sqrt d\,\Lambda(r-u)}V^\dagger q(u)\,du,
   \label{eq:psi-def}\\
 g(r)&=q(r)-\iu\gamma V\psi(r),
 \qquad
 W_t(\xi,q)=(\psi(t),g).
 \label{eq:Wt-def}
\end{align}
Equivalently,
\begin{equation}\label{eq:Wt-block}
 W_t=\begin{pmatrix}D&C_0\\B&A_0\end{pmatrix},
 \qquad D=e^{-t\sqrt d\,\Lambda}\id,
\end{equation}
where
\begin{align}
 C_0q&=-\iu\gamma\int_0^t e^{-\sqrt d\,\Lambda(t-r)}V^\dagger q(r)\,dr,
 \label{eq:C0}\\
 (B\xi)(r)&=-\iu\gamma e^{-\sqrt d\,\Lambda r}V\xi,
 \label{eq:B}\\
 (A_0q)(r)&=q(r)-2\sqrt d\,\Lambda\int_0^r
 e^{-\sqrt d\,\Lambda(r-u)}\Pi_Vq(u)\,du.
 \label{eq:A0}
\end{align}

To realize Hamiltonian evolution as a transfer response, we first
separate the matrix-dependent Cayley query from a coupling that depends
only on public parameters.  The following unitary realization supplies
that fixed public--private interaction.  Its unitarity ensures that
every subsequent routed reuse is a physical operation that stores and
returns private history without loss.

\begin{lemma}[Unitary coupling]\label{lem:Wt}
$W_t$ is unitary and depends only on the public parameters
$d,\Lambda,t,V$, not on the queried values of $H$.
\end{lemma}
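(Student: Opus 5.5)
The plan is to verify directly that $W_t$ is an isometry, $W_t^\dagger W_t=\id$, and then that it is surjective, $W_tW_t^\dagger=\id$. Public-parameter dependence is immediate from \cref{eq:psi-def,eq:Wt-def}: only $d,\Lambda,t,\gamma$ and $V$ appear, and $V$ is the fixed uniform list-index isometry of \cref{eq:V}. The formulas contain no $O_H$ data and no $M$. Boundedness is also clear, since every block in \cref{eq:Wt-block} is a Volterra-type integral operator with bounded kernel on $[0,t]$.

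For isometry, I will use an energy identity. Write $a=\sqrt d\,\Lambda$, so $\gamma^2=2a$. From \cref{eq:psi-def}, $\psi$ is absolutely continuous and satisfies
\[
 \psi'(r)=-a\psi(r)-\iu\gamma V^\dagger q(r),
 \qquad \psi(0)=\xi.
\]
Then
\[
 \frac{d}{dr}\norm{\psi(r)}^2=-2a\norm{\psi}^2+2\,\mathrm{Re}\,\langle\psi,-\iu\gamma V^\dagger q\rangle .
\]
Pointwise, using $V^\dagger V=\id$,
\[
 \norm{g(r)}^2=\norm{q(r)}^2+\gamma^2\norm{\psi(r)}^2-2\,\mathrm{Re}\,\langle q,\iu\gamma V\psi\rangle .
\]
Since $\mathrm{Re}\,\langle\psi,-\iu\gamma V^\dagger q\rangle=\mathrm{Re}\,\langle V\psi,-\iu\gamma q\rangle=-\mathrm{Re}\,\langle q,\iu\gamma V\psi\rangle$, the two identities combine to
\[
 \norm{g(r)}^2=\norm{q(r)}^2+\frac{d}{dr}\norm{\psi(r)}^2 .
\]
Integrating over $[0,t]$ gives $\norm{\psi(t)}^2+\norm{g}_{\QQ}^2=\norm{\xi}^2+\norm{q}_\QQ^2$. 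This is the key computation, and the choice $\gamma^2=2a$ is exactly what makes the $\norm{\psi}^2$ terms cancel.

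Surjectivity is the step I expect to need care, since isometry alone does not give unitarity in infinite dimensions. I will show it by explicit inversion, solving the dynamics backward. Given $(\eta,g)\in\HH\oplus\QQ$, substitute $q=g+\iu\gamma V\psi$ into the ODE. Because $V^\dagger V=\id$, this gives
\[
 \psi'=-a\psi+\gamma^2\psi-\iu\gamma V^\dagger g=a\psi-\iu\gamma V^\dagger g .
\]
This is a linear ODE, and I will solve it backward with terminal condition $\psi(t)=\eta$:
\[
 \psi(r)=e^{-a(t-r)}\eta+\iu\gamma\int_r^t e^{-a(u-r)}V^\dagger g(u)\,du .
\]
Then set $\xi=\psi(0)$ and $q=g+\iu\gamma V\psi\in\QQ$. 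By uniqueness for the forward ODE, $W_t(\xi,q)=(\eta,g)$. Since $W_t$ is a surjective isometry, it is unitary.

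As an optional consistency check, one can compare this inverse map with the adjoint blocks $D^\dagger,B^\dagger,C_0^\dagger,A_0^\dagger$ computed from \cref{eq:C0,eq:B,eq:A0}, but this is not needed for the proof.
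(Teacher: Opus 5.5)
Your plan matches the paper's proof: an energy balance for norm preservation, then an explicit backward solve for surjectivity. The surjectivity half is correct. Substituting $q=g+\iu\gamma V\psi$ gives $\psi'=a\psi-\iu\gamma V^\dagger g$, your backward formula solves this with $\psi(t)=\eta$, and uniqueness for the forward equation gives $W_t(\psi(0),q)=(\eta,g)$.

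The problem is the key energy computation, which contains a sign error. The step $\mathrm{Re}\langle V\psi,-\iu\gamma q\rangle=-\mathrm{Re}\langle q,\iu\gamma V\psi\rangle$ is false. Moving the scalar across gives $\langle V\psi,-\iu\gamma q\rangle=\langle\iu\gamma V\psi,q\rangle$ in either inner-product convention, and $\mathrm{Re}\langle x,y\rangle=\mathrm{Re}\langle y,x\rangle$, so the two real parts are equal with a plus sign. With your sign, the $\norm{\psi}^2$ terms add to $4a\norm{\psi}^2$ instead of cancelling. Your displayed identity $\norm{g(r)}^2=\norm{q(r)}^2+\frac{d}{dr}\norm{\psi(r)}^2$ is also false. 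Take $q=0$: then $\psi=e^{-ar}\xi$ and $\norm{g}^2=2a\norm{\psi}^2$, whereas your identity would force $\norm{g}^2=-2a\norm{\psi}^2$. Integrated over $[0,t]$, your identity gives $\norm{g}_\QQ^2+\norm{\xi}^2=\norm{q}_\QQ^2+\norm{\psi(t)}^2$, which is not norm preservation. The norm equality you then write down does not follow from it.

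The fix is local. Put $X=\mathrm{Re}\langle q,\iu\gamma V\psi\rangle$. With the correct plus sign,
\[
\frac{d}{dr}\norm{\psi}^2=-2a\norm{\psi}^2+2X,
\qquad
\norm{g}^2=\norm{q}^2+\gamma^2\norm{\psi}^2-2X.
\]
Adding these and using $\gamma^2=2a$ gives $\frac{d}{dr}\norm{\psi(r)}^2=\norm{q(r)}^2-\norm{g(r)}^2$. This is exactly \cref{eq:energy-balance}: stored energy changes by input power minus output power. Integrating over $[0,t]$ yields $\norm{\psi(t)}^2+\norm{g}_\QQ^2=\norm{\xi}^2+\norm{q}_\QQ^2$. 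With this correction your argument coincides with the paper's proof.
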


\begin{proof}
Differentiating~\cref{eq:psi-def} gives
$\dot\psi=-\sqrt d\,\Lambda\psi-\iu\gamma V^\dagger q$.  Using
$V^\dagger V=\id$ and $\gamma^2=2\sqrt d\,\Lambda$,
\begin{equation}\label{eq:energy-balance}
 \frac{d}{dr}\norm{\psi(r)}^2
 =\norm{q(r)}^2-\norm{g(r)}^2.
\end{equation}
Integration from $0$ to $t$ proves norm preservation.  For
surjectivity, prescribe $(\xi_{\rm out},g)$, solve
$\dot\psi=\sqrt d\,\Lambda\psi-\iu\gamma V^\dagger g$ backward from
$\psi(t)=\xi_{\rm out}$, and set $q=g+\iu\gamma V\psi$.  This constructs
the inverse.
\end{proof}

For a scalar parameter $z$, apply the Cayley query $U_M$ and the factor
$z$ on the private sector, followed by the fixed coupling:
\begin{equation}\label{eq:S}
 S(z):=W_t\diag(\id,zU_M)
 =\begin{pmatrix}D&zC\\B&zA\end{pmatrix},
 \qquad C=C_0U_M,\quad A=A_0U_M.
\end{equation}
For $z=e^{\iu\theta}$, write $S_\theta:=S(e^{\iu\theta})$; this is
unitary because both factors in \cref{eq:S} are unitary.

The purpose of this input--output realization becomes clear after its
closed-loop response is identified with Hamiltonian evolution.  The
next proposition shows that the feedback parameter $z$ produces an explicit
rational deformation of $H$, with the desired evolution appearing at
$z=1$.  This is the bridge that reduces Hamiltonian simulation to
recovering the value at $\theta=0$ from the boundary family
$\theta\mapsto e^{-\iu tH_\theta}$.

\begin{proposition}[Evolution transfer function]
\label{prop:transfer}
For $|z|<1$, the transfer function
\begin{equation}\label{eq:Phi}
 \Phi_t(z):=D+zC(\id-zA)^{-1}B
\end{equation}
satisfies
\begin{equation}\label{eq:Phi-exp}
 \Phi_t(z)=\exp\!\left[-t\sqrt d\,\Lambda V^\dagger
 (\id+zU_M)(\id-zU_M)^{-1}V\right].
\end{equation}
It is analytic and contractive in the disk.  Whenever the radial limit
exists, write
$\Phi_{t,\partial}(e^{\iu\theta})
:=\lim_{\varrho\uparrow1}\Phi_t(\varrho e^{\iu\theta})$.
Apart from a finite set of phases, this boundary value is
\begin{equation}\label{eq:boundary}
 \Phi_{t,\partial}(e^{\iu\theta})=e^{-\iu tH_\theta},
 \qquad
 H_\theta=\sqrt d\,\Lambda V^\dagger
 (2M-u\id)(\id+2uM)^{-1}V,
 \quad u=\tan(\theta/2).
\end{equation}
In particular,
\begin{equation}\label{eq:target-boundary}
 \Phi_{t,\partial}(1)=e^{-\iu tH}.
\end{equation}
\end{proposition}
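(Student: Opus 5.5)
The plan is to identify the transfer function with the output of a closed feedback loop, solve that loop as a linear ODE on $\HH$, and then evaluate the resulting generator spectrally on the boundary.

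\textbf{Step 1: the feedback loop solves the transfer function.} For $|z|<1$, $A=A_0U_M$ is the lower-right block of the unitary $S_0$, so $\norm{zA}\le|z|<1$. The Neumann series for $(\id-zA)^{-1}$ therefore converges, and $\Phi_t(z)$ is analytic in the disk. Fix $\xi\in\HH$ and set
\[
g:=(\id-zA)^{-1}B\xi,\qquad q:=zU_Mg .
\]
Then $g=B\xi+zAg$, which is exactly the relation $g=W_t$-output$(\xi,q)$ in the private sector. Likewise $D\xi+zCg$ is the public output $\psi(t)$ of $W_t(\xi,q)$. So $\Phi_t(z)\xi=\psi(t)$, where $(\psi,g,q)$ satisfy \cref{eq:psi-def,eq:Wt-def} together with the feedback $q=zU_Mg$. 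Uniqueness comes from the Neumann series, so no separate Volterra argument should be needed.

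\textbf{Step 2: closing the loop.} From $g=q-\iu\gamma V\psi$ and $q=zU_Mg$ we get
\[
g=-\iu\gamma(\id-zU_M)^{-1}V\psi ,
\]
pointwise in $r$; the inverse exists since $\norm{zU_M}<1$. Substitute this into $\dot\psi=-\sqrt d\Lambda\psi-\iu\gamma V^\dagger q$ and use $\gamma^2=2\sqrt d\Lambda$, $V^\dagger V=\id$, and $\id+2zU(\id-zU)^{-1}=(\id+zU)(\id-zU)^{-1}$. This gives the autonomous linear ODE
\[
\dot\psi=-\sqrt d\Lambda\,K_z\psi,\qquad K_z=V^\dagger(\id+zU_M)(\id-zU_M)^{-1}V,\qquad \psi(0)=\xi .
\]
Its solution yields \cref{eq:Phi-exp}.

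\textbf{Step 3: contractivity.} The operator $(\id+zU)(\id-zU)^{-1}$ has Hermitian part
\[
(\id-zU)^{-\dagger}\bigl(1-|z|^2\bigr)(\id-zU)^{-1}\succeq0 .
\]
Hence $\tfrac{d}{dr}\norm{\psi}^2=-2\sqrt d\Lambda\,\mathrm{Re}\langle\psi,K_z\psi\rangle\le0$. Alternatively, contractivity follows because $\Phi_t$ is a Schur complement of a unitary, but the energy estimate is more direct.

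\textbf{Step 4: boundary values.} This is the step I expect to need the most care. Since $U_M$ is a function of the Hermitian $M$, everything diagonalizes in the eigenbasis of $M$. For an eigenvalue $\lambda$ and $z=e^{\iu\theta}$, multiply numerator and denominator by $e^{-\iu\theta/2}$ with $u=\tan(\theta/2)$. A direct computation gives
\[
\frac{1+e^{\iu\theta}\frac{\lambda+\iu/2}{\lambda-\iu/2}}{1-e^{\iu\theta}\frac{\lambda+\iu/2}{\lambda-\iu/2}}=\iu\,\frac{2\lambda-u}{1+2u\lambda}.
\]
Hence $\sqrt d\Lambda K_{e^{\iu\theta}}=\iu H_\theta$, which gives \cref{eq:boundary}.

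The radial limit $\rho\uparrow1$ requires the scalar functions $(1+\rho w)/(1-\rho w)$ to converge uniformly on the finite spectrum of $M$ (finite-dimensional $\EE$). Convergence holds unless $e^{\iu\theta}U_M$ has eigenvalue $1$, i.e.\ $1+2u\lambda=0$ for some eigenvalue $\lambda$, or $\theta=\pi$. This excludes only finitely many phases. Away from them, continuity of the matrix exponential passes the limit through, giving $\Phi_{t,\partial}(e^{\iu\theta})=e^{-\iu tH_\theta}$.

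\textbf{Step 5: the target value.} At $\theta=0$ we have $u=0$, so $H_0=2\sqrt d\Lambda V^\dagger MV=H$ by \cref{lem:entry-lifting}. Phase $0$ is never excluded, because $1+2u\lambda=1\neq0$. This proves \cref{eq:target-boundary}.
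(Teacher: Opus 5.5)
Your proposal is correct and follows essentially the same route as the paper: close the loop with $q=zU_Mg$, reduce to the autonomous ODE with generator $\sqrt d\,\Lambda V^\dagger(\id+zU_M)(\id-zU_M)^{-1}V$, and evaluate it on the boundary with the same scalar Cayley identity, so that $\theta=0$ gives $H$. The differences are refinements rather than a different argument. You justify the feedback solution and analyticity through the Neumann series. You prove contractivity from the positive Hermitian part $(1-|z|^2)(\id-zU_M)^{-\dagger}(\id-zU_M)^{-1}$ of the generator, where the paper uses the unitarity identity $\norm{\Phi_t(z)\xi}^2=\norm{\xi}^2-(1-|z|^2)\norm{g}^2$. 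And you make the finite exceptional set ($1+2u\lambda=0$ or $\theta=\pi$) and the radial limit explicit, which the paper leaves to ``functional calculus.''
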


\begin{proof}
Feed the private output $g$ back as the private input $zg$ before the
Cayley query.  The field entering $W_t$ is then $q=zU_Mg$.  Substituting
$g=q-\iu\gamma V\psi$ gives
\begin{equation}
 q=-\iu\gamma(\id-zU_M)^{-1}zU_MV\psi.
\end{equation}
Consequently
\begin{equation}
 \dot\psi=-\sqrt d\,\Lambda V^\dagger
 (\id+zU_M)(\id-zU_M)^{-1}V\psi,
\end{equation}
which proves~\cref{eq:Phi-exp}.  Unitarity with feedback input $zg$
also yields
\begin{equation}
 \norm{\Phi_t(z)\xi}^2
 =\norm{\xi}^2-(1-|z|^2)\norm{g}^2,
\end{equation}
and hence contractivity.

For a real spectral value $\lambda_M$,
\begin{equation}
 \frac{1+e^{\iu\theta}(\lambda_M+\iu/2)/(\lambda_M-\iu/2)}
      {1-e^{\iu\theta}(\lambda_M+\iu/2)/(\lambda_M-\iu/2)}
 =\iu\frac{2\lambda_M-u}{1+2u\lambda_M}.
\end{equation}
Functional calculus proves~\cref{eq:boundary}.  At $\theta=0$,
$(\id+U_M)(\id-U_M)^{-1}=2\iu M$, so
\cref{eq:compression} gives~\cref{eq:target-boundary}.
\end{proof}

\begin{remark}[Transducer interpretation]
For $0<\varrho<1$, the resolvent vector
\[
 v_{\varrho,\xi}=(\id-\varrho A)^{-1}B\xi
\]
is well defined and obeys
\[
 S(1)(\xi\oplus\varrho v_{\varrho,\xi})
 =\Phi_t(\varrho)\xi\oplus v_{\varrho,\xi}.
\]
This regulated input--output relation is the sense in which the
private history resembles a transducer catalyst space.  We do not
assert that $v_{\varrho,\xi}$ converges in the infinite history norm as
$\varrho\uparrow1$, or that $\id-A$ has a bounded inverse there.  The
algorithm does not prepare a limiting catalyst and does not invoke a
generic transducer implementation theorem; it starts with zero private
amplitude and compiles the transfer response directly with a finite
clock.
\end{remark}

\subsection{Finite routed reuse of a shared private space}
\label{sec:finite-clock}

We use the shared-private-space reuse arrangement underlying
\cite[Theorem~3.2]{BelovsJefferyYolcu2024}; see also
\cite[Section~5]{ChenGaoWangZhou2026}.  Here we allow independently
chosen input and output clock states and analyze the resulting
weighted transfer response directly.

Let $R\in\mathbb N$ with $R\ge1$, and set
$\PP_R=\bigoplus_{k=0}^{R-1}\HH_k\cong\bbc_C^R\otimes\HH$.
We call $\bbc_C^R$ the \emph{routing clock register}, with basis
$\{\ket{0}_C,\ldots,\ket{R-1}_C\}$.  Its basis label records the
scheduled public interaction slot; it is not advanced by an
autonomous shift operator.  All public slots share the same private
space $\QQ$.  For a normalized clock state
$\ket{\mathbf a}_C=\sum_{j=0}^{R-1}a_j\ket{j}_C$, with clock-state
amplitude vector $\mathbf a=(a_0,\ldots,a_{R-1})$, define
\begin{equation}\label{eq:Ja}
 J_{\mathbf a}\xi=(a_0\xi,\ldots,a_{R-1}\xi;0)
 \in\PP_R\oplus\QQ.
\end{equation}
For $z\in\bbc$ and each $k$, let $S^{(k)}(z)$ act on
$\HH_k\oplus\QQ$ through the block matrix
$\left(\begin{smallmatrix}D&zC\\B&zA\end{smallmatrix}\right)$ and as
the identity on the other public slots:
\begin{equation}\label{eq:Sk-action}
 S^{(k)}(z):\quad
 \binom{x_k}{q}\longmapsto
 \binom{Dx_k+zCq}{Bx_k+zAq},
 \qquad x_j\longmapsto x_j\quad(j\ne k).
\end{equation}
On the unit circle, write
$S_\theta^{(k)}=S^{(k)}(e^{\iu\theta})$; this is a routed unitary
direct-sum interaction, not a clock increment and not $S_\theta^R$ on
one public--private pair.  Equivalently, if $\mathsf{Swap}_{0,k}$
swaps public slots $0$ and $k$, then
$S^{(k)}(z)=\mathsf{Swap}_{0,k}^\dagger
S^{(0)}(z)\mathsf{Swap}_{0,k}$.

Define the complete length-$R$ routed circuit by
\begin{equation}\label{eq:routed-product}
 \mathcal R_R(z)
 :=S^{(R-1)}(z)\cdots S^{(0)}(z).
\end{equation}
For $z=e^{\iu\theta}$, this is a unitary circuit and equals
$S_\theta^{(R-1)}\cdots S_\theta^{(0)}$.
If $g_{k-1}$ is the private amplitude just before slot $k$, with
$g_{-1}=0$, then
\begin{equation}\label{eq:clock-recurrence}
 y_k=a_kD\xi+zCg_{k-1},
 \qquad
 g_k=a_kB\xi+zAg_{k-1}.
\end{equation}
Solving the recurrence gives
\begin{align}
 g_k&=\sum_{j=0}^k a_jz^{k-j}A^{k-j}B\xi,\label{eq:gk}\\
 y_k&=a_kD\xi+\sum_{\ell=1}^k
       a_{k-\ell}z^\ell CA^{\ell-1}B\xi.
 \label{eq:yk}
\end{align}
Thus an amplitude entering from slot $j$ through $B$, remaining private
for $\ell-1$ further interactions through $A$, and exiting through $C$
during step $j+\ell$ contributes
$a_jz^\ell\ket{j+\ell}CA^{\ell-1}B\xi$.

The exact transfer function permits arbitrarily many private returns,
but an algorithm can make only finitely many Cayley queries.  We use
one clock state to inject the public amplitude and may use a different
clock state to collect it.  Each private residence time then becomes a
cross-correlation coefficient.  This bilinear form is essential below:
it permits a signed phase kernel without an LCU over multiple routed
circuits.

For normalized clock-state amplitude vectors
$\mathbf a,\mathbf b\in\bbc^R$, let
$\ket{\mathbf a}_C=\sum_ja_j\ket{j}_C$ and
$\ket{\mathbf b}_C=\sum_jb_j\ket{j}_C$ be the corresponding states.
Define
their prepare--route--unprepare block by
\begin{equation}\label{eq:Pa}
 P_{\mathbf b\leftarrow\mathbf a}(z)
 :=J_{\mathbf b}^\dagger\mathcal R_R(z)J_{\mathbf a}.
\end{equation}
Associate with $\mathbf a$ and $\mathbf b$ the generating polynomials
\[
 a(z)=\sum_{j=0}^{R-1}a_jz^j,
 \qquad
 b(z)=\sum_{j=0}^{R-1}b_jz^j,
\]
and define their clock cross-spectrum by
\begin{equation}\label{eq:rhoa}
 \rho_{\mathbf b\leftarrow\mathbf a}(\theta)
 :=
 \frac{1}{2\pi}
 \overline{b(e^{\iu\theta})}a(e^{\iu\theta}).
\end{equation}

The clock correlations appearing in the routed circuit are precisely
the Fourier coefficients of this cross-spectrum.  The following lemma
converts the finite routed circuit into the boundary-phase integral
needed by the phase-twisted clock construction.  For general
$\mathbf a$ and $\mathbf b$, the cross-spectrum may be complex and has
integral $\langle\mathbf b,\mathbf a\rangle$, so the integral need not
be an average.

\begin{lemma}[Finite-clock cross-spectral representation]
\label{lem:cross-spectral-reuse}
For normalized $\mathbf a,\mathbf b\in\bbc^R$, the block and clock
cross-spectrum in
\cref{eq:Pa,eq:rhoa} satisfy
\begin{equation}\label{eq:boundary-phase-integral}
 P_{\mathbf b\leftarrow\mathbf a}(1)
 =
 \int_{-\pi}^{\pi}
 \rho_{\mathbf b\leftarrow\mathbf a}(\theta)
 e^{-\iu tH_\theta}\,d\theta.
\end{equation}
\end{lemma}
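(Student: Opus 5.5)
The plan is to compute $P_{\mathbf b\leftarrow\mathbf a}(z)$ explicitly as a polynomial in $z$ from the recurrence, and then identify its Taylor coefficients with Fourier coefficients of the boundary family $e^{-\iu tH_\theta}$. From \cref{eq:yk}, since $J_{\mathbf b}^\dagger$ pairs each public output $y_k$ with $\overline{b_k}$ and discards the private part,
\[
 P_{\mathbf b\leftarrow\mathbf a}(z)
 =\sum_{k=0}^{R-1}\overline{b_k}a_kD
 +\sum_{\ell=1}^{R-1}z^\ell c_\ell\, CA^{\ell-1}B,
 \qquad
 c_\ell:=\sum_{k=\ell}^{R-1}\overline{b_k}a_{k-\ell}.
\]
Setting $c_0:=\langle\mathbf b,\mathbf a\rangle$, we write $\Phi_t(z)=\sum_{\ell\ge0}z^\ell\Phi_\ell$ with $\Phi_0=D$ and $\Phi_\ell=CA^{\ell-1}B$ for $\ell\ge1$; this is the Neumann expansion of \cref{eq:Phi}. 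Then $P_{\mathbf b\leftarrow\mathbf a}(1)=\sum_{\ell=0}^{R-1}c_\ell\Phi_\ell$.

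Next, compute the Fourier coefficients of the cross-spectrum. Expanding $\overline{b(e^{\iu\theta})}a(e^{\iu\theta})=\sum_{k,j}\overline{b_k}a_je^{\iu(j-k)\theta}$ shows that
\[
 \int_{-\pi}^{\pi}\rho_{\mathbf b\leftarrow\mathbf a}(\theta)e^{\iu\ell\theta}\,d\theta=c_\ell
 \quad (0\le \ell\le R-1).
\]
For $\ell\ge R$ and for other negative-index combinations, only the terms with $k-j=\ell$ survive. Thus $\rho$ is a trigonometric polynomial of degree below $R$, and pairing it against $e^{\iu\ell\theta}$ for $\ell\ge0$ extracts exactly $c_\ell$, which vanishes for $\ell\ge R$.

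The final step applies this to the boundary function. For $0<\varrho<1$, the series $\Phi_t(\varrho e^{\iu\theta})=\sum_\ell\varrho^\ell e^{\iu\ell\theta}\Phi_\ell$ converges absolutely in operator norm, since $\norm{A}\le1$ by unitarity of $S_\theta$. It contains only nonnegative frequencies, so
\[
 \int\rho(\theta)\Phi_t(\varrho e^{\iu\theta})\,d\theta=\sum_{\ell<R}\varrho^\ell c_\ell\Phi_\ell,
\]
a finite sum that tends to $P_{\mathbf b\leftarrow\mathbf a}(1)$ as $\varrho\uparrow1$. On the other side, $\Phi_t$ is contractive in the disk by \cref{prop:transfer}, so the integrands are uniformly bounded by $\sup|\rho|$. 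By \cref{prop:transfer} the radial limits equal $e^{-\iu tH_\theta}$ for all but finitely many $\theta$. Dominated convergence, applied entrywise or to matrix elements $\langle\eta,\cdot\,\xi\rangle$ in finite dimension, therefore gives $\int\rho(\theta)e^{-\iu tH_\theta}\,d\theta$ in the limit, proving \cref{eq:boundary-phase-integral}.

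The main obstacle is this boundary passage. It needs pointwise radial convergence almost everywhere, which \cref{prop:transfer} supplies apart from finitely many phases, together with a uniform bound, which contractivity supplies. The point to check is that these hold in the operator sense on the finite-dimensional $\HH$, even though $\QQ$ is infinite dimensional. They do, because $\Phi_t$ maps $\HH\to\HH$ and the formula \cref{eq:Phi-exp} involves only $V^\dagger(\cdot)V$ on $\HH$.
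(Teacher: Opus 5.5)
Your proposal is correct and follows essentially the same route as the paper. You expand $P_{\mathbf b\leftarrow\mathbf a}(z)$ as a polynomial whose delay-$\ell$ coefficients are the clock cross-correlations, identify these with the Fourier coefficients of $\rho_{\mathbf b\leftarrow\mathbf a}$, pair against the norm-convergent return expansion of $\Phi_t(\varrho e^{\iu\theta})$ for $0<\varrho<1$, and let $\varrho\uparrow1$ by dominated convergence using contractivity and the almost-everywhere boundary values from \cref{prop:transfer}, exactly as the paper does, with your $\norm{A}\le1$ justification of term-by-term integration filling in a detail the paper leaves implicit.
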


\begin{proof}
Substituting \cref{eq:yk} into \cref{eq:Pa} and projecting the public
output against $\sum_k\overline{b_k}\bra{k}$ gives
\[
 P_{\mathbf b\leftarrow\mathbf a}(z)
 =
 \langle\mathbf b,\mathbf a\rangle D
 +
 \sum_{\ell=1}^{R-1}
 \left(
  \sum_{j=0}^{R-1-\ell}
  \overline{b_{j+\ell}}a_j
 \right)
 z^\ell CA^{\ell-1}B.
\]
The parenthesized coefficients, together with the zeroth coefficient
$\langle\mathbf b,\mathbf a\rangle$, are precisely the Fourier
coefficients of $\rho_{\mathbf b\leftarrow\mathbf a}$ paired with the
corresponding return delays.  Hence the norm-convergent return expansion
of $\Phi_t$ gives, for
$0<\varrho<1$,
\[
 P_{\mathbf b\leftarrow\mathbf a}(\varrho)
 =
 \int_{-\pi}^{\pi}
 \rho_{\mathbf b\leftarrow\mathbf a}(\theta)
 \Phi_t(\varrho e^{\iu\theta})\,d\theta.
\]
As $\varrho\uparrow1$, the left-hand side converges to
$P_{\mathbf b\leftarrow\mathbf a}(1)$.  On the right, contractivity of
$\Phi_t$ permits
dominated convergence, and \cref{prop:transfer} identifies its
almost-everywhere boundary value as $e^{-\iu tH_\theta}$.  This proves
\cref{eq:boundary-phase-integral}.
\end{proof}

When $\mathbf b=\mathbf a$, the clock cross-spectrum becomes the clock
power spectrum
\[
 \rho_{\mathbf a\leftarrow\mathbf a}(\theta)
 =
 \frac{|a(e^{\iu\theta})|^2}{2\pi}\ge0,
\]
whose Fourier coefficients are the autocorrelation coefficients of
the clock state.

The auxiliary private Hilbert space is
$\QQ=L^2([0,t];\EE)$ from \cref{eq:history}.  A vector $q\in\QQ$ is
the private component of a vector $(\xi,q)\in\HH\oplus\QQ$; in the
continuous coordinate representation it is described by the
$\EE$-valued wavefunction $r\mapsto q(r)$, with
\[
 \norm{q}_{\QQ}^2=\int_0^t\norm{q(r)}_{\EE}^2\,dr.
\]
Thus $q(r)$ is an amplitude profile, not a classical history, and
$q$ need not be normalized separately.  The term history refers only
to the causal dependence in \cref{eq:psi-def}: the public state at
coordinate $r$ depends on field values $q(u)$ only for $u\le r$.
This private state space is distinct from the finite routing clock
$\bbc_C^R$, which labels the discrete uses of the shared private
space.  The full space $\QQ$ provides a single horizon-independent
analytic input--output realization.  The bounded-horizon circuit
accesses only finite-dimensional subspaces, using the reachable-history
restriction and unitary-completion device of
\cite[Appendix~A.3]{ChenGaoWangZhou2026}.  In the present
time-independent setting these subspaces take the following form.  For
$k\ge0$, define
\begin{equation}\label{eq:Fk}
 \mathcal F_k=\{r\mapsto e^{-\sqrt d\,\Lambda r}f(r):
 f\text{ is an $\EE$-valued polynomial of degree at most }k\},
 \qquad \mathcal F_{-1}=\{0\}.
\end{equation}
\crefrange{eq:B}{eq:A0} implies
\begin{equation}\label{eq:reachable}
 B\HH\subseteq\mathcal F_0,
 \qquad
 A_0\mathcal F_k\subseteq\mathcal F_{k+1},
\end{equation}
and the pointwise $U_M$ preserves every $\mathcal F_k$.

Although $\QQ$ is infinite dimensional and is not itself a finite
quantum register, a routed computation of bounded horizon reaches only
the finite-dimensional ladder
$\mathcal F_0\subseteq\cdots\subseteq\mathcal F_{R-1}$, so it admits
an exact finite-dimensional realization on the subspace that matters.
The following lemma supplies this realization and makes later uses of
both the circuit and its adjoint legitimate.

\begin{lemma}[Finite reachable-history completion]
\label{lem:finite-completion}
For a horizon of $R$ forward calls, the restriction
\[
 W_t:\HH\oplus\mathcal F_{R-2}\longrightarrow
      \HH\oplus\mathcal F_{R-1}
\]
extends to a finite-dimensional unitary
$W_{t,R}^{\mathrm{fin}}$ on $\HH\oplus\mathcal F_{R-1}$.  The extension
depends on $d,\Lambda,t,V,R$ but not on $M$, and agrees exactly with the
continuous construction on every state reached in the forward routed
circuit.
\end{lemma}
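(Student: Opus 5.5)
The plan is to show first that $W_t$ restricted to $\HH\oplus\mathcal F_{R-2}$ is an isometry into $\HH\oplus\mathcal F_{R-1}$, and then to extend it by a Witt-type completion. Isometry is immediate from \cref{lem:Wt}, since $W_t$ is unitary on all of $\HH\oplus\QQ$. For the image, take $(\xi,q)$ with $q\in\mathcal F_{R-2}$ and use the block form \cref{eq:Wt-block}. The public output $D\xi+C_0q$ lies in $\HH$. The private output is $B\xi+A_0q$, and \cref{eq:reachable} gives $B\xi\in\mathcal F_0\subseteq\mathcal F_{R-1}$ and $A_0q\in\mathcal F_{R-1}$. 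For the latter inclusion I would re-verify the computation directly. If $q(r)=e^{-cr}f(r)$ with $c=\sqrt d\,\Lambda$ and $\deg f\le k$, then
\[
\int_0^r e^{-c(r-u)}\Pi_V q(u)\,du=e^{-cr}\int_0^r\Pi_Vf(u)\,du,
\]
which is $e^{-cr}$ times a polynomial of degree at most $k+1$. Hence the restriction is a linear isometry $\iota$ from $X:=\HH\oplus\mathcal F_{R-2}$ into $Y:=\HH\oplus\mathcal F_{R-1}$.

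Next comes dimension counting. The space $\mathcal F_k$ is finite dimensional with $\dim\mathcal F_k=(k+1)\dim\EE$, because $e^{-cr}$ times distinct monomials $r^m e_\alpha$ are linearly independent in $L^2([0,t];\EE)$; for $t>0$ this independence follows from that of polynomials on an interval. Therefore $X\subseteq Y$ are finite-dimensional subspaces with $\dim X<\dim Y$. Since $\iota$ is an isometry from $X$ into $Y$, we have $\dim \iota(X)=\dim X$, and so $\dim(Y\ominus X)=\dim(Y\ominus\iota(X))$. Fix any orthonormal bases of $Y\ominus X$ and $Y\ominus\iota(X)$, constructed by Gram–Schmidt from the explicit monomial bases. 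Define $W^{\rm fin}_{t,R}$ to be $W_t$ on $X$ and the basis-matching isometry on $Y\ominus X$. This map sends orthogonal complements to orthogonal complements isometrically, so it is unitary on $Y$. Every ingredient, namely $W_t$, $\Pi_V$, the spaces $\mathcal F_k$, and the chosen bases, depends only on $d,\Lambda,t,V,R$ and not on $M$. In particular no $U_M$ enters, which matches \cref{lem:Wt}.

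It remains to show agreement on reached states, by induction along the recurrence \cref{eq:clock-recurrence}. I claim $g_{k-1}\in\mathcal F_{k-1}$. The base case is $g_{-1}=0\in\mathcal F_{-1}$. For the inductive step, the pointwise $U_M$ (and the scalar $z$) preserves $\mathcal F_{k-1}$, so the private input to $W_t$ at step $k$ lies in $\mathcal F_{k-1}\subseteq\mathcal F_{R-2}$ for $k\le R-1$. There $W^{\rm fin}_{t,R}=W_t$, and by \cref{eq:reachable} the output $g_k$ lies in $\mathcal F_k$. Thus all $R$ forward calls, for slots $k=0,\dots,R-1$, act on $X$, and the finite circuit reproduces the continuous one exactly.

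The main obstacle is the subtle point that the domain is $\mathcal F_{R-2}$ rather than $\mathcal F_{R-1}$: the last call's output only needs to land in $\mathcal F_{R-1}$, and I must make sure the indexing of the induction matches the lemma's horizon convention. The linear-independence and dimension argument that makes the completion possible is routine but should be stated explicitly.
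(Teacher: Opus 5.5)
Your proposal is correct and follows the paper's proof. Like the paper, you get the isometry from \cref{lem:Wt}, complete it by matching orthonormal bases of the complements of the domain and of its image in $\HH\oplus\mathcal F_{R-1}$ (a completion independent of $M$ because $W_t$ is), and use \cref{eq:reachable} to show that every forward call, including the last, sees a private input in $\mathcal F_{R-2}$; you just write out details the paper leaves implicit, such as the degree-raising computation for $A_0$, the dimension count, and the explicit induction.
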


\begin{proof}
The displayed restriction is an isometry by \cref{lem:Wt}.  Complete
orthonormal bases of its domain and range complements.  The resulting
unitary completion is independent of $M$ because $W_t$ is.  Starting
from zero private amplitude,~\cref{eq:reachable} shows that the state
before the last call belongs to $\mathcal F_{R-2}$, so all routed
forward outputs agree exactly, and therefore so does the compressed
block $P_{\mathbf b\leftarrow\mathbf a}(z)$ from \cref{eq:Pa}.
Later singular-value transformations use this genuine finite unitary
and its genuine adjoint; no inverse continuous evolution on unreachable
histories is assumed.
\end{proof}

\section{Logarithmic precision from a phase-twisted clock}
\label{sec:logarithmic}

Throughout this section let $M=M^\dagger$ act on $\EE$, let
$V:\HH\to\EE$ be an isometry, and assume $t\ge0$, $\Lambda>0$, and
$d\ge1$.  Suppose
\begin{equation}\label{eq:abstract-assumptions}
 H=2\sqrt d\,\Lambda V^\dagger MV,
 \qquad
 V^\dagger M^2V\preceq\frac14\id,
 \qquad
 L_0\ge\max\{1,\norm{M}\},
 \qquad
 \tau=t\sqrt d\,\Lambda.
\end{equation}
Here $L_0$ is a known pole-control parameter for the resolvent
$(\id+2uM)^{-1}$; it is not a spectral-norm bound on $H$.  For the
sparse-entry lifting, \cref{eq:Mnorm} and the $1$-to-$2$ norm promise
give the uniform known choice
\[
 L_0=\max\left\{1,\frac{\sqrt d}{2}\right\}.
\]

\subsection{A localized reproducing phase kernel}

By \cref{eq:boundary-phase-integral}, a finite input--output clock pair
produces a boundary-phase integral whose kernel is the clock
cross-spectrum.  Since $H_0=H$, Hamiltonian simulation therefore
amounts to constructing finite clock states whose cross-spectrum acts
as an approximate evaluation kernel at $\theta=0$ for the boundary
family.  An exact delta function at zero would require infinitely many
clock slots, so we construct a signed phase kernel of unit integral
that is localized near zero and exactly reproduces sufficiently many
low Fourier modes.  The resulting signed phase integral is a linear
combination of the boundary evolutions.

Here localization means that the magnitude of the kernel is strongly
suppressed when $|\sin(\theta/2)|$ is larger than $O(1/m)$.
This suppression controls boundary responses far from the target
phase, where the local Taylor approximation is unavailable.  Exact
reproduction then cancels the low-order local expansion, leaving an
error that decreases exponentially with the reproduction order $p$.
The single signed phase kernel below has both properties and, as shown
later, factors as the cross-spectrum of the input and output clock
states.

The construction combines a classical de la Vall\'ee Poussin kernel,
formed from Fej\'er kernels
\cite[Eqs.~(1.13)--(1.15)]{Mehta2013}, with an additional high-power
localization factor.  We prove the resulting reproduction and
approximation bounds below.

For positive integers $m,p$, set $K=mp$ and define the shaping
polynomial
\begin{equation}\label{eq:shaping-polynomial}
 \begin{aligned}
 c(z)
 &=
 \left(\sum_{n=0}^{K-1}z^n\right)
 \left(\frac1m\sum_{n=0}^{m-1}z^n\right)^p
 =\sum_{j=0}^{L_c-1}c_jz^j,\\
 L_c&:=K+p(m-1)=2mp-p.
 \end{aligned}
\end{equation}
The coefficients $c_j$ are real and nonnegative.  Throughout, we use
the zero extension $c_j=0$ outside $0\le j<L_c$, and all unqualified
coefficient sums over $j$ are understood over $\mathbb Z$.
Define
\begin{equation}\label{eq:phase-kernel}
 \kappa_{m,p}(\theta)
 =
 \frac{|c(e^{\iu\theta})|^2}{2\pi K}
 \bigl(1+2\cos(K\theta)\bigr),
\end{equation}
and define the corresponding signed phase integral
\begin{equation}\label{eq:Ump}
 \widetilde U_{m,p}
 =
 \int_{-\pi}^{\pi}
 \kappa_{m,p}(\theta)e^{-\iu tH_\theta}\,d\theta.
\end{equation}
Here and below, the integrand denotes the almost-everywhere radial
boundary value from \cref{prop:transfer}; assigning arbitrary values at
the finitely many exceptional phases does not change the integral.

The normalized geometric sum of length $m$ supplies localization,
while the three frequency shifts enforce exact reproduction.  A local
even Taylor expansion then converts these properties into exponential
convergence.  The following proposition collects the three facts
needed by the clock construction.

\begin{proposition}[Localized phase-kernel approximation]
\label{prop:filtered-error}
The phase kernel $\kappa_{m,p}$ is real and even, and
\begin{align}
 \int_{-\pi}^{\pi}|\kappa_{m,p}(\theta)|\,d\theta
 &\le3,\label{eq:kappa-L1}\\
 \int_{-\pi}^{\pi}
 \kappa_{m,p}(\theta)e^{\iu k\theta}\,d\theta
 &=1,\qquad k\in\mathbb Z,\ |k|\le p.
 \label{eq:kappa-reproduce}
\end{align}
Under the assumptions in \cref{eq:abstract-assumptions},
\begin{equation}\label{eq:filtered-error}
 \norm{\widetilde U_{m,p}-e^{-\iu tH}}
 \le
 7e^{\tau/(2L_0)}
 \left(\frac{32L_0}{m}\right)^{2p}.
\end{equation}
In particular, $m=\lceil128L_0\rceil$ gives
\begin{equation}\label{eq:filtered-error-16}
 \norm{\widetilde U_{m,p}-e^{-\iu tH}}
 \le7e^{\tau/(2L_0)}16^{-p}.
\end{equation}
\end{proposition}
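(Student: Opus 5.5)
The argument splits into the kernel facts \cref{eq:kappa-L1,eq:kappa-reproduce}, which are purely Fourier-analytic, and the approximation bound \cref{eq:filtered-error}, which uses the structure of $H_\theta$.

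\textbf{Kernel facts.} Reality and evenness are immediate, because $c$ has real coefficients, so $|c(e^{\iu\theta})|^2$ is even, and $\cos(K\theta)$ is even. For the $L^1$ bound, use $|1+2\cos(K\theta)|\le3$ together with Parseval:
\[
\frac{1}{2\pi}\int|c|^2=\sum_j c_j^2.
\]
Write $c=g\cdot h$, where $g$ is the length-$K$ geometric sum and $h=(\frac1m\sum_{n<m}z^n)^p$. The coefficients of $h$ are nonnegative and sum to $1$, so every coefficient of $c$ is an average of $0/1$ values and hence $c_j\le1$. Also $\sum_jc_j=K\cdot1=K$. Therefore $\sum_j c_j^2\le\sum_jc_j=K$, which gives \cref{eq:kappa-L1}.

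For reproduction, let $r_k=\sum_jc_jc_{j+k}$ denote the autocorrelation of $c$. Then
\[
\int\kappa_{m,p}e^{\iu k\theta}\,d\theta=\frac{1}{K}\bigl(r_k+r_{k+K}+r_{k-K}\bigr),
\]
and we need this sum to equal $K$ for $|k|\le p$. Periodize: the sum $\sum_{s\in\mathbb Z}r_{k+sK}$ equals the correlation of $c$ reduced mod $K$ with itself. Since $g$ is a full period, the reduction of $c$ mod $K$ is the constant $\sum_j h_j=1$ on every residue class. The periodized sum is therefore exactly $K$.

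It remains to check that only $s\in\{-1,0,1\}$ contribute. The support of $r$ is $|k|\le L_c-1=2K-p-1$. For $|k|\le p$ we have $|k\pm2K|\ge 2K-p>2K-p-1$, so all other shifts vanish. This proves \cref{eq:kappa-reproduce}.

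\textbf{Approximation bound.} Pair $\theta$ with $-\theta$ and use evenness of $\kappa$ to replace the integrand by $F(\theta)=\tfrac12\bigl(e^{-\iu tH_\theta}+e^{-\iu tH_{-\theta}}\bigr)$. Write $s=\sin^2(\theta/2)$. Split the integral into a local region $|\theta|\le\theta_0$, with $\sin(\theta_0/2)\asymp 1/L_0$ (say $u\le1/(4L_0)$ so that $\norm{2uM}\le1/2$), and its complement.

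On the complement, use contractivity, $\norm{e^{-\iu tH_\theta}}\le1$ (boundary values of a contraction), together with localization of $|h(e^{\iu\theta})|^{2}=\bigl(\sin(m\theta/2)/(m\sin(\theta/2))\bigr)^{2p}\le (m|\sin(\theta/2)|)^{-2p}$. The remaining factor $|g|^2/(2\pi K)\cdot3$ is bounded after integration by roughly $3$. This gives a contribution of order $(CL_0/m)^{2p}$.

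On the local region, expand $F$ as a power series in $s$, or in $u^2$ (which is a function of $s$), truncated at degree $p$. Such a truncation is a trigonometric polynomial $T_p$ of degree $\le p$, which $\kappa$ reproduces exactly: $\int\kappa\,T_p=T_p(0)=e^{-\iu tH}$. The error is then $\int|\kappa|\,|F-T_p|$ on the local region, plus the tail $\int|\kappa|\,|T_p|$ outside it.

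The key analytic input is a bound on $F$ for complex $u$ in a disk of radius $\asymp1/L_0$. There $\norm{(\id+2uM)^{-1}}\le2$, and expanding
\[
H_\theta=\sqrt d\Lambda V^\dagger\bigl(2M-u-2u(2M-u)M(\id+2uM)^{-1}\bigr)V
\]
shows that the anti-Hermitian part of $-\iu tH_\theta$ is controlled through $V^\dagger M^2V\preceq\frac14\id$ rather than through $\norm{M}$. The quadratic-in-$M$ term always appears sandwiched as $V^\dagger M(\cdot)MV$, and $\norm{MV}\le1/2$. This gives $\norm{e^{-\iu tH_\theta}}\le e^{\tau |u|\cdot O(1)}$, and with the radius chosen $\asymp1/L_0$ this becomes $e^{\tau/(2L_0)}$. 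Cauchy estimates then bound the Taylor remainder by $e^{\tau/(2L_0)}(|s|/r)^{p+1}$, and multiplying against the localized kernel yields the factor $(32L_0/m)^{2p}$.

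\textbf{Main obstacle.} The hard step is this complex-$u$ growth bound. Nonnormal exponentials need not be bounded by the numerical range naively, but here $-\iu tH_\theta$ is an operator on $\HH$, and the bound $\norm{e^{X}}\le e^{\lambda_{\max}(\mathrm{Re}\,X)}$ applies once the Hermitian part of $X$ is estimated. Having established that, I would track the constants: the truncation tail $\int|\kappa||T_p|$ outside the local region reuses the same polynomial growth in $s$ against $(ms)^{-p}$ decay. Summing the local, far, and tail pieces should give the constant $7$. The specialization $m=\lceil128L_0\rceil$ then gives $(32L_0/m)^{2p}\le4^{-2p}=16^{-p}$, which is \cref{eq:filtered-error-16}.
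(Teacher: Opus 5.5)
Your proposal is correct in substance. The approximation half follows the paper's argument:
\begin{itemize}
\item pair $\pm\theta$ by evenness;
\item expand the even part of the boundary response in powers of $\sin^2(\theta/2)$, using Cauchy estimates on a disk of radius $\asymp1/L_0$;
\item bound the growth on that disk through $\norm{MV}\le1/2$ and $\norm{(\id+2uM)^{-1}}\le2$;
\item reproduce the truncation exactly;
\item estimate $\beta(\theta)$ times the remainder separately near and away from $\theta=0$, against the unit-mass Fej\'er density.
\end{itemize}
Your expansion of $H_\theta$ is equivalent to the paper's identity $t(H_\theta-H)=-\tau uV^\dagger(\id+4M^2)(\id+2uM)^{-1}V$, and your numerical-abscissa bound does the job of its Duhamel bound.

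The genuine difference is in the kernel facts. The paper factors $|c(e^{\iu\theta})|^2$ into the Fej\'er density times $\beta(\theta)$. It then expands $\beta$ into modes $|\ell|\le K-p$ and applies the partition of unity $\phi_K(n)+\phi_K(n+K)+\phi_K(n-K)=1$, valid for $|n|\le K$, mode by mode. You instead work with the coefficient autocorrelation. One factor of $c$ is a full-period box and the other has coefficients summing to one, so $c$ reduced modulo $K$ is identically $1$. Hence $\sum_s r_{k+sK}=K$, and the support bound $|k|\le2K-p-1$ removes all shifts with $|s|\ge2$. Similarly, your $L^1$ bound uses $0\le c_j\le1$, $\sum_jc_j=K$ and Parseval, where the paper uses $\beta\le1$. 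Your route is more elementary and directly yields the identity $\Sigma_0+2\Sigma_K=K$ used later in \cref{lem:single-clock}. The paper's route keeps the factorization $|c|^2=(\text{Fej\'er density})\cdot\beta$, which it reuses unchanged in the error estimate.

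Two details need tightening to reach the stated constants.
\begin{itemize}
\item Truncate only in $s=\sin^2(\theta/2)$. A polynomial in $u^2=\tan^2(\theta/2)=s/(1-s)$ is not a trigonometric polynomial, so the reproduction identity does not apply to it.
\item Your sample radius $|u|\le1/(4L_0)$ only gives $\norm{t(H_\theta-H)}\le\tau/L_0$. The paper takes $r_\ast=(16L_0)^{-1}$ in $x=\sin(\theta/2)$, so that $|u|\le2r_\ast$ and the exponent is $8\tau r_\ast=\tau/(2L_0)$. It splits at $|x|=r_\ast/2$, which is where $2/(mr_\ast)=32L_0/m$ comes from. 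The far-region bound $\tfrac43+1=\tfrac73$ dominates the near-region bound $\tfrac43$, and multiplying it by the factor $3$ from $|1+2\cos(K\theta)|$ gives the constant $7$.
\end{itemize}
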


\begin{proof}
\emph{Reproduction and localization.}
Put
\[
 \beta(\theta)
 =
 \left|
 \frac1m\sum_{n=0}^{m-1}e^{\iu n\theta}
 \right|^{2p}
 =
 \sum_{|\ell|\le K-p}\widehat\beta_\ell e^{\iu\ell\theta}.
\]
Its coefficients satisfy
$\sum_\ell\widehat\beta_\ell=\beta(0)=1$.  Directly
expanding the geometric sum of length $K$ gives, for every integer
$n$,
\begin{equation}\label{eq:geometric-moment}
 \int_{-\pi}^{\pi}
 \frac{\left|\sum_{j=0}^{K-1}e^{\iu j\theta}\right|^2}
 {2\pi K}
 e^{\iu n\theta}\,d\theta
 =
 \phi_K(n),
 \qquad
 \phi_K(n):=\left(1-\frac{|n|}{K}\right)_+,
 \qquad
 (y)_+:=\max\{y,0\}.
\end{equation}
If $|n|\le K$, then
\[
 \phi_K(n)+\phi_K(n+K)+\phi_K(n-K)=1.
\]
For $|k|\le p$ and $|\ell|\le K-p$, one has
$|k+\ell|\le K$.  Expanding $\beta$ in
\cref{eq:phase-kernel}, applying \cref{eq:geometric-moment} to its
three shifted terms, and summing over $\ell$ therefore proves
\cref{eq:kappa-reproduce}.

The reproduction identity has the following useful interpretation.
If
\[
 f(\theta)=\sum_{|k|\le p}f_ke^{\iu k\theta}
\]
is any operator-valued trigonometric polynomial of degree at most
$p$, then
\[
 \begin{aligned}
 \int_{-\pi}^{\pi}
 \kappa_{m,p}(\theta)f(\theta)\,d\theta
 &=
 \sum_{|k|\le p}f_k
 \int_{-\pi}^{\pi}
 \kappa_{m,p}(\theta)e^{\iu k\theta}\,d\theta\\
 &=
 \sum_{|k|\le p}f_k
 =f(0).
 \end{aligned}
\]
Thus integration against $\kappa_{m,p}$ agrees exactly with
evaluation at $\theta=0$ on trigonometric polynomials of degree at
most $p$.

The same geometric-sum formula gives
\begin{equation}\label{eq:phase-localization}
 0\le\beta(\theta)
 \le
 \min\left\{
 1,\frac1{(m|\sin(\theta/2)|)^{2p}}
 \right\}.
\end{equation}
The density on the left-hand side of \cref{eq:geometric-moment} is
nonnegative and integrates to one.  Since
$|1+2\cos(K\theta)|\le3$ and $\beta\le1$, this proves
\cref{eq:kappa-L1}.  Reality and evenness are immediate from
\cref{eq:phase-kernel}.

\emph{Local analytic expansion.}
Set $r_\ast=(16L_0)^{-1}$.  For complex $x$ near zero, use the analytic
branches of
\[
 u(x)=\frac{x}{\sqrt{1-x^2}},
 \qquad
 \mathcal G(u):=(\id+2uM)^{-1}.
\]
The rational formula for $H_\theta$ in \cref{prop:transfer} defines
its complex continuation near $x=0$ and gives the exact identities
\begin{align*}
 (2M-u\id)\mathcal G(u)-2M
 &=-u(\id+4M^2)\mathcal G(u),\\
 t(H_\theta-H)
 &=-\tau uV^\dagger(\id+4M^2)\mathcal G(u)V,
 \qquad u=\tan(\theta/2).
\end{align*}
If $2|u|\norm{M}\le1/2$, then $\norm{\mathcal G(u)}\le2$.  Since
$\mathcal G(u)$
commutes with $M$ and $\norm{MV}\le1/2$,
\[
 \norm{V^\dagger(\id+4M^2)\mathcal G(u)V}
 \le\norm{\mathcal G(u)}(1+4\norm{MV}^2)\le4.
\]
For $|x|\le r_\ast\le1/16$, one has $|u(x)|\le2r_\ast$ and
$2|u(x)|\norm{M}\le1/4$.  The resolvents for both signs of $u(x)$
are therefore analytic in a neighborhood of the closed disk, and
\[
 \norm{t(H_{\pm2\arcsin x}-H)}
 \le8\tau r_\ast=\frac{\tau}{2L_0}.
\]
The Duhamel bound $\norm{e^{X+Y}}\le e^{\norm{Y}}$ for
skew-Hermitian $X$ now shows that the even analytic function
\[
 h(x)
 :=
 \frac12\left(
 e^{-\iu tH_{2\arcsin x}}
 +
 e^{-\iu tH_{-2\arcsin x}}
 \right)
\]
obeys $\sup_{|x|\le r_\ast}\norm{h(x)}\le e^{\tau/(2L_0)}$.
The operator-valued Cauchy estimate consequently gives
\begin{equation}\label{eq:local-even-series}
 h(x)=\sum_{j\ge0}h_jx^{2j},
 \qquad
 h_0=e^{-\iu tH},
 \qquad
 \norm{h_j}\le e^{\tau/(2L_0)}r_\ast^{-2j}.
\end{equation}

\emph{Approximation error.}
For almost every real $\theta\in[-\pi,\pi]$, set
$x=\sin(\theta/2)$.  On this interval $\arcsin x=\theta/2$, so
$h(x)$ is the even part of the boundary response.  Define its
truncation
\[
 h_{<p}(\theta)
 =
 \sum_{j=0}^{p-1}h_j\sin^{2j}(\theta/2).
\]
Two facts connect this Taylor truncation to the reproduction identity.
First, because
\[
 \sin^2(\theta/2)
 =
 \frac{2-e^{\iu\theta}-e^{-\iu\theta}}4,
\]
the function $h_{<p}$ is a trigonometric polynomial of degree at
most $p-1$.  Second,
\[
 h_{<p}(0)=h_0=e^{-\iu tH},
\]
since every term with $j\ge1$ contains a positive power of
$\sin(\theta/2)$.  The evaluation rule following
\cref{eq:kappa-reproduce} therefore gives
\[
 \int_{-\pi}^{\pi}
 \kappa_{m,p}(\theta)h_{<p}(\theta)\,d\theta
 =
 h_{<p}(0)
 =e^{-\iu tH}.
\]

Since $\kappa_{m,p}$ is even, the boundary response in
\cref{eq:Ump} may be replaced by its even part $h(x)$, where
$x=\sin(\theta/2)$.  Combining this observation with the preceding
exact reproduction gives the central error identity
\begin{equation}\label{eq:central-error-identity}
 \widetilde U_{m,p}-e^{-\iu tH}
 =
 \int_{-\pi}^{\pi}
 \kappa_{m,p}(\theta)
 \bigl(h(x)-h_{<p}(\theta)\bigr)\,d\theta.
\end{equation}
Thus it remains to control the Taylor remainder after localization:
the local Taylor estimate handles $|x|\le r_\ast/2$, while
$\beta(\theta)$ suppresses the complementary region.

When $|x|\le r_\ast/2$, \cref{eq:local-even-series} yields
\[
 \norm{h(x)-h_{<p}(\theta)}
 \le
 \frac43e^{\tau/(2L_0)}
 \left(\frac{|x|}{r_\ast}\right)^{2p}.
\]
Multiplication by \cref{eq:phase-localization} gives, with the value
at $x=0$ understood by continuity,
\[
 \beta(\theta)\norm{h(x)-h_{<p}(\theta)}
 \le
 \frac43e^{\tau/(2L_0)}(mr_\ast)^{-2p}.
\]
When $|x|\ge r_\ast/2$, the boundary response from
\cref{prop:transfer} is unitary for almost every $\theta$, so
$\norm{h(x)}\le1$.  Moreover,
\begin{align*}
 \beta(\theta)\norm{h_{<p}(\theta)}
 &\le
 e^{\tau/(2L_0)}(mr_\ast)^{-2p}
 \sum_{j=0}^{p-1}
 \left(\frac {r_\ast}{|x|}\right)^{2(p-j)}\\
 &\le
 \frac43e^{\tau/(2L_0)}
 \left(\frac2{mr_\ast}\right)^{2p},
\end{align*}
while $\beta(\theta)\norm{h(x)}\le(2/(mr_\ast))^{2p}$.
Because $e^{\tau/(2L_0)}\ge1$, both regions therefore satisfy
\[
 \beta(\theta)\norm{h(x)-h_{<p}(\theta)}
 \le
 \frac73e^{\tau/(2L_0)}
 \left(\frac2{mr_\ast}\right)^{2p}.
\]
Using \cref{eq:central-error-identity,eq:phase-kernel} and the
factorization
\[
 |c(e^{\iu\theta})|^2
 =
 \left|\sum_{j=0}^{K-1}e^{\iu j\theta}\right|^2
 \beta(\theta),
\]
we obtain
\begin{align*}
 \norm{\widetilde U_{m,p}-e^{-\iu tH}}
 &\le
 \int_{-\pi}^{\pi}
 \frac{\left|\sum_{j=0}^{K-1}e^{\iu j\theta}\right|^2}
      {2\pi K}
 |1+2\cos(K\theta)|
 \beta(\theta)
 \norm{h(x)-h_{<p}(\theta)}
 \,d\theta\\
 &\le
 3\sup_{\theta\in[-\pi,\pi]}
 \left[
 \beta(\theta)
 \norm{h(x)-h_{<p}(\theta)}
 \right]\\
 &\le
 7e^{\tau/(2L_0)}
 \left(\frac2{mr_\ast}\right)^{2p}.
\end{align*}
Here the second inequality uses
$|1+2\cos(K\theta)|\le3$ and the unit integral in
\cref{eq:geometric-moment}.  Since $r_\ast=(16L_0)^{-1}$, one has
$2/(mr_\ast)=32L_0/m$, proving \cref{eq:filtered-error}.
If $m=\lceil128L_0\rceil$, then $mr_\ast\ge8$, so
$(2/(mr_\ast))^{2p}\le16^{-p}$, proving
\cref{eq:filtered-error-16}.
\end{proof}

\subsection{Single-clock realization of the signed phase kernel}

A clock circuit using the same input and output state has a
nonnegative clock power spectrum and therefore cannot realize the
signed factor in \cref{eq:phase-kernel}.  The key observation is that
one routed clock circuit may use different input and output clock
states.  A phase twist then factors the entire signed phase kernel as
one clock cross-spectrum, avoiding an LCU over two routed circuits.
We call this input--output pair, implemented on a single routing clock
register, a phase-twisted clock.

Let $\omega_{\rm tw}=e^{\iu\pi/3}$ and put
\begin{equation}\label{eq:lambda}
 \lambda
 :=
 \frac1K\left(
 2\sum_jc_j^2+\sum_jc_{j+K}c_j
 \right).
\end{equation}
Define the coefficient-normalized generating polynomials of the input
and output clock states by
\begin{equation}\label{eq:clock-polynomials}
 a(z)=\frac{c(z)(\omega_{\rm tw}+z^K)}{\sqrt{K\lambda}},
 \qquad
 b(z)=\frac{c(z)(1+\omega_{\rm tw}z^K)}{\sqrt{K\lambda}}.
\end{equation}
Equivalently, their coefficients are
\begin{equation}\label{eq:clock-coefficients}
 a_j=\frac{\omega_{\rm tw}c_j+c_{j-K}}{\sqrt{K\lambda}},
 \qquad
 b_j=\frac{c_j+\omega_{\rm tw}c_{j-K}}{\sqrt{K\lambda}},
 \qquad
 0\le j<R.
\end{equation}
Let $\mathbf a=(a_0,\ldots,a_{R-1})$ and
$\mathbf b=(b_0,\ldots,b_{R-1})$ be their coefficient vectors, padded
to
the common horizon
\begin{equation}\label{eq:horizons}
 R=L_c+K=3mp-p.
\end{equation}

The normalization in \cref{eq:lambda} is chosen so that both
clock-state amplitude vectors have unit norm, while the opposite
placements of $\omega_{\rm tw}$ make their clock cross-spectrum equal the
signed phase kernel
$\kappa_{m,p}/\lambda$.  The following lemma verifies these facts and
identifies the resulting clock block.

\begin{lemma}[Phase-twisted input--output clock realization]
\label{lem:single-clock}
The vectors $\mathbf a=(a_j)_{j=0}^{R-1}$ and
$\mathbf b=(b_j)_{j=0}^{R-1}$ defined by
\cref{eq:clock-coefficients}, with the common horizon $R$ from
\cref{eq:horizons}, are normalized, and $1\le\lambda\le2$.  In the
notation of \cref{eq:Pa}, their
prepare--route--unprepare block satisfies
\begin{equation}\label{eq:single-clock-block}
 P_{\mathbf b\leftarrow\mathbf a}(1)
 =J_{\mathbf b}^\dagger\mathcal R_R(1)J_{\mathbf a}
 =\frac{\widetilde U_{m,p}}{\lambda}.
\end{equation}
\end{lemma}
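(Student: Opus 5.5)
The proof has three parts: check that both clock vectors fit in the horizon and have unit norm, bound $\lambda$, and identify the cross-spectrum with $\kappa_{m,p}/\lambda$ so that \cref{lem:cross-spectral-reuse} applies.

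\emph{Support and normalization.} Since $c_j$ is supported on $0\le j<L_c$, the shifted sequence $c_{j-K}$ is supported on $K\le j<L_c+K=R$. Hence $a(z)$ and $b(z)$ are polynomials of degree less than $R$, and their coefficient vectors are exactly $\mathbf a,\mathbf b$ with no truncation.

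For the norm of $\mathbf a$, the coefficients $c_j$ are real, so
\[
 \sum_j|\omega_{\rm tw}c_j+c_{j-K}|^2=2\sum_jc_j^2+2\operatorname{Re}(\omega_{\rm tw})\sum_jc_jc_{j-K}.
\]
Since $2\operatorname{Re}\omega_{\rm tw}=1$, this equals $K\lambda$ by \cref{eq:lambda}. The same computation applies to $\mathbf b$ because $|\omega_{\rm tw}|=1$. Thus both vectors are normalized.

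\emph{Bounds on $\lambda$.} Parseval gives $\sum_jc_j^2=\frac1{2\pi}\int|c(e^{\iu\theta})|^2\,d\theta$. Using the factorization $|c|^2=|\sum_{j<K}e^{\iu j\theta}|^2\beta$, this equals $K$ times the integral of the Fej\'er density in \cref{eq:geometric-moment} against $\beta$. By that formula,
\[
 \sum_jc_j^2=K\sum_\ell\widehat\beta_\ell\,\phi_K(\ell).
\]
Similarly, $\sum_jc_{j+K}c_j=K\sum_\ell\widehat\beta_\ell\,\phi_K(\ell+K)$. Because $\beta$ is real and even, the $\phi_K(\ell-K)$ term contributes equally to the shifted correlation. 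Writing $s=\sum_\ell\widehat\beta_\ell\,\phi_K(\ell+K)$ and using $\phi_K(\ell)+\phi_K(\ell+K)+\phi_K(\ell-K)=1$ together with $\sum_\ell\widehat\beta_\ell=1$, one gets $\sum_\ell\widehat\beta_\ell\,\phi_K(\ell)=1-2s$. Hence
\[
 \lambda=2(1-2s)+s=2-3s.
\]
Here $s=\frac1K\sum_jc_{j+K}c_j\ge0$ since the $c_j$ are nonnegative, so $\lambda\le2$.

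For $\lambda\ge1$ we need $s\le1/3$. Note $\phi_K(\ell+K)=(-\ell/K)_+$ with $|\ell|\le K-p$, and $\beta=|\frac1m\sum_{n<m}e^{\iu n\theta}|^{2p}$ is the squared modulus of a polynomial. Following the pattern of the $L^1$ argument, I plan to show $s\le\sum_\ell\widehat\beta_\ell\,\phi_K(\ell)$, which gives $s\le1-2s$, i.e.\ $s\le1/3$.

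A cleaner route, which I would try first, goes through $c$ directly. Cauchy--Schwarz gives $\sum_jc_{j+K}c_j\le\sum_jc_j^2$, which only yields $\lambda\ge\frac3K\sum_jc_jc_{j+K}$ and is not by itself enough. Instead, observe that $c_j=\sum_{i<K}g_{j-i}$, where $g$ are the coefficients of $(\frac1m\sum_{n<m}z^n)^p$, so $\sum g=1$. Thus $c_j\le1$, and the product $c_jc_{j+K}$ has total mass controlled by how much of $g$ overlaps across windows. Concretely, $\sum_jc_j^2=K-\sum_\ell|\ell|\,(g*\tilde g)_\ell$ (with $\tilde g_j=g_{-j}$), and $\sum_jc_jc_{j+K}=\sum_{\ell}(\ell)_+\,(g*\tilde g)_\ell$. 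Both are controlled by the mean absolute deviation of the autocorrelation of $g$, whose support width is at most $p(m-1)\le K$. This gives $\lambda\ge2-2E/K+E/(2K)\ge1$, where $E=\sum_\ell|\ell|\,(g*\tilde g)_\ell\le p(m-1)<K$. This inequality is where I expect the main (though elementary) work.

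\emph{Cross-spectrum identity.} For $z=e^{\iu\theta}$,
\[
 \overline{b(z)}a(z)=\frac{|c(z)|^2}{K\lambda}\,(1+\bar\omega_{\rm tw}z^{-K})(\omega_{\rm tw}+z^K)=\frac{|c(z)|^2}{K\lambda}\bigl(\omega_{\rm tw}+z^K+z^{-K}+\bar\omega_{\rm tw}\bigr).
\]
Since $\omega_{\rm tw}+\bar\omega_{\rm tw}=1$, the bracket equals $1+2\cos(K\theta)$. Hence $\rho_{\mathbf b\leftarrow\mathbf a}=\kappa_{m,p}/\lambda$ by \cref{eq:phase-kernel}. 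Applying \cref{lem:cross-spectral-reuse} to the normalized pair $\mathbf a,\mathbf b$ and comparing with \cref{eq:Ump} yields \cref{eq:single-clock-block}.
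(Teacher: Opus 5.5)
Your support and normalization check, the upper bound $\lambda\le2$ from $\lambda=2-3s$ with $s\ge0$, the cross-spectrum factorization, and the final appeal to \cref{lem:cross-spectral-reuse} are all correct and agree with the paper. The gap is the lower bound $\lambda\ge1$, which the proposal never proves.

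In your first route you reduce it to $s\le\sum_\ell\widehat\beta_\ell\phi_K(\ell)$, but you only say you plan to show this. In your second route you correctly obtain $\Sigma_0:=\sum_jc_j^2=K-E$ and $\Sigma_K:=\sum_jc_jc_{j+K}=E/2$, hence $\lambda=2-3E/(2K)$ exactly. The bound you then use, $E\le p(m-1)<K$, gives only $\lambda>1/2$. Closing the argument needs $E\le2K/3$. The support-width bound yields only $\lambda\ge2-3(m-1)/(2m)$, which is below $1$ as soon as $m\ge4$, and $m\ge128$ in the application.

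The missing step is the Cauchy--Schwarz inequality you set aside. The inequality $s\le\sum_\ell\widehat\beta_\ell\phi_K(\ell)$ of your first route is literally $\Sigma_K\le\Sigma_0$. Cauchy--Schwarz gives $\Sigma_K\le(\sum_jc_{j+K}^2)^{1/2}(\sum_jc_j^2)^{1/2}=\Sigma_0$. You already derived the identity $\Sigma_0+2\Sigma_K=K$, in the form $\sum_\ell\widehat\beta_\ell\phi_K(\ell)=1-2s$. Combined with it, Cauchy--Schwarz gives $3\Sigma_0\ge K$, i.e.\ $s\le1/3$ and $\lambda\ge1$. In the language of your second route this is $E/2\le K-E$. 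Cauchy--Schwarz is insufficient only when used without that identity, as you did.

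The paper applies it differently but to the same effect. The nonnegative $c_j$ sum to $c(1)=K$ over $L_c=2mp-p$ positions, so $\Sigma_0\ge K^2/L_c=Km/(2m-1)>K/2$. Therefore $\lambda=\tfrac12+\tfrac{3\Sigma_0}{2K}>5/4$.
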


\begin{proof}
For brevity, set
\[
 \Sigma_0=\sum_jc_j^2,
 \qquad
 \Sigma_K=\sum_jc_{j+K}c_j.
\]
The zero-frequency case of \cref{eq:kappa-reproduce}, together with
\cref{eq:phase-kernel} and Parseval's identity, gives
\begin{equation}\label{eq:clock-overlap-identity}
 \Sigma_0+2\Sigma_K=K.
\end{equation}
The coefficients $c_j$ are nonnegative, are supported on $L_c$
positions, and sum to $c(1)=K$.  Hence Cauchy--Schwarz gives
\[
\frac{\Sigma_0}{K}
 \ge\frac K{L_c}
 =\frac{m}{2m-1}>\frac12.
\]
Since the coefficients are nonnegative, $\Sigma_K\ge0$, so
\cref{eq:clock-overlap-identity} also gives $\Sigma_0\le K$.
By \cref{eq:clock-overlap-identity},
\[
 \lambda
 =\frac{2\Sigma_0+\Sigma_K}{K}
 =\frac12+\frac{3\Sigma_0}{2K},
\]
which proves $1\le\lambda\le2$.

Here $\norm{\cdot}_{\ell^2}$ denotes the Euclidean norm of a
polynomial's coefficient vector.  Since
$\omega_{\rm tw}+\overline{\omega_{\rm tw}}=1$, direct expansion gives
\[
 \norm{c(z)(\omega_{\rm tw}+z^K)}_{\ell^2}^2
 =
 \norm{c(z)(1+\omega_{\rm tw}z^K)}_{\ell^2}^2
 =
 2\Sigma_0+\Sigma_K=K\lambda.
\]
Thus $\norm{\mathbf a}_2=\norm{\mathbf b}_2=1$.
For $z=e^{\iu\theta}$,
\begin{align}
 \overline{b(z)}a(z)
 &=\frac{|c(z)|^2}{K\lambda}
   (1+\overline{\omega_{\rm tw}}z^{-K})
   (\omega_{\rm tw}+z^K)\nonumber\\
 &=\frac{|c(z)|^2}{K\lambda}(1+z^K+z^{-K})
 =\frac{2\pi}{\lambda}\kappa_{m,p}(\theta).
 \label{eq:cross-factorization}
\end{align}
Hence the clock cross-spectrum
$\rho_{\mathbf b\leftarrow\mathbf a}$ in \cref{eq:rhoa} is exactly
$\kappa_{m,p}/\lambda$.  Therefore
\cref{eq:boundary-phase-integral} gives
\cref{eq:single-clock-block}.
\end{proof}

By \cref{eq:clock-polynomials,eq:horizons}, each clock-state generating
polynomial has degree at most $R-1$.  One application of this block
encoding therefore uses exactly $R=3mp-p$ Cayley queries.

The proposition supplies an exponentially accurate signed phase
integral, and the lemma realizes it as a block encoding with bounded
normalization.  Oblivious amplitude amplification now converts
that block into a simulator.

\begin{theorem}[Abstract logarithmic-precision theorem]
\label{thm:abstract-log}
Let $M=M^\dagger$ act on a finite-dimensional space $\EE$, let
$V:\HH\to\EE$ satisfy $V^\dagger V=\id$, and assume $t\ge0$,
$\Lambda>0$, $d\ge1$, and \cref{eq:abstract-assumptions}.  Simulation
of $e^{-\iu tH}$ with operator-norm error $0<\eps\le1/2$ uses
\begin{equation}\label{eq:abstract-log-bound}
 O\!\left(\tau+L_0\log\frac{2}{\eps}\right)
\end{equation}
Cayley queries.
\end{theorem}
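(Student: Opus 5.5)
We combine \cref{prop:filtered-error}, \cref{lem:single-clock}, and \cref{lem:purification}; the work is in choosing $m$ and $p$.

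\textbf{Parameters.} Take $m=\lceil128L_0\rceil$, so that \cref{eq:filtered-error-16} gives
\[
 \norm{\widetilde U_{m,p}-e^{-\iu tH}}\le 7e^{\tau/(2L_0)}16^{-p}.
\]
We want this error $\delta$ to satisfy $\delta\le\eps/4$; since $\eps\le1/2$, this also gives $\delta<1/4$. It suffices to have
\[
 p\log 16\ge \frac{\tau}{2L_0}+\log\frac{28}{\eps}.
\]
So we take
\[
 p=\left\lceil\frac{1}{\log16}\left(\frac{\tau}{2L_0}+\log\frac{28}{\eps}\right)\right\rceil .
\]
Because $\eps\le 1/2$, we have $\log(28/\eps)=O(\log(2/\eps))$, and hence
\[
 p=O\!\left(\frac{\tau}{L_0}+\log\frac{2}{\eps}\right).
\]

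\textbf{Block encoding.} By \cref{lem:single-clock}, preparing $\ket{\mathbf a}_C$, running the routed circuit $\mathcal R_R(1)$, and projecting onto $\ket{\mathbf b}_C$ with zero private amplitude yields $\widetilde U_{m,p}/\lambda$, where $1\le\lambda\le2$.

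To treat this as a clean block of a unitary circuit, use the finite completion $W^{\rm fin}_{t,R}$ of \cref{lem:finite-completion} on $\HH\oplus\mathcal F_{R-1}$. The private register starts in a fixed state, playing the role of $0$, and the projection at the end is onto that same state. This is legitimate because the completion agrees with the continuous construction on every reachable forward state, so $P_{\mathbf b\leftarrow\mathbf a}(1)$ is unchanged. The output clock projection is implemented by unpreparing $\mathbf b$, that is, by a unitary mapping $\ket0_C\mapsto\ket{\mathbf b}_C$ applied in adjoint. Clock preparation uses no queries. Each $S^{(k)}$ contains one Cayley query $U_M$, so the block circuit uses $R=3mp-p$ Cayley queries, and its adjoint uses the same number of calls to $U_M^\dagger$.

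\textbf{Amplification and counting.} Apply \cref{lem:purification} with $Q=\widetilde U_{m,p}$ and $U=e^{-\iu tH}$. This takes three calls to the block circuit or its adjoint and gives error $4\delta\le\eps$. The total number of Cayley queries is
\[
 3R=O(mp)=O\!\left(L_0\Bigl(\frac{\tau}{L_0}+\log\frac2\eps\Bigr)\right)=O\!\left(\tau+L_0\log\frac2\eps\right),
\]
using $m=O(L_0)$, which holds because $L_0\ge1$. This is \cref{eq:abstract-log-bound}.

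\textbf{Main obstacle.} The step needing the most care is confirming that the finite unitary completion gives a genuine clean block encoding with exactly the block $\widetilde U_{m,p}/\lambda$. One must check that the adjoint calls inside OAA act on the finite-dimensional space, where inverses are well defined, and that a fixed basis vector serves as the private $\ket0$ without altering the projected block. The query arithmetic above is routine.
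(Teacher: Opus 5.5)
Your proposal is correct and follows the paper's proof: the same choice $m=\lceil128L_0\rceil$, a $p$ of order $\tau/L_0+\log(2/\eps)$ that makes the kernel error in \cref{eq:filtered-error-16} a constant fraction of $\eps$, the phase-twisted clock block $\widetilde U_{m,p}/\lambda$ from \cref{lem:single-clock}, and one OAA round via \cref{lem:purification} costing $3R<9mp$ Cayley queries. The only differences are cosmetic. You target error $\eps/4$ and spend the whole budget, whereas the paper targets $\eps/8$ to reserve half for the later gate-level implementation. You also spell out the role of the finite completion in \cref{lem:finite-completion}, which the paper uses only implicitly.
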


\begin{proof}
Choose
\begin{equation}\label{eq:mp-choice}
 m=\lceil128L_0\rceil,
 \qquad
 p=\left\lceil
 \frac{\tau/(2L_0)+\log(56/\eps)}{\log16}
 \right\rceil.
\end{equation}
Then \cref{eq:filtered-error-16} is at most $\eps/8$.  By
\cref{lem:single-clock}, one phase-twisted clock circuit has clean
block $\widetilde U_{m,p}/\lambda$, where
$1\le\lambda\le2$.  Apply \cref{lem:purification} to the approximation
$\widetilde U_{m,p}$ of $e^{-\iu tH}$.  One OAA
round converts this block to a simulator with operator-norm error
below $\eps/2$, leaving the other half of the budget for circuit
implementation.  The three calls to the length-$R$ routed clock circuit use at
most $3R=9mp-3p<9mp$ Cayley queries.  Since
\[
 m=O(L_0),
 \qquad
 p=O\!\left(1+\frac{\tau}{L_0}
                 +\log\frac{2}{\eps}\right),
\]
this proves \cref{eq:abstract-log-bound}.
\end{proof}

\begin{proof}[Proof of \cref{thm:main}]
For the sparse-entry construction,
\[
 L_0=\max\left\{1,\frac{\sqrt d}{2}\right\}\le\sqrt d.
\]
Each Cayley call uses $O(1)$ original sparse-oracle calls, so
\cref{thm:abstract-log} gives the query bound in~\cref{eq:query-bound}.
If $t\Lambda\ge\log(2/\eps)$, then
$\sqrt d\log(2/\eps)\le(t\Lambda)\sqrt d=\tau$, so
\cref{eq:query-bound} is $O(\tau)$.
\end{proof}

\section{Finite-gate realization of the coupling}
\label{sec:gate-realization}

The arbitrary unitary completion in \cref{lem:finite-completion} is
sufficient for a query upper bound but need not admit an efficient gate
decomposition.  We now replace it by a structured Fourier-diagonal
unitary.  The approximation is required only on the finite family of
histories that can occur in a forward clock computation.

\subsection{An explicit finite coupling circuit}

The address workspace decomposes orthogonally as
\begin{equation}\label{eq:active-decomposition}
 \EE=V\HH\oplus(V\HH)^\perp.
\end{equation}
\crefrange{eq:C0}{eq:A0} show that $W_t$ is the
identity on pointwise fields in $(V\HH)^\perp$.  On $V\HH$, use
$V\xi\leftrightarrow\xi$, set
\begin{equation}\label{eq:scaled-coordinate}
 \begin{aligned}
 x&=\sqrt d\,\Lambda r,
 &0\le x\le\tau,\\
 \widetilde\psi(x)&=
 \psi\!\left(\frac{x}{\sqrt d\,\Lambda}\right),\\
 \widetilde q(x)&=(\sqrt d\,\Lambda)^{-1/2}
 q\!\left(\frac{x}{\sqrt d\,\Lambda}\right),
 &\widetilde g(x)&=(\sqrt d\,\Lambda)^{-1/2}
 g\!\left(\frac{x}{\sqrt d\,\Lambda}\right).
 \end{aligned}
\end{equation}
and suppress both the unchanged system label and the tildes below.
The active part of the coupling is the scalar-channel unitary $w_\tau$:
\begin{align}
 \psi(\tau)
 &=e^{-\tau}\xi-\iu\sqrt2\int_0^\tau
 e^{-(\tau-x)}q(x)\,dx,\label{eq:wtau-public}\\
 g(x)
 &=q(x)-\iu\sqrt2e^{-x}\xi
 -2\int_0^xe^{-(x-y)}q(y)\,dy.\label{eq:wtau-history}
\end{align}
Thus \cref{eq:wtau-public,eq:wtau-history} are precisely
\cref{eq:psi-def,eq:Wt-def} restricted to the active subspace
$V\HH$ and conjugated by the isometric rescaling
\cref{eq:scaled-coordinate}; the component in $(V\HH)^\perp$
remains unchanged.
Every system amplitude undergoes the same scalar map.  We now give the
circuit that approximates this map.  The continuous all-pass
representation that motivates its Fourier phase and identifies its
target is deferred to the proof of \cref{lem:coupling-comparison}.

Choose an even integer $G\ge\max\{2,2\tau\}$, put
$\Delta=\tau/G\le1/2$, and set
\begin{equation}\label{eq:abdelta}
 r_\Delta=e^{-\Delta},
 \qquad \sigma_\Delta=\sqrt{1-r_\Delta^2}.
\end{equation}
Define the unit-modulus phase
\begin{equation}\label{eq:sa}
 s_\Delta(e^{\iu\theta})
 =\frac{r_\Delta-e^{-\iu\theta}}
 {1-r_\Delta e^{-\iu\theta}}.
\end{equation}
Choose an integer $L_{\rm io}\ge1$ such that
$r_\Delta^{L_{\rm io}}\le1/2$ and
$M_{\rm cyc}:=G+2L_{\rm io}$ is a power of two, and partition the cyclic
workspace into an $L_{\rm io}$-cell input interval, a $G$-cell
interaction interval, and an $L_{\rm io}$-cell output interval.  Its
basis is indexed modulo $M_{\rm cyc}$ as
\[
 -L_{\rm io},\ldots,-1,\quad 0,\ldots,G-1,\quad
 G,\ldots,G+L_{\rm io}-1.
\]
For any positive integer $M$, define the quantum Fourier transform by
\[
 \mathsf F_M\ket j
 :=\frac1{\sqrt M}\sum_{k=0}^{M-1}
 e^{-2\pi\iu jk/M}\ket k.
\]
When $M$ is a power of two, the standard QFT circuit uses
$O((\log M)^2)$ 1- and 2-qubit gates.
Define
\begin{align}
 \mathsf D_{\Delta,M_{\rm cyc}}
 &:=\diag\!\left(
 s_\Delta(e^{2\pi\iu k/M_{\rm cyc}})
 \right)_{k=0}^{M_{\rm cyc}-1},\label{eq:cyclic-phase}\\
 \mathsf A^{\rm cyc}_{\Delta,M_{\rm cyc}}
 &:=\mathsf F_{M_{\rm cyc}}^\dagger
 \mathsf D_{\Delta,M_{\rm cyc}}\mathsf F_{M_{\rm cyc}}.
 \label{eq:cyclic-allpass}
\end{align}
This is an exact finite unitary before finite-precision implementation.

Fix the distinguished address $\mathsf p=-1$ modulo $M_{\rm cyc}$
to encode the logical public mode, and define the
logical inclusion
\begin{equation}\label{eq:grid-inclusion}
 \iota_{\rm grid}(\xi,q_0,\ldots,q_{G-1})
 =\ket{\mathsf p}\xi+
  \sum_{j=0}^{G-1}\ket j q_j.
\end{equation}
The normalized geometric input and output encoding states are
\begin{align}
 \ket{\chi_{\rm in}}
 &=\frac{\iu\sigma_\Delta}
 {\sqrt{1-r_\Delta^{2L_{\rm io}}}}
   \sum_{h=0}^{L_{\rm io}-1}r_\Delta^h\ket{-h-1},
 \label{eq:finite-input-encoding-state}\\
 \ket{\chi_{\rm out}}
 &=-\frac{\iu\sigma_\Delta}
 {\sqrt{1-r_\Delta^{2L_{\rm io}}}}
   \sum_{h=0}^{L_{\rm io}-1}r_\Delta^h\ket{G+h}.
 \label{eq:finite-output-encoding-state}
\end{align}
We need efficient preparations that fix every interaction mode, not
merely preparations with the prescribed action on one input state.
To construct them explicitly, let
\[
 \ket{v_{\rm in}}=-\iu\ket{\chi_{\rm in}},
 \qquad
 \ket{v_{\rm out}}=\iu\ket{\chi_{\rm out}},
\]
so both vectors have real nonnegative geometric coefficients.  For
$v\in\{v_{\rm in},v_{\rm out}\}$, write
$v_{\mathsf p}=\langle\mathsf p|v\rangle$ and define
\begin{equation}\label{eq:encoding-householder}
 \ket{\phi_v}
 :=\frac{\ket{\mathsf p}-\ket v}{\sqrt{2(1-v_{\mathsf p})}},
 \qquad
 \mathsf H_v:=\id-2\ket{\phi_v}\!\bra{\phi_v}.
\end{equation}
These formulas are nonsingular with an absolute conditioning bound.
Indeed, $\Delta\le1/2$ and $r_\Delta^{L_{\rm io}}\le1/2$ imply
$L_{\rm io}\ge2$, and
\[
 (v_{\rm in})_{\mathsf p}
 =\left(\sum_{h=0}^{L_{\rm io}-1}r_\Delta^{2h}\right)^{-1/2}
 \le(1+e^{-1})^{-1/2}<1,
 \qquad
 (v_{\rm out})_{\mathsf p}=0.
\]
Let $\Pi_{\rm io}$ project onto the union of the input and output
intervals, and put
\begin{equation}\label{eq:encoding-completions}
 \begin{aligned}
  \mathsf D_\pm&:=(\id-\Pi_{\rm io})\pm\iu\Pi_{\rm io},\\
  U_{\rm in}&:=\mathsf D_+\mathsf H_{v_{\rm in}},
  \qquad
  U_{\rm out}:=\mathsf D_-\mathsf H_{v_{\rm out}}.
 \end{aligned}
\end{equation}
Since $\mathsf H_v\ket{\mathsf p}=\ket v$ and
$\phi_v$ is supported outside the interaction interval, these full
unitaries satisfy
\[
 U_{\rm in}\ket{\mathsf p}=\ket{\chi_{\rm in}},
 \qquad
 U_{\rm out}\ket{\mathsf p}=\ket{\chi_{\rm out}},
 \qquad
 U_{\rm in}\ket j=U_{\rm out}\ket j=\ket j
 \quad(0\le j<G).
\]
Their actions on the remaining input and output modes are now specified
by \cref{eq:encoding-completions}.  These completion modes are part
of the finite workspace and need not be fixed pointwise.  The
implemented scalar coupling is the full unitary
\begin{equation}\label{eq:finite-coupling}
 \widetilde w_{\Delta,\tau,L_{\rm io}}
 :=U_{\rm out}^\dagger\mathsf A^{\rm cyc}_{\Delta,M_{\rm cyc}}
 U_{\rm in}.
\end{equation}
Thus the circuit prepares the geometric input encoding state, applies
a QFT, computes and applies the diagonal phase, applies the inverse
QFT, and unprepares the geometric output encoding state.

The two encoding states can be prepared without a length-$L_{\rm io}$
table.  For every binary interval
$[u,v]\subseteq\{0,\ldots,L_{\rm io}-1\}$, its
unnormalized probability mass is
\begin{equation}\label{eq:geometric-interval-mass}
 \sum_{h=u}^vr_\Delta^{2h}
 =r_\Delta^{2u}
 \frac{1-r_\Delta^{2(v-u+1)}}{1-r_\Delta^2}.
\end{equation}
The same closed form prepares the Householder vectors.  For every
binary address interval $I$, their probability mass is
\begin{equation}\label{eq:householder-interval-mass}
 \sum_{j\in I}|(\phi_v)_j|^2
 =\frac{
  \sum_{\substack{j\in I\\j\ne\mathsf p}}v_j^2
  +\mathbf1_{\{\mathsf p\in I\}}(1-v_{\mathsf p})^2
 }{2(1-v_{\mathsf p})}.
\end{equation}
The sum in the numerator is obtained from at most two geometric
interval masses after intersecting $I$ with the appropriate encoding
interval and excluding $\mathsf p$.  Thus a binary
interval-mass recursion computes the two child masses, rotates one
address qubit, and uncomputes at each of $O(\log M_{\rm cyc})$ levels.
This is the interval-mass state-preparation method of Grover and
Rudolph \cite{GroverRudolph2002}, specialized to the explicit masses
above.
The signs of $\phi_v$ are supplied by an interval-membership phase and
a comparison with $\mathsf p$.  Intervals containing $\mathsf p$ have
a mass bounded below by an absolute constant.  For the other nonzero
intervals, analytically cancel the common geometric factor before
computing each conditional probability.  Exponentially small
contributions may be truncated as long as the resulting state-preparation
error stays within the allowed tolerance.  These
choices avoid computing exponentially small parent masses at relative
precision, so the recursion needs only the working precision charged
below.  If $B_v\ket0=\ket{\phi_v}$ is this
preparation, then
\[
 \mathsf H_v=B_v(\id-2\ket0\!\bra0)B_v^\dagger
\]
implements its full unitary completion.  Preparing $\phi_v$ to state
error $\delta_{\rm impl}$ changes this reflection by
$O(\delta_{\rm impl})$ in operator norm, independently of the
preparation circuit's action on other input states.  Comparisons
implement $\mathsf D_\pm$.  The diagonal phase in \cref{eq:cyclic-phase} is computed
from the frequency address and applied by phase kickback.
Since its denominator has modulus at least
$1-r_\Delta=\Theta(\Delta)$, one call implemented with operator-norm error
$\delta_{\rm impl}$
costs
\begin{equation}\label{eq:one-coupling-gates}
 \poly\!\left(
 \log M_{\rm cyc}+\log(1/\Delta)+\log(1/\delta_{\rm impl})
 \right)
\end{equation}
gates and uses a logarithmic-size grid address.

For the $\EE$-valued coupling, a uniform-list-index preparation maps
$d^{-1/2}\sum_{\ell=1}^d\ket\ell$ to a distinguished active flag.
Conditioned on that flag, the circuit applies
\cref{eq:finite-coupling}; it then undoes the list-index preparation.  The
inactive subspace $(V\HH)^\perp$ is left invariant.  For arbitrary
$d$, the uniform state and its inverse are implemented by reversible
comparison and controlled rotations to the same working precision as
the other arithmetic primitives.  We use the same symbols $w_\tau$ and
$\widetilde w_{\Delta,\tau,L_{\rm io}}$ for their $\EE$-valued
extensions: on the active sector $V\HH\cong\HH$ they act as the scalar
channel tensored with $\id_\HH$, while on $(V\HH)^\perp$ they act as
the identity.  More precisely, its finite local logical space is
\[
 \HH\oplus
 \left(\Span\{\ket j:
 j\in\mathbb Z_{M_{\rm cyc}}\setminus\{\mathsf p\}\}\otimes\EE\right).
\]
It is embedded in the cyclic address space tensored with $\EE$ by
mapping the public vector $\xi$ to $\ket{\mathsf p}V\xi$ and
each private vector $q_j$ to $\ket j q_j$.  The omitted subspace
$\ket{\mathsf p}\otimes(V\HH)^\perp$ may be included as an unused
identity sector.  This makes the active-channel construction a full
unitary without identifying the public space $\HH$ with all of $\EE$.
We likewise use $\iota_{\rm grid}$ for the induced logical inclusion
on system-space and address-workspace amplitudes; its public summand
is understood to contain this $V$ embedding.

\subsection{Accuracy on the forward reachable space}

For the phase-twisted input--output clock pair from
\cref{lem:single-clock}, the horizon $R=3mp-p$ from
\cref{eq:horizons} is the number of interactions with the
shared private space, and one-round OAA adds only a factor of
three.  By \cref{eq:mp-choice},
\begin{equation}\label{eq:R-query-bound}
 R=O(mp)
 =O\!\left(
 \tau+\sqrt d\log\frac{2}{\eps}
 \right).
\end{equation}

For $k\ge0$, define the scalar reachable class
\begin{equation}\label{eq:scaled-reachable}
 \mathcal F^{\rm sc}_k(\tau)
 :=\{x\mapsto e^{-x}f(x):\deg f\le k\},
 \qquad
 \mathcal F^{\rm sc}_{-1}(\tau):=\{0\}.
\end{equation}
In scaled coordinates, every $\EE$-valued private field reachable before
step $R$ belongs to
$\mathcal F^{\rm sc}_{R-1}(\tau)\otimes\EE$.

A finite grid cannot approximate arbitrary $L^2$ histories uniformly,
so the analysis exploits the smaller class generated by forward
reuse.  Define the derivative-control factor
\begin{equation}\label{eq:reachable-K}
 L_R(\tau)
 :=1+\frac{R^2}{\min\{\tau,1\}}.
\end{equation}
The next lemma is the only continuous--finite comparison needed in
the main argument.  Its proof contains the full-line representation,
the sequential reference discretization, the truncation and
wrap-around estimate, and the construction of a comparison isometry.
None of those auxiliary
objects is implemented by the algorithm.
All implicit constants in $\lesssim$ below are absolute.

\begin{lemma}[Finite coupling accuracy on reachable histories]
\label{lem:coupling-comparison}
There is an absolute constant $c_0>0$ such that the following holds.
Suppose $\Delta L_R(\tau)\le c_0$ and
$r_\Delta^{L_{\rm io}}\le1/2$.  There is an isometry
\[
 \mathsf E_\Delta:
 \mathcal F^{\rm sc}_{R-1}(\tau)\longrightarrow\bbc^G
\]
acting only on the history coordinate.  Its $\EE$-valued extension
\[
 \mathsf E_\Delta^{(\EE)}
 :=\mathsf E_\Delta\otimes\id_\EE
\]
intertwines every pointwise operator $T$ on $\EE$:
\begin{equation}\label{eq:grid-intertwining}
 (\id_{\bbc^G}\otimes T)\mathsf E_\Delta^{(\EE)}
 =\mathsf E_\Delta^{(\EE)}T,
\end{equation}
where the $T$ on the right is the pointwise lift defined after
\cref{eq:history}.  On the public space plus
$\mathcal F^{\rm sc}_{R-2}(\tau)\otimes\EE$, where
$\iota_{\rm grid}$ is the
logical inclusion defined in~\cref{eq:grid-inclusion} and $w_\tau$ is the
$\EE$-valued extension of the scalar-channel unitary defined by
\cref{eq:wtau-public,eq:wtau-history},
\begin{equation}\label{eq:coupling-comparison}
 \norm{
 \widetilde w_{\Delta,\tau,L_{\rm io}}\iota_{\rm grid}
 (\id\oplus\mathsf E_\Delta^{(\EE)})
 -\iota_{\rm grid}(\id\oplus\mathsf E_\Delta^{(\EE)})w_\tau}
 \lesssim
 \Delta L_R(\tau)+\Delta^2
 +e^{-\Delta L_{\rm io}}
 .
\end{equation}
\end{lemma}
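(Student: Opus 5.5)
The plan is to compare both sides through an intermediate "full-line all-pass" representation of $w_\tau$, and then discretize it.

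\emph{Step 1: continuous all-pass form of $w_\tau$.} Extend histories by zero to $\mathbb R$ and attach the public input as a one-sided exponential pulse $\sqrt2\,e^{x}\mathbf 1_{x<0}\,\xi$ on the negative half-line. Attach the public output as the analogous pulse $\sqrt2\,e^{-(x-\tau)}\mathbf 1_{x>\tau}$ on $x>\tau$. In this picture, \cref{eq:wtau-public,eq:wtau-history} become a single causal convolution: $g=q-2\int_{-\infty}^x e^{-(x-y)}q(y)\,dy$. This is the Fourier multiplier $(\iu\omega-1)/(\iu\omega+1)$, a continuous-time all-pass filter. I will check directly that the convolution reproduces $g(x)$ on $[0,\tau]$ and produces $-\iu\psi(\tau)$ times the output pulse on $x>\tau$. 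Unitarity of the multiplier then recovers \cref{lem:Wt}.

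\emph{Step 2: grid discretization.} Define $\mathsf E_\Delta$ on $\mathcal F^{\rm sc}_{R-1}(\tau)$ by cell sampling, $f\mapsto(\sqrt\Delta f(j\Delta))_j$. Because it acts only on the history coordinate, the intertwining \cref{eq:grid-intertwining} is automatic. Sampling is not isometric, so I then correct it by the polar factor $\mathsf E_\Delta=S(S^\dagger S)^{-1/2}$. The Gram defect $\norm{S^\dagger S-\id}$ is controlled by a Riemann-sum error for $\abs{f}^2$. That error is bounded by $\Delta\norm{f'}\norm{f}$. For $f=e^{-x}\cdot$(polynomial of degree $<R$), a Markov/Bernstein-type inequality on $[0,\tau]$ gives $\norm{f'}\lesssim(1+R^2/\min\{\tau,1\})\norm f=L_R(\tau)\norm f$. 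The weight $e^{-x}$ reduces to Legendre-type Markov inequalities with a shifted interval. This yields an $O(\Delta L_R)$ defect under $\Delta L_R\le c_0$.

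\emph{Step 3: compare the discrete all-pass with $s_\Delta$.} The sampled causal convolution is a discrete first-order recursion. Its exact unitary version is the multiplier $s_\Delta$ in \cref{eq:sa}, i.e. $g_j=r_\Delta g_{j-1}$-type recursion with feed-through. Its impulse response matches the sampled kernel $2e^{-(x-y)}$ up to $O(\Delta)$ per cell and $O(\Delta^2)$ in the smooth phase part. Applied to smooth reachable inputs this gives error $O(\Delta L_R+\Delta^2)$. The input pulse discretizes exactly to the geometric state $\chi_{\rm in}$ ($r_\Delta^h$ weights, with $\sigma_\Delta$ the correct normalization), and similarly for $\chi_{\rm out}$. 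The remaining errors are:
\begin{itemize}
\item truncating these geometric states to $L_{\rm io}$ cells, which costs $r_\Delta^{L_{\rm io}}=e^{-\Delta L_{\rm io}}$;
\item cyclic wrap-around of $\mathsf A^{\rm cyc}$, whose impulse response decays as $r_\Delta^n$ and so is exponentially small past the $L_{\rm io}$ buffers.
\end{itemize}
Projecting with $U_{\rm out}^\dagger$ then gives the logical outputs.

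\emph{Main obstacle.} I expect the hardest part to be the derivative bound of Step 2 together with its uniformity. The Markov factor $R^2/\min\{\tau,1\}$ must be proved for $e^{-x}$-weighted polynomials on $[0,\tau]$, and it must then be propagated through the recursion comparison in Step 3 without picking up extra factors of $R$. I would first try to reduce to the Legendre Markov inequality after a substitution, bounding $\norm{e^{-x}p'}$ and $\norm{e^{-x}p}$ separately. The triangle inequality then combines Steps 1--3 into \cref{eq:coupling-comparison}.
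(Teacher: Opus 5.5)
Your proposal is correct and follows the paper's skeleton: the full-line all-pass picture, the discrete all-pass with multiplier $s_\Delta$ and its periodization, the geometric truncation and wrap-around estimate, the Legendre--Markov bound $\norm{q'}\lesssim L_R(\tau)\norm{q}$, and a polar isometry. It differs in the discretization map and in how consistency is organized.

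\emph{The paper's route.} It uses cell \emph{averaging}, $\operatorname{Av}_\Delta=\mathsf I_\Delta^\dagger$. Because this is a co-isometry, its Gram defect is exactly $\norm{(\id-\Pi_\Delta)q}^2$, which is second order. Because the continuous coupling can be solved exactly on cellwise-constant fields, \cref{eq:cell-consistency} is a uniform $O(\Delta^2)$ operator bound on all grid vectors. Smoothness of reachable histories therefore enters only once, through \cref{eq:cell-projection}.

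\emph{Your route.} Your point-sampling map (call it $\mathsf S_\Delta$) has only a first-order defect, $O(\Delta L_R)$. That is still enough, since it matches the size of the target bound. The cost is that smoothness must be used inside a finite-difference consistency and stability argument, which your Step~3 states only loosely. Concretely, take $q_j=\sqrt\Delta\,q(j\Delta)$.
\begin{itemize}
\item The local error of $\zeta_{j+1}=r_\Delta\zeta_j-\iu\sigma_\Delta q_j$ against $\psi((j+1)\Delta)$ is at most $\sqrt2\,\Delta\int_{\rm cell}|q'|$, plus a higher-order coefficient mismatch. Its $\ell^2$ norm is therefore $\lesssim\Delta^{3/2}\norm{q'}$.
\item Apply Young's inequality with the $\ell^1$ norm $(1-r_\Delta)^{-1}\sim\Delta^{-1}$ of the geometric response. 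Then multiply by the output factor $\sigma_\Delta\sim\Delta^{1/2}$; for the final state $\zeta_G$, use instead the response's $\ell^2$ norm $\sim\Delta^{-1/2}$.
\item This gives $O(\Delta\norm{q'})=O(\Delta L_R)$ with no stray factor of $R$ or $G$, which settles the concern you raise.
\end{itemize}

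Two further points should be stated explicitly.
\begin{itemize}
\item The history output of $w_\tau$ on $\mathcal F^{\rm sc}_{R-2}$ lies in $\mathcal F^{\rm sc}_{R-1}$. Hence replacing the sampled output by $\mathsf E_\Delta$ costs only $\norm{\mathsf E_\Delta-\mathsf S_\Delta}\lesssim\Delta L_R$.
\item The geometric states are exact for the \emph{discrete} bilateral recursion: they give $\zeta_0=\xi$ and zero output at negative indices. They are not samples of the continuous pulses, so the continuous-to-discrete comparison is made only on $[0,\tau]$.
\end{itemize}

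In short, the averaging route buys a clean split between a uniform cell-level estimate and a single smoothness estimate. Yours is the more elementary textbook finite-difference argument, at the price of an explicit stability step. Both yield the dominant $O(\Delta L_R)$ term.
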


\begin{proof}
\emph{Step 1: full-line representation of the ideal coupling.}
On $L^2(\bbr)$, define the causal all-pass operator
\begin{equation}\label{eq:full-line-allpass}
 (\mathsf A_\infty f)(x)=f(x)-2\int_{-\infty}^x
 e^{-(x-y)}f(y)\,dy.
\end{equation}
For the Fourier convention
$\widehat f(\omega)=\int_\bbr f(x)e^{-\iu\omega x}\,dx$,
\begin{equation}\label{eq:allpass-multiplier}
 \widehat{\mathsf A_\infty f}(\omega)
 =\frac{\iu\omega-1}{\iu\omega+1}\widehat f(\omega).
\end{equation}
The multiplier has modulus one, so $\mathsf A_\infty$ is unitary by
Plancherel's theorem.  Define isometries into the full line by
\begin{align}
 (J_-(\xi,q))(x)
 &=\begin{cases}
   \iu\sqrt2e^x\xi,&x<0,\\
   q(x),&0\le x\le\tau,\\
   0,&x>\tau,
  \end{cases}\label{eq:Jminus}\\
 (J_+(\xi_{\rm out},g))(x)
 &=\begin{cases}
   0,&x<0,\\
   g(x),&0\le x\le\tau,\\
   -\iu\sqrt2e^{-(x-\tau)}\xi_{\rm out},&x>\tau.
  \end{cases}\label{eq:Jplus}
\end{align}
Both exponential functions have squared norm one.  Direct substitution
into \cref{eq:full-line-allpass} gives
\begin{equation}\label{eq:allpass-intertwining}
 \mathsf A_\infty J_-=J_+w_\tau.
\end{equation}
The negative-coordinate function cancels the output on $(-\infty,0)$
and supplies the public-input term on $[0,\tau]$; beyond $\tau$, the
convolution factors into the positive-coordinate function with
coefficient $\psi(\tau)$.  This identifies the
continuous all-pass target whose finite analogue is implemented in
\cref{eq:finite-coupling}.

\emph{Step 2: a reference cell discretization.}
The recurrence
\begin{equation}\label{eq:cell-recurrence}
 \begin{pmatrix}\zeta_{j+1}\\g_j\end{pmatrix}
 =\begin{pmatrix}
 r_\Delta&-\iu\sigma_\Delta\\
 -\iu\sigma_\Delta&r_\Delta
 \end{pmatrix}
  \begin{pmatrix}\zeta_j\\q_j\end{pmatrix},
 \qquad \zeta_0=\xi,
\end{equation}
defines the unitary reference map
\[
 w_{\Delta,G}:(\xi,q_0,\ldots,q_{G-1})\longmapsto
 (\zeta_G,g_0,\ldots,g_{G-1}).
\]
This map is used only in the proof; its $G$ cells are not executed
sequentially.  Eliminating the internal state on a bilateral sequence
gives
\begin{equation}\label{eq:discrete-allpass}
 (\mathsf A_\Delta q)_j
 =r_\Delta q_j-(1-r_\Delta^2)
 \sum_{k<j}r_\Delta^{j-k-1}q_k.
\end{equation}
Its Fourier multiplier is $s_\Delta$ from \cref{eq:sa}.  Because the
impulse response is absolutely summable, periodization commutes with
the filter, and the periodized action is exactly
$\mathsf A^{\rm cyc}_{\Delta,M_{\rm cyc}}$ from
\cref{eq:cyclic-allpass}.

\emph{Step 3: truncation and wrap-around comparison.}
The bilateral recurrence has the exact semi-infinite input and output
sequences
\begin{equation}
 q_{-h-1}=\iu\sigma_\Delta r_\Delta^h\xi,
 \qquad
 g_{G+h}=-\iu\sigma_\Delta r_\Delta^h\zeta_G,
 \qquad h\ge0.
\end{equation}
They are isometric because
$\sigma_\Delta^2\sum_{h\ge0}r_\Delta^{2h}=1$.  Periodizing with period
$M_{\rm cyc}$ folds a geometric remainder of norm at most
\begin{equation}
 \sum_{j\ge0}r_\Delta^{L_{\rm io}+jM_{\rm cyc}}\norm{\xi}
 =\frac{r_\Delta^{L_{\rm io}}}
 {1-r_\Delta^{M_{\rm cyc}}}\norm{\xi}
 \le\frac43r_\Delta^{L_{\rm io}}\norm{\xi},
\end{equation}
and the same estimate holds for the output sequence because
$\norm{\zeta_G}\le\norm{(\xi,q)}$.  Truncating and renormalizing
either sequence changes it by at most $2r_\Delta^{L_{\rm io}}$.  Hence
\begin{equation}\label{eq:truncation-error}
 \norm{
 \widetilde w_{\Delta,\tau,L_{\rm io}}\iota_{\rm grid}
 -\iota_{\rm grid}w_{\Delta,G}}
 \lesssim r_\Delta^{L_{\rm io}}.
\end{equation}
This concerns the encoded logical subspace; the implemented operator
remains the full unitary \cref{eq:finite-coupling} on the input and
output intervals and all completion modes.

\emph{Step 4: an isometric grid representation of reachable
histories.}
The scaled forms of \crefrange{eq:B}{eq:A0} show that an insertion
through $B$ produces $e^{-x}$ times a constant, while every subsequent
Volterra operation increases the polynomial degree by at most one.
Pointwise operations on $\EE$ change only the coefficients.  This proves
\cref{eq:scaled-reachable}.

For $q=e^{-x}f(x)\in\mathcal F^{\rm sc}_{R-1}(\tau)$, a weighted
polynomial
inverse estimate gives
\begin{equation}\label{eq:reachable-derivative}
 \norm{q'}_{L^2(0,\tau)}
 \lesssim L_R(\tau)\norm{q}_{L^2(0,\tau)}.
\end{equation}
Indeed, on $[-1,1]$ the normalized Legendre polynomial $\varphi_k$
satisfies
\[
 \norm{\varphi_k'}_2^2=\frac{2k+1}{2}k(k+1).
\]
Expanding a polynomial of degree below $R$, applying
Cauchy--Schwarz, and scaling to an interval of length $\ell$ gives
$\norm{f'}_2\lesssim R^2\ell^{-1}\norm{f}_2$.  On intervals of length
at most one the weight $e^{-2x}$ varies by an absolute factor, and
$q'=e^{-x}(f'-f)$.  Apply the estimate on all of $(0,\tau)$ when
$\tau<1$; when $\tau\ge1$, partition into intervals of length between
$1/2$ and
one and add the squared estimates.  Expansion in an orthonormal basis
gives the Hilbert-valued version.

Let $\mathsf I_\Delta:\bbc^G\to L^2(0,\tau)$ map $q_j$ to the constant
$q_j/\sqrt\Delta$ on cell $j$, let
$\operatorname{Av}_\Delta:=\mathsf I_\Delta^\dagger$, and put
$\Pi_\Delta:=\mathsf I_\Delta\operatorname{Av}_\Delta$.  Cellwise
Poincar\'e gives
\begin{equation}\label{eq:cell-projection}
 \norm{(\id-\Pi_\Delta)q}
 \lesssim\Delta L_R(\tau)\norm{q}.
\end{equation}
Restrict the averaging map to the reachable space,
\begin{equation}\label{eq:restricted-average}
 \operatorname{Av}_{\Delta,R}
 :=\operatorname{Av}_\Delta
 \big|_{\mathcal F^{\rm sc}_{R-1}(\tau)},
\end{equation}
where its adjoint is taken relative to that domain.  For every
$q\in\mathcal F^{\rm sc}_{R-1}(\tau)$,
\[
 \left\langle q,
 (\id-\operatorname{Av}_{\Delta,R}^\dagger
 \operatorname{Av}_{\Delta,R})q
 \right\rangle
 =\norm{(\id-\Pi_\Delta)q}^2.
\]
Consequently,
\begin{equation}\label{eq:A-near-isometry}
 \norm{\id-\operatorname{Av}_{\Delta,R}^\dagger
 \operatorname{Av}_{\Delta,R}}
 \lesssim\Delta^2 L_R(\tau)^2.
\end{equation}
Under the hypothesis of the lemma, the polar isometry
\begin{equation}\label{eq:Edelta}
 \mathsf E_\Delta
 =\operatorname{Av}_{\Delta,R}
 (\operatorname{Av}_{\Delta,R}^\dagger
  \operatorname{Av}_{\Delta,R})^{-1/2}
\end{equation}
is well defined and satisfies
\begin{equation}\label{eq:Edelta-close}
 \norm{\mathsf E_\Delta-\operatorname{Av}_{\Delta,R}}
 \lesssim\Delta^2 L_R(\tau)^2.
\end{equation}
It acts only on the history coordinate, proving the intertwining claim
in the lemma.  Notice that $\mathsf E_\Delta$ is a comparison isometry used in
the proof, not a circuit primitive.

\emph{Step 5: consistency of the continuous and cell evolutions.}
For a cellwise-constant field, define $\psi_j:=\psi(j\Delta)$ for
$0\le j\le G$.  Exact continuous evolution at cell boundaries and the
averaged output satisfy
\begin{align}
 \psi_{j+1}&=r_\Delta\psi_j-\iu\widehat\sigma_\Delta q_j,
 &\widehat\sigma_\Delta&=
 \frac{\sqrt2(1-r_\Delta)}{\sqrt\Delta},
 \label{eq:continuous-cell-state}\\
 (\operatorname{Av}_\Delta g)_j
 &=\nu_\Delta q_j-\iu\widehat\sigma_\Delta\psi_j,
 &\nu_\Delta&=\frac{2(1-r_\Delta)}{\Delta}-1.
 \label{eq:continuous-cell-output}
\end{align}
Uniformly for $0<\Delta\le1$, Taylor expansion gives
\begin{equation}\label{eq:cell-coeff-errors}
 |\sigma_\Delta-\widehat\sigma_\Delta|\lesssim\Delta^{5/2},
 \qquad
 |r_\Delta-\nu_\Delta|\lesssim\Delta^2.
\end{equation}
The stable geometric recurrence obeys
\begin{equation}\label{eq:stable-state-bound}
 \norm{(\psi_j)_{j=0}^{G-1}}_{\ell^2}
 \lesssim\Delta^{-1/2}\norm{(\xi,q)}.
\end{equation}
If $e_j=\psi_j-\zeta_j$, then
$e_{j+1}=r_\Delta e_j
-\iu(\widehat\sigma_\Delta-\sigma_\Delta)q_j$.  Convolution with the
geometric impulse response, using its $\ell^1$ norm for the whole
sequence and its $\ell^2$ norm for the final cell, gives
\begin{equation}
 \norm{(e_j)_{j=0}^{G-1}}_{\ell^2}
 \lesssim\Delta^{3/2}\norm{q},
 \qquad
 |e_G|\lesssim\Delta^2\norm{q}.
\end{equation}
Subtracting \cref{eq:cell-recurrence} from
\crefrange{eq:continuous-cell-state}{eq:continuous-cell-output}
leaves the terms
\[
 (\nu_\Delta-r_\Delta)q_j,\qquad
 -\iu(\widehat\sigma_\Delta-\sigma_\Delta)\psi_j,\qquad
 -\iu\sigma_\Delta e_j.
\]
\cref{eq:cell-coeff-errors,eq:stable-state-bound} yield
\begin{equation}\label{eq:cell-consistency}
 \norm{(\id\oplus\operatorname{Av}_\Delta)w_\tau
 (\id\oplus\mathsf I_\Delta)-w_{\Delta,G}}
 \lesssim\Delta^2.
\end{equation}

For an exact input in the public space plus
$\mathcal F^{\rm sc}_{R-2}(\tau)\otimes\EE$, insert
$\Pi_\Delta\otimes\id_\EE$ on the field and use
\cref{eq:cell-projection}.  Apply the $\EE$-valued extension of
\cref{eq:cell-consistency}: tensor the scalar comparison with
$\id_\HH$ on the active sector $V\HH$, while the continuous and finite
couplings are both the identity on $(V\HH)^\perp$, so the comparison
on the inactive sector is exact.  Then
replace cell averaging by its polar isometry on both input and output.
The output belongs to the public space plus
$\mathcal F^{\rm sc}_{R-1}(\tau)\otimes\EE$, so the same
$\mathsf E_\Delta^{(\EE)}$ applies.  Since the lemma assumes
$\Delta L_R(\tau)\le c_0$, these replacements give
the first two terms in \cref{eq:coupling-comparison}.  Finally,
\cref{eq:truncation-error} contributes
$O(r_\Delta^{L_{\rm io}})=O(e^{-\Delta L_{\rm io}})$.
\end{proof}

\subsection{Grid choice and the forward hybrid estimate}

Let
\begin{equation}\label{eq:eta-grid}
 \eta=\frac{c_{\rm grid}\eps}{R}
\end{equation}
for a sufficiently small absolute constant.  Let $G$ be the least even
integer satisfying $G\ge2$ and the following inequality, where
$C_{\rm grid}$ is a sufficiently large absolute constant:
\begin{equation}\label{eq:grid-choice}
 G\ge 2\tau+
 \frac{C_{\rm grid}\tau L_R(\tau)}{\eta},
 \qquad
 \Delta=\frac\tau G,
 \qquad
 L_{\min}=\left\lceil
 \Delta^{-1}\log\frac{C_{\rm grid}}\eta
 \right\rceil.
\end{equation}
Take the least power of two $M_{\rm cyc}\ge G+2L_{\min}$, choosing
$G$ even so that
$L_{\rm io}:=(M_{\rm cyc}-G)/2$ is an integer.  Then the right
side of \cref{eq:coupling-comparison} is $O(\eta)$.

Define the global direct-sum comparison isometry on reachable states by
\begin{equation}\label{eq:global-grid-embedding}
 \mathsf E_\Delta^{\rm glob}
 :=\id_{\PP_R}\oplus\mathsf E_\Delta^{(\EE)},
\end{equation}
followed by zero-amplitude inclusion into the completion modes of the
shared finite private space
\begin{equation}\label{eq:finite-private-space}
 \widetilde\QQ
 :=\Span\{\ket j:
 j\in\mathbb Z_{M_{\rm cyc}}\setminus\{\mathsf p\}\}\otimes\EE .
\end{equation}
Thus $\mathsf E_\Delta^{\rm glob}$ is the identity on every public slot and
acts as $\mathsf E_\Delta^{(\EE)}$ on the reachable private history,
which is supported on the $G$ interaction addresses.  The remaining
$2L_{\rm io}-1$ private addresses are the completion modes.  By
\cref{eq:grid-intertwining}, $\mathsf E_\Delta^{\rm glob}$ intertwines every
pointwise operator on $\EE$ with its finite-grid action.

For clarity, one binary address register implements the direct sum
$\PP_R\oplus\widetilde\QQ$.  Use the $M_{\rm cyc}$ cyclic addresses
and $R-1$ additional addresses, set
\[
 \mathsf p_0=\mathsf p=M_{\rm cyc}-1,
 \qquad
 \mathsf p_k=M_{\rm cyc}+k-1\quad(1\le k<R),
\]
and encode the public vector in slot $k$ as
$\ket{\mathsf p_k}Vx_k$, and the private vector at address $j$ as
$\ket j q_j$ for $j\ne\mathsf p$.  Public inactive address sectors
$\ket{\mathsf p_k}\otimes(V\HH)^\perp$ and any binary padding addresses
are unused identity sectors.  The $V$ encoding and decoding require
only a uniform-list-index preparation and its inverse.

At routed interaction $k$, reversibly swap the address labels
$\mathsf p_k$ and $\mathsf p_0$.  This permutation fixes every private
address and places the selected public slot at $\mathsf p$; it costs
polynomially many gates in $\log(M_{\rm cyc}+R)$ using comparisons and
controlled bit flips.  Apply the local cyclic coupling on the first
$M_{\rm cyc}$ addresses, with the active-list-index control described
above, act as the identity on all additional public addresses, and
undo the swap.  In particular, the phases $\mathsf D_\pm$ in
\cref{eq:encoding-completions} act only within this local cyclic
prefix and are the identity on the additional public addresses.
The Cayley query is applied only to private addresses,
including the private completion modes, and is the identity on
every public address.  Its pointwise action therefore preserves
$\PP_R\oplus\widetilde\QQ$.  The local coupling also preserves this
space, since its only public coordinate is
$\ket{\mathsf p}\otimes V\HH$ and the inactive channel is unchanged.
This gives a genuine finite unitary on one shared private register
for every selected public slot.  In each routed call, perform the
private Cayley query first, followed by this routed local coupling,
as in \cref{eq:S}.  Here and below, the exact interactions $S^{(k)}(1)$
and their product $\mathcal R_R(1)$ are understood after conjugation
by the isometric history rescaling in \cref{eq:scaled-coordinate}.  Let
$\widetilde S^{(k)}$ denote the finite version of $S^{(k)}(1)$ in
which the active scalar coupling $w_\tau$ is replaced by the $\EE$-valued
extension of $\widetilde w_{\Delta,\tau,L_{\rm io}}$ from
\cref{eq:finite-coupling}.  The routing and the Cayley query $U_M$
are unchanged, while the coupling remains the identity on
$(V\HH)^\perp$.  Define the
finite forward reuse circuit, and recall the exact circuit, by
\begin{equation}\label{eq:reuse-circuits}
 \mathcal R_R(1)
 =S^{(R-1)}(1)\cdots S^{(0)}(1),
 \qquad
 \widetilde{\mathcal R}_R
 :=\widetilde S^{(R-1)}\cdots\widetilde S^{(0)}.
\end{equation}
Thus $\widetilde w_{\Delta,\tau,L_{\rm io}}$ is one local coupling,
whereas $\widetilde{\mathcal R}_R$ is the complete $R$-interaction
routed circuit built from that coupling.  In particular,
$\widetilde{\mathcal R}_R$ is not the power
$(\widetilde w_{\Delta,\tau,L_{\rm io}})^R$: successive factors address
different public slots while reusing the same private grid.

The final circuit reuses the same private register $R$ times, so a
one-call estimate alone does not control the clean block: an early
discrepancy can propagate through every later interaction.  Each exact
forward history nevertheless remains in the polynomial class on which
the comparison is uniform, while the intervening operations are
unitary and obey the intertwining relation with the history embedding.
The next lemma
therefore upgrades the local coupling error to a linear-in-$R$
circuit error and carries it through clock projection and the
distinct input and output clock preparations.

\begin{lemma}[Forward reuse hybrid]\label{lem:forward-grid-hybrid}
On clean public-clock inputs,
\begin{equation}\label{eq:forward-grid-hybrid}
 \norm{\widetilde{\mathcal R}_R\mathsf E_\Delta^{\rm glob}
       -\mathsf E_\Delta^{\rm glob}\mathcal R_R(1)}
 \lesssim R\eta.
\end{equation}
Consequently, taking the input--output clock block preserves the
$O(R\eta)$ error bound.
\end{lemma}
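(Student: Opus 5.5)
The plan is a standard hybrid (telescoping) argument over the $R$ routed interactions, restricted to the forward reachable states. Fix a clean public-clock input $x=J_{\mathbf a}\xi$ (more generally any vector in $\PP_R\oplus\{0\}$). Set $x_k:=S^{(k-1)}(1)\cdots S^{(0)}(1)x$ for the exact partial products, with $x_0=x$. By \cref{eq:reachable} and the scaled class \cref{eq:scaled-reachable}, the private component of $x_k$ lies in $\mathcal F^{\rm sc}_{k-1}(\tau)\otimes\EE\subseteq\mathcal F^{\rm sc}_{R-2}(\tau)\otimes\EE$ for every $k\le R-1$, which is exactly the input domain on which \cref{lem:coupling-comparison} applies. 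Then write
\[
 \widetilde{\mathcal R}_R\mathsf E_\Delta^{\rm glob}x-\mathsf E_\Delta^{\rm glob}\mathcal R_R(1)x
 =\sum_{k=0}^{R-1}\widetilde S^{(R-1)}\cdots\widetilde S^{(k+1)}
 \bigl(\widetilde S^{(k)}\mathsf E_\Delta^{\rm glob}-\mathsf E_\Delta^{\rm glob}S^{(k)}(1)\bigr)x_k .
\]
Since each $\widetilde S^{(j)}$ is a genuine finite unitary, the prefactors do not enlarge norms, and it suffices to bound each one-step defect by $O(\eta)\norm{x}$.

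For the one-step defect, I would factor $S^{(k)}(1)=W^{(k)}\diag(\id,U_M)$ and $\widetilde S^{(k)}$ analogously, with the finite local coupling in place of $W^{(k)}$. The Cayley query is pointwise on $\EE$ and acts only on private addresses, so \cref{eq:grid-intertwining} gives $\diag(\id,U_M)\mathsf E_\Delta^{\rm glob}=\mathsf E_\Delta^{\rm glob}\diag(\id,U_M)$ exactly. It also preserves each $\mathcal F^{\rm sc}_j\otimes\EE$, so reachability is kept. The remaining defect is the coupling on slot $k$. Conjugating by the address swap $\mathsf p_k\leftrightarrow\mathsf p_0$ reduces it to the local comparison \cref{eq:coupling-comparison}, tensored with the identity on the other public slots. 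Here $\mathsf E_\Delta^{\rm glob}$ acts as the identity on public slots, and $\iota_{\rm grid}$ places slot $k$ at $\mathsf p$. Also, the zero-amplitude completion modes are untouched on these inputs: $\iota_{\rm grid}$ embeds into the logical subspace, and the estimate already accounts for the full unitary on completion modes. By the grid choice \cref{eq:grid-choice}, the right side of \cref{eq:coupling-comparison} is $O(\eta)$, so each defect is $O(\eta)\norm{x_k}=O(\eta)\norm{x}$. Summing gives \cref{eq:forward-grid-hybrid}.

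For the clock block, the clean input is $J_{\mathbf a}\xi$, and the output is projected by $J_{\mathbf b}^\dagger$ on the public slots. Because $\mathsf E_\Delta^{\rm glob}$ is the identity on $\PP_R$ and $J_{\mathbf a},J_{\mathbf b}$ have zero private parts, we have $\mathsf E_\Delta^{\rm glob}J_{\mathbf a}=J_{\mathbf a}$, with $J_{\mathbf a}$ read in the finite space. Likewise $J_{\mathbf b}^\dagger\mathsf E_\Delta^{\rm glob}=J_{\mathbf b}^\dagger$, since the projection ignores the private component. Hence $\norm{J_{\mathbf b}^\dagger\widetilde{\mathcal R}_RJ_{\mathbf a}-P_{\mathbf b\leftarrow\mathbf a}(1)}\lesssim R\eta$, because $J$ maps are isometries and projections are contractions.

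The main obstacle I expect is verifying that the one-step comparison really applies at every $k$. The domain of \cref{lem:coupling-comparison} is only the reachable class, so the telescoping must be done along the exact trajectory $x_k$ and never along the finite one, which may leave that class. The identity-on-other-slots and swap bookkeeping must also be checked so that the local estimate lifts without loss. I would handle this by always placing the finite operators to the left in the telescoping sum, so that only unitarity of the finite operators is ever needed.
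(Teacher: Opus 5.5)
Your proposal is correct and matches the paper's proof: a step-by-step hybrid along the exact forward trajectory, so that \cref{lem:coupling-comparison} is only invoked on reachable histories. The Cayley queries pass through $\mathsf E_\Delta^{\rm glob}$ by \cref{eq:grid-intertwining}, earlier discrepancies propagate through unitaries, and the clock block follows because $J_{\mathbf a}$ is an isometry and $J_{\mathbf b}^\dagger$ is a contraction; your telescoping sum with the finite unitaries on the left is simply the unrolled form of the paper's induction.
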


\begin{proof}
At step $k$, evaluate \cref{lem:coupling-comparison} on the exact
history produced by the first $k$ calls.  This history lies in the
required polynomial space.  Previous discrepancies propagate through
a unitary, and \cref{eq:grid-intertwining} carries every intervening
Cayley query through $\mathsf E_\Delta^{\rm glob}$.
Induction and the triangle inequality therefore give
\cref{eq:forward-grid-hybrid}.  Clock preparation is an isometry and
clock projection is a contraction, so taking the input--output clock
block cannot increase this
operator-norm error.
\end{proof}

This is a forward comparison.  OAA uses the actual circuit
and its actual circuit adjoint.  Thus no approximation to
inverse continuous evolution on arbitrary, unreachable histories is
required.

The grid remains polynomially large.  From \cref{eq:grid-choice},
\begin{equation}\label{eq:G-polynomial}
 G=O\!\left(
 \tau+1+\frac{\max\{\tau,1\}R^2}{\eta}
 \right),
 \qquad
 \Delta^{-1}=G/\tau.
\end{equation}
Since $t\Lambda\ge1/2$,
\begin{equation}\label{eq:grid-log-size}
 \begin{aligned}
  \frac12\le\tau,
  &\qquad
  R=O\!\left(
   \tau+\sqrt d\log\frac{2}{\eps}
  \right),\\
  \eta^{-1}
  =O\!\left(
   \frac{\tau+\sqrt d\log(2/\eps)}{\eps}
  \right),
  &\qquad
  \log M_{\rm cyc}+\log(1/\Delta)
  =O\!\left(
   \log\!\left[
    \frac{2}{\eps}
    \left(\tau+\sqrt d\log\frac{2}{\eps}\right)
   \right]
  \right).
 \end{aligned}
\end{equation}

\subsection{Clock preparation and final accounting}

The normalized input and output amplitudes of the phase-twisted clock
pair are given explicitly in \cref{eq:clock-coefficients}.
Because the $c_j$ are real,
$b_j=\omega_{\rm tw}\overline{a_j}$.
Thus the two preparations use the same amplitude distribution and
differ only by known clock-index phases.

The integer coefficients of
\[
 \left(\sum_{n=0}^{K-1}z^n\right)
 \left(\sum_{n=0}^{m-1}z^n\right)^p
\]
can be computed by finite convolutions and have bit length
$O(\log K+p\log m)$, polynomial in
$\tau+\sqrt d\log(2/\eps)$.  Consequently
$\lambda$, the attenuation rotation with clean matrix element
$\lambda/2$, and all state-preparation angles can be computed with
polynomial classical work.  A known length-$R$ complex amplitude
vector has a preparation
using $O(R)$ controlled rotations and phases \cite{Mottonen2005}.
Only constantly many full clock preparations occur in the three-call
OAA circuit; they lie outside the slot-by-slot reuse and
therefore cost
$\widetilde O(\tau+\sqrt d\log(2/\eps))$, not
$\widetilde O((\tau+\sqrt d\log(2/\eps))^2)$.
The public-address encoding $k\mapsto\mathsf p_k$ uses only comparisons
and reversible arithmetic on a binary word.  The unified address
register has $\lceil\log_2(M_{\rm cyc}+R-1)\rceil$ qubits, which is
$O(\log(2Q_{\rm sim}/\eps))$ by \cref{eq:grid-log-size}.

\begin{proof}[Proof of \cref{thm:gate}]
With $Q_{\rm sim}$ as defined in the theorem statement, run the logarithmic
query construction with analytic error at most
$\eps/2$; replacing $\eps$ by a constant multiple changes
$Q_{\rm sim}$ by at most an absolute factor.

Each Cayley-query call uses $O(1)$ supplied oracle calls, canonical
ordering of the index pair, reversible arithmetic, and the rotation
\cref{eq:cayley-block}, whose scalar denominator is at least
$1/4$.  To apply the coupling, rotate the uniform list-index vector
$d^{-1/2}\sum_{\ell=1}^d\ket\ell$ to a distinguished basis flag,
apply \cref{eq:finite-coupling} only on that active channel, and undo
the rotation.  The explicit address swap above routes the
distinguished public slot.  The orthogonal list-index channel is left
invariant.  The input and output intervals, including their completion
modes, participate in the full unitary \cref{eq:finite-coupling}; any
amplitude entering them is covered by the comparison bound
\cref{eq:coupling-comparison}.

Allocate the grid error through \cref{eq:eta-grid} before OAA.
For the remaining implementation error, give each of the
$O(Q_{\rm sim})$ query-scale arithmetic or Fourier primitives operator
error $O(\eps/Q_{\rm sim})$, and give all clock preparations the same
total-order
budget.  A telescoping hybrid contributes $O(\eps)$.  By
\cref{eq:grid-log-size}, it suffices to use a working precision of
$P$ bits, with
\begin{equation}\label{eq:gate-precision}
 P=O\!\left(
 \log\frac{2Q_{\rm sim}}{\eps}
 +\log(n_H+b_{\rm val}+\log d+2)
 \right).
\end{equation}
The guard-bit workspace of the reversible arithmetic is
included in $a_{\rm ar}$.  The cost per query-scale step is
\begin{equation}
 O\!\left(g_F+g_H+
 \poly\!\left(
 n_H+b_{\rm val}+\log d+\log\frac{2Q_{\rm sim}}{\eps}
 \right)
 \right).
\end{equation}
Multiplying by $Q_{\rm sim}$, and adding the
$\widetilde O(Q_{\rm sim})$ clock preparations, proves
\cref{eq:gate-count}.  The system label, row and list-index registers, the
$b_{\rm val}$-qubit
matrix-value register, unified public--private address register,
constant amplification flags, and reusable workspaces give
\cref{eq:qubit-count}.  Every
output ancilla is included in the operator-norm error estimate.
\end{proof}

\begin{remark}
The conclusion is not that a sparse-oracle call or an arbitrary
prescribed rotation costs one elementary gate.  Rather, after charging
the supplied oracle circuits and reversible arithmetic explicitly,
the number of query-scale circuit blocks is linear in the logarithmic
query bound, and a polynomially large history grid appears only
through its logarithmic-size address.
\end{remark}

\clearpage
\bibliographystyle{alpha}
\bibliography{ref}

\clearpage
\appendix
\section{Proof of the one-round OAA lemma}
\label{app:purification}

\begin{proof}[Proof of \cref{lem:purification}]
Let $\mathcal A_0$ denote the supplied clean-block circuit.  Tensor it
with a known one-qubit rotation whose clean matrix element is
$\lambda/2$, and denote the resulting unitary by $\mathcal A$.  This
is possible because $\lambda\le2$.  Define the projector onto the
clean-ancilla subspace by
\[
 \Pi=\ket{0}_{\rm anc}\!\bra{0}_{\rm anc}\otimes\id.
\]
Under the natural identification of the range of $\Pi$ with the system
space,
\[
 X:=\Pi\mathcal A\Pi=\frac Q2.
\]

Define the clean-subspace reflection and the oppositely oriented
reflection about its image under $\mathcal A$ by
\[
 R_{\rm c}=2\Pi-\id,
 \qquad
 R_{\rm p}=\id-2\mathcal A\Pi\mathcal A^\dagger
          =-\mathcal A R_{\rm c}\mathcal A^\dagger.
\]
Both are reflections.  After the initial application of $\mathcal A$,
apply each reflection once:
\[
 \mathcal C
 :=R_{\rm p}R_{\rm c}\mathcal A
 =-\mathcal A R_{\rm c}\mathcal A^\dagger
     R_{\rm c}\mathcal A.
\]
The operators $\mathcal A$, $R_{\rm c}$, $R_{\rm p}$, and
$\mathcal C$ all act on the joint system and ancillary workspace.
Thus $\mathcal C$ is unitary.
Implementing $R_{\rm p}$ uses one call each to $\mathcal A^\dagger$
and $\mathcal A$.  Hence the construction uses three calls to the
given block circuit or its adjoint, together with known rotations and
clean-ancilla reflections.

Put $L=(\id-\Pi)\mathcal A\Pi$, so $\norm{L}\le1$.  Direct block
multiplication of the two reflections gives
\begin{align*}
 \Pi\mathcal C\Pi
 &=3X-4XX^\dagger X
   =\frac{3Q-QQ^\dagger Q}{2},\\
 (\id-\Pi)\mathcal C\Pi
 &=L(\id-4X^\dagger X)
   =L(\id-Q^\dagger Q).
\end{align*}
Let $D=\id-Q^\dagger Q$.  Since $U$ is unitary and
$E:=Q-U$ satisfies $\norm{E}\le\delta$,
\[
 D=-U^\dagger E-E^\dagger U-E^\dagger E.
\]
Consequently,
\[
 \norm{D}\le2\delta+\delta^2<\frac94\delta,
 \qquad
 \norm{Q}\le1+\delta<\frac54.
\]
Writing $F=(3Q-QQ^\dagger Q)/2=Q+QD/2$ for the clean output block,
we obtain
\[
 \norm{F-U}
 \le\delta+\frac12\norm{Q}\norm{D}
 <\frac52\delta,
 \qquad
 \norm{(\id-\Pi)\mathcal C\Pi}<\frac94\delta.
\]
The clean and non-clean output sectors are orthogonal.  Therefore
\[
 \norm{\mathcal C(\ket{0}_{\rm anc}\otimes\id)
       -\ket{0}_{\rm anc}\otimes U}
 \le
 \sqrt{\left(\frac52\delta\right)^2
       +\left(\frac94\delta\right)^2}
 =\frac{\sqrt{181}}4\delta
 <4\delta
\]
when $\delta>0$; for $\delta=0$ the left-hand side vanishes.  This
proves \cref{eq:purification}.
\end{proof}

\end{document}